\theoremstyle{plain}
\newtheorem{proposition}{Proposition}
\newtheorem{theorem}[proposition]{Theorem}
\newtheorem{lemma}[proposition]{Lemma}
\newtheorem{corollary}[proposition]{Corollary}
\theoremstyle{definition}
\newtheorem{definition}[proposition]{Definition}
\theoremstyle{remark}
\newtheorem*{remark}{Remark}
\begin{document}

\preprint{APS/123-QED}

\title{Convergence of combinatorial gravity}

\author{Christy Kelly}
\email{ckk1@hw.ac.uk}
\author{Fabio Biancalana}%
 \email{F.Biancalana@hw.ac.uk}
\affiliation{School of Engineering and Physical Sciences, Heriot-Watt University}

\author{Carlo Trugenberger}
 \email{ca.trugenberger@bluewin.ch}
\affiliation{SwissScientific, Geneva}%

\date{\today}

\begin{abstract}
We present a new regularisation of Euclidean Einstein gravity in terms of (sequences of) graphs. In particular, we define a discrete Einstein-Hilbert action that converges to its manifold counterpart on sufficiently dense random geometric graphs (more generally on any sequence of graphs that converges suitably to the manifold in the sense of Gromov-Hausdorff). Our construction relies crucially on the Ollivier curvature of optimal transport theory. Our methods also allow us to define an analogous discrete action for Klein-Gordon fields. These results are part of the ongoing programme combinatorial approach to quantum gravity where we seek to generate graphs that approximate manifolds as metric-measure structures. 
\end{abstract}

\maketitle


\section{\label{sec:Intro}Introduction}

While Gauss discovered intrinsic geometry with his \textit{theorema egregium} in 1827, and Riemann intuited manifolds in the 1850s, it was not until the 1930s that a more or less modern mathematical definition of a differentiable manifold was made \cite{Gauss_Curvature,Riemann_Hypotheses,VeblenWhitehead_FoundationsDiffGeo}; also c.f. \cite{Weyl_Riemann,Whitney_Imbedding} for important contributions to this line of development and, e.g. \cite{Berger_Panoramic,Spivak_Comprehensive} for more recent reviews. The intervening period saw the development of general relativity by Einstein (and others) \cite{Einstein_Meaning} which was a fantastic corroboration of Gauss' original intuition---exemplified by his famous measurement of the large Brocken-Hohehagen-Inselberg triangle---that the geometry of space was a matter of empirical determination.\footnote{Historical honesty demands that we note that this was probably never intended as a test of spacetime geometry, the Euclidean structure of which, it seems, Gauss never seriously doubted. He of course recognised the logical possibility of non-Euclidean geometries and remarked that the measurement could be regarded as a corroboration of the Euclidean nature of space, but it seems the measurement was required for more mundane reasons relating to the geodetic survey of Hanover Gauss had been commissioned to carry out. See \cite{Breitenberger_Gauss} for more details.} Indeed the general programme of relativity theory surely represents one of the highpoints of the intersection between physics and geometry in the modern period.
	
Recently there has been active mathematical research in an area which might be loosely---and somewhat paradoxically---called \textit{discrete differential geometry} examining discrete analogues of smooth notions, driven by applications in computer science---especially computer graphics and mesh processing \cite{BobenkoEtAl_DDG,CraneWardetzky_GlimpseDDG,Crane_IntroDDG}, but also more loosely as a natural extension of methods of discrete topology (discrete Morse theory and combinatorial algebraic topology \cite{Scoville_DiscreteMT,Kozlov_CombAT}) to e.g. topological data analysis \cite{Carlsson_TopologyData}---network geometry \cite{Bianconi_Challenges,BogunaEtAl_NetworkGeometry,NajmanRomon_DiscreteCurvature} and quantum gravity. Indeed, as a quite general principle, it is desirable to find coarse formalisms for gravity since quantum fluctuations of spacetime are expected to ruin smooth structure. In approaches where spacetime is fundamentally discrete, this coarseness obviously must be promoted to full discreteness \cite{Oriti_AppQG}; even without fundamental (physical) discreteness, however, it may nonetheless be desirable to formulate a discrete---and not simply coarse---approximation of gravity as a nonperturbative regularisation of the smooth theory in the context of the asymptotic safety scenario \cite{Eichhorn_AsymptoticSafety}. Indeed, in a gravitational context, the use of discrete methods dates back to at least Regge's classic paper \cite{Regge_GRwithoutCoords} where he introduced the eponymous calculus that allowed for the calculation of curvature in terms of deficit angles on manifold triangulations. Regge's somewhat heuristic account has been rigorously confirmed by Cheeger, M\"{u}ller and Schrader in their demonstration of the convergence of curvature on suitable triangulations in a manifold \cite{CheegerMullerSchrader_CurvatyrePFSpace}. This has led to the development of simplicial formalisms for quantum gravity \cite{Hamber_QG}, the foremost of which is the dynamical triangulations approach \cite{AGJL,Loll_CDTReview,ADJ}.
	
	The Euclidean dynamical triangulations approach \cite{ADJ}---often in the form of a matrix model \cite{DiFrancescoGinspargZinnJustin}---was assiduously pursued in the 1980s and 1990s in two-dimensions, after it became clear that the scaling limit of this model was quantum Liouville theory \cite{KPZ_Fractal2dQG, David_CFT, DistlerKawai_CFT2DQG}; following an observation of Polyakov, this made it simultaneosuly a regularisation of 2D-gravity coupled to conformal matter and of noncritical string theory \cite{Polyakov_QuantumGeometryBosonicStrings}. Taking a gravitational interpretation, it was realised in the process of this work that the emergent structure of $2D$-quantum Euclidean spacetime was that of a \textit{Brownian sphere}, a topological $2$-sphere with Hausdorff dimension $4$ and spectral dimension $2$ \cite{AmbjornWatabiki_ScalingQG,AmbjornEtAl_SD2dQG}. These findings have recently been placed on a firm mathematical footing: there is a large body of literature in this direction, but most relevant results are referred to in \cite{Miller-LQGReview} which focuses on the key proof of the equivalence between quantum Liouville gravity and the Brownian sphere; for rigorous spectral dimension results see \cite{Lee-ConformalGrowthRates,GwynneMiller-RandomWalkRandomPlanarMaps}. In other dimensions, however, the Euclidean dynamical triangulations approach alternates between a highly crumpled phase of infinite Hausdorff dimension and a \textit{branched polymer phase} of topological dimension $1$, Hausdorff dimension $2$ and spectral dimension $4/3$ \cite{ADJ,GurauRyan_MelonsBranchedPolymers,Aldous_CRTI,Aldous_CRTII,Aldous_CRTIII,ADJ_SummingGenera,JonssonWheater_SDBP}. Branched polymers are generally regarded as a pathological model of quantum spacetime and for this reason the Euclidean dynamical triangulations formalism has typically been seen as inadequate for a general treatment of quantum gravity. This perspective was only compounded by the realisation that the phase transition in Euclidean dynamical triangulations is first-order \cite{BialasEtAl_DTPhaseTransition,Bakker_DTPhaseTransition,RindlisbacherForcrand_FirstOrder}. The \textit{causal} dynamical triangulations \cite{AGJL,Loll_CDTReview} programme appears to do much better in this regard. In this approach one typically fixes not only the topology of spacetime but also a preferred foliation. These models typically have a rich phase structure and have much improved behaviour with regards to the structure of the scaling limit \cite{AJL_3dCDT,AmbjornEtAl_NewPhase,AmbjornEtAl_2DCDTHL,AmbjornEtAl_PhaseStructure,DurhuusJonssonWheater_SDCDT}. 
	
	An alternative approach to quantum gravity that also makes much of the causal structure of spacetime is \textit{causal set theory}; see \cite{Surya_CSTReview} for a recent review. The basic insight is that geometric structures on spacetime may be encoded in causal relations and related topologies, an insight that in modern form dates back to at least Zeeman in the 1960s, who showed that the Lorentz group (augmented by dilatations) was the group of causal automorphisms on Minkowski space and found a suitable topology to encode this information \cite{Zeeman_CausalityLorentzGroup,Zeeman_TopologyMinkowski}. A decade later, these results were vastly generalised by Hawking, King and McCarthy \cite{HawkingKingMcCarthy} and Malament \cite{Malament_CST} who showed that causal structure on a Lorentzian manifold encodes both the differential and conformal structure. This work motivates the causal set \textit{Hauptvermutung} which purports that any Lorentzian manifold should be characterised by an essentially unique causal set, i.e. a locally finite poset describing the causal structure of spacetime. A significant development in this regard is the demonstration that the \textit{Benincasa-Dowker action}---essentially defined by counting short causal chains in a causal set---converges to the Einstein-Hilbert action on causal sets generated by Poisson processes in Lorentzian manifolds (`sprinklings') \cite{BenincasaDowker_ScalCurvCausSet}.
	
	Returning to a Euclidean setting, a major recent development in Riemannian geometry using ideas from optimal transport theory \cite{Villani_OptimalTransport,Villani_Topics} has been a synthetic characterisation of Ricci curvature using the metric-measure structure of the manifold \cite{LottVillani,Sturm_GeomI,Sturm_GeomII,Ohta_MeasureContraction, Ollivier_RCMCMS,Ollivier_RCMS, SturmVonRenesse_TransportInequalities,Villani_OptimalTransport,NajmanRomon_DiscreteCurvature}: a Riemannian manifold $\mathcal{M}$ is a metric space when equipped with the geodesic distance $\rho$; it is made into a metric-measure space when equipped with a random walk, i.e. a probability measure at each point. This insight has allowed for the definition of coarse notions of curvature valid in generic metric-measure spaces. One notion due to Sturm \cite{Sturm_GeomI} and independently Lott and Villani \cite{LottVillani} is related to the so-called \textit{$L_2$-transport cost}; this is perhaps the canonical mathematical example of a synthetic curvature, but is ill-equipped for use in discrete spaces. On the other hand an alternative notion due to Ollivier \cite{Ollivier_RCMCMS,Ollivier_RCMS,Ollivier_Survey} is well-defined for discrete metric-measure spaces and indeed has widely gained traction for a variety of applications in the network theory community: c.f. e.g. \cite{FarooqEtAl_Brain,JostLiu_RicciCurv,Ni_RicIntTop,Sandhu_Cancer,Sandhu_Market,SiaEtAl_CommunityDetection,TannenbaumEtAl_Cancer,WangEtAl_DiffGeom,WangEtAl_Interference,WangEtAl_QUBO,WangEtAl_WirelessNetwork,WhiddenMatsen_SubtreeGraph}. The fundamental intuition captured by the Ollivier curvature is that in a positively curved space, the average distance between two nearby balls will be closer than their centres. This entire line of development is a natural extension of the programme of metric geometry \cite{Gromov_MetricStructures,BuragoBuragoIvanov_MetricGeometry}, a coarse generalisation of many ideas in Riemannian geometry using the fact that many results of the smooth theory rely only on its structure as a metric space or length space. We give a slightly more formal presentation of optimal transport theory and the Ollivier curvature in section \ref{section: MathematicalPreliminaries} below.
	
	In the context of Euclidean gravity the potential ramifications are clear: the Ollivier curvature may be used to specify a new regularisation of Euclidean Einstein gravity in terms of discrete structures such as graphs, regarded as discrete metric spaces; concretely, the aim is to specify a discrete action defined on graphs in terms of the Ollivier curvature which will approximate the manifold Einstein-Hilbert action whenever the graphs themselves are a good approximation for the manifold. (This latter constraint of course places limitations on the types of space that can be approximated: we shall only be concerned with compact---and \textit{ipso facto} finite diameter---spaces throughout.) We have begun to pursue this line of thought in \cite{Trugenberger_CombQG,KellyEtAl,KellyEtAl_Circle}, calling the associated Ollivier curvature based model combinatorial quantum gravity due to a formal analogy with Einstein gravity. (Other than our own work, \cite{KlitgaardLoll_HowRound,KlitgaardLoll_ImplementingQuantumRicciCurvature,KlitgaardLoll_QuantumRicciCurvature}, \cite{BrunekreefLoll-CurvatureProfile} and \cite{Gorard} are all examples of similar works using Ollivier curvature in the context of quantum gravity). The basic result of this paper---presented in section \ref{section: DiscreteAction}---is that, for a slightly different action to the one adopted in \cite{KellyEtAl,KellyEtAl_Circle,Trugenberger_CombQG}, the formal analogy can be made precise at the classical level in terms of a convergence result for a particular discrete action. This is of course necessary if we wish to pursue a putative dynamical quantum model. More concretely, in theorem \ref{theorem: MainTheorem}, we show that there is a discrete action defined on graphs that converges to the Euclidean Einstein-Hilbert action on any (compact) manifold for a sequence of graphs converging to the manifold in a sense to be discussed at length below.
	
	We should stress that this result does not give a full characterisation of combinatorial \textit{quantum} gravity insofar as it says very little about the overall partition function of quantum gravity. It merely shows that we can control the error in the phase associated to a manifold configuration, which follows trivially from the fact that we can control the error in the action. More precisely, if we imagine our partition function is a sum over (the limit points of) sequences of random graphs, those limit points which correspond to a manifold $\mathcal{M}$ will contribute $\exp(-\beta \mathcal{A}_{EH}(\mathcal{M}))$ to the Euclidean partition function, where $\mathcal{A}_{EH}$ is the Einstein-Hilbert action. This is surely a necessary condition for any proposed regularisation of gravity. We suggest it is also sufficient: away from the classical limit, it is precisely the wager of approaches of this kind that the `tail' dependence of the partition function on nonmanifold configurations might ameliorate some of the problems with the naive theory defined via a sum over smooth classical metrics. Indeed, the results of $2D$ Euclidean dynamical triangulations suggest that even the semiclassical limit displays some rather nonregular behaviour with regards to smooth structure. Still, it might be said that a regularisation of gravity needs also to respect minima (extrema) of the action, i.e. the `classical' (action minimising) configurations are dominated by `manifold-like' configurations. While we tend to agree with this assessment, we regard this as an essentially dynamical problem of the full quantum theory since it can be assessed by examination of the structure of configurations actually arising in some dynamical model. At any rate, apart from a particular case already considered \cite{KellyEtAl_Circle}, we are not currently capable of settling these questions analytically. Instead we would hope to extend some already promising, if preliminary, numerical results \cite{KellyEtAl,KellyEtAl_Circle} to the convergent model specified here. Note that these numerical results have been extended and clarified in \cite{GorskyValba_Thermofield} and \cite{Akara-pipattanaChotibutEvnin-Emergence}.
	
	There are some important caveats, however, to the simple application of Ollivier curvature as a direct graph-theoretic regularisation of the Ricci curvature: the metric-measure structure of a graph is certainly not equal to the metric-measure structure of a Riemannian manifold in general, and as such the Ollivier curvature evaluated in a graph will typically be quite different from the Ollivier curvature evaluated in a manifold. The upshot, of course, is that combinatorial quantum gravity as defined in \cite{Trugenberger_CombQG,KellyEtAl,KellyEtAl_Circle} does not converge to classical gravity in general; note that we believe these models may nonetheless have something relevant to say about Euclidean quantum gravity because we obtain useful results on the Ricci flat sector, where the action of combinatorial quantum gravity does agree with the classical action. Since the Einstein-Hilbert action is defined as the integral over spacetime of the scalar curvature $R$, it is clear that even if the difference between graph and manifold metric-measure structures can be overcome, specifying a convergent model of combinatorial gravity will involve both approximating spacetime integrals (via graph vertex sums) and finding some way of carrying out a trace at each point---since $R={\rm tr}({\rm Ric})$. 
	
	An important development in this direction was the recent demonstration that the Ollivier curvature of a random geometric graph in a Riemannian manifold converges to the Ollivier curvature of the underlying manifold \cite{Hoorn_Convergence,HoornEtAl_OllCurvConv}. The essential idea is that for a sufficiently dense sampling of a Riemannian manifold $\mathcal{M}$, a random geometric graph $G$ will approximate $\mathcal{M}$ as a metric space, while one can choose local measures that simultaneously converge to a random geometric graph in $G$. In this way the associated Ollivier curvature $\kappa_G^\delta(u,v)$ converges to the manifold Ollivier curvature $\kappa_{\mathcal{M}}^\delta$. At one level, the main result of this paper can be seen as a fairly elementary extension of this edge-curvature convergence result to show that we have a discrete Einstein-Hilbert action that converges on random geometric graphs.
	
	While convergence on random geometric graphs is essentially sufficient for a kinematic characterisation of a \textit{given} Riemannian manifold in terms of a graph, a key step in the proof of convergence relies heavily on the fact that random geometric graphs are obtained via Poisson processes in Riemannian manifolds, since it involves an extension of certain probabilistic grid matching results due to Talagrand and others; see \cite{Talagrand_UpperBounds} for a review of these results. In the context of (quantum) gravity, it is desirable for the limiting configuration to be determined \textit{dynamically}, and hence convergence of the Ollivier curvature on random geometric graphs in a \textit{fixed} Riemannian manifold is not sufficient. As such we will try to \textit{avoid} the assumption that our graphs are obtained from Poisson processes, instead preferring to think of them as arising from an as yet unspecified dynamical model of random graphs. Since this model is left unspecified, we are forced in this paper to deal with graphs that are not in fact random. Another aim of this paper is to show that there is a more general \textit{kinematic} context in which we may talk about both Ollivier curvature and Einstein-Hilbert action convergence for a given (compact) Riemannian manifold $\mathcal{M}$.
	
	This more general context is specified in terms of the \textit{Gromov-Hausdorff distance} \cite{Gromov_MetricStructures,BuragoBuragoIvanov_MetricGeometry}. We discuss this notion at more length in section \ref{section: MathematicalPreliminaries}, but it is worth making a few remarks here: the Gromov-Hausdorff distance is a metric $\rho_{GH}$ on the space of isometry classes of \textit{compact} metric spaces, and provides us with a notion of convergence for such spaces. In particular, any compact length space---metric space where the distance between two points is the shortest length of an admissible curve between those points---is obtained as the Gromov-Hausdorff limit of a graph. The basic idea in specifying the Gromov-Hausdorff distance is that for any metric space, there is a natural distance function (the \textit{Hausdorff distance}) on the compact subsets of the space in question; the Gromov Hausdorff distance between two compact metric spaces $X$ and $Y$ is then given by minimising the Hausdorff distance between $X$ and $Y$ over all isometric embeddings of $X$ and $Y$ into some arbitrary ambient space $Z$.
	
	As discussed at length in the appendix to \cite{KellyEtAl_Circle}, Gromov-Hausdorff distances and limits naturally respect gauge invariance because gauge transformations of the manifold structure (i.e. smooth local isometries with smooth inverses) turn out to be global length space isometries: any gauge transformation turns out to be a local \textit{metric} isometry---that is it preserves the distance function on certain open balls of a point---and any local metric isometry with continuous inverse is a global isometry. The Gromov-Hausdorff distance also realises a kind of background independence insofar as it is obtained by minimising the Hausdorff distance \textit{over all possible backgrounds}. At the same time, it will be helpful for the purposes of the results in this paper to treat the graphs as embedded in some background manifold. This can be done if we realise that $\rho_{GH}$ has another interpretation: the Gromov-Hausdorff distance $\rho_{GH}(X,Y)$ may be regarded as the obstruction to the existence of an isometric embedding $X\hookrightarrow Y$. In particular, in a precise sense to be specified below, the Gromov-Hausdorff distance $\rho_{GH}(X,Y)$ is small iff the spaces $X$ and $Y$ are \textit{nearly isometric}.
	
	Let us be very clear: strictly speaking our configurations are \textit{sequences} of (unlabelled weighted) graphs, but in fact the convergence result which holds for sequences follows from an approximation result for individual graphs. The only restriction we put on these graphs is that they are finite, simple, connected and Gromov-Hausdorff close to a given manifold. The discrete Einstein-Hilbert action on graphs may be defined in terms of purely combinatorial structures viz. the graph distance and graph Ollivier curvature, and so is specified for arbitrary graphs. The substance of our approximation result is then that \textit{if} a graph $G$ admits a nearly isometric embedding into a (compact) Riemannian manifold $\mathcal{M}$, i.e. if $G$ is Gromov-Hausdorff close to $\mathcal{M}$, then the discrete Einstein-Hilbert action on $G$ is close to its continuum counterpart. Note that it is crucial to show that this holds for \textit{arbitrary} nearly isometric embeddings in order to show that the approximation result is in some sense `generally covariant'; this is of course easily achieved if we leave the explicit embedding indeterminate throughout and only rely on the \textit{existence} of such an embedding.
	
	This result remains kinematic: we know very little about the kinds of configuration obtained from dynamical models based on the action introduced here. Such models may well result in pathologies. Evidently, our hope is that this is not in fact the case; we feel that our other results suggest \cite{Trugenberger_CombQG,Kelly_Exact,KellyEtAl_Circle} have given clear indications that there are is in fact something novel about Ollivier curvature driven random graph models, but clearly the question has yet to be settled. Nonetheless, we suggest that the broader kinematic interpretation proposed here---viz. considering sequences of graphs that converge to a given manifold in the sense of Gromov-Hausdorff---is preferable to a treatment in terms of random geometric graphs insofar as we do not see how the latter can be realised in terms of any dynamical model whereas the former is in principle compatible with all dynamical models. Even further, a dynamical model of random unweighted graphs based on a very similar Ollivier curvature based discrete Einstein-Hilbert action has been shown to generate configurations that converge in the sense of Gromov-Hausdorff to the circle \cite{KellyEtAl_Circle} after appropriate rescaling. If our proposed generalisation of the random geometric graph kinematics is valid, then we may in fact reinterpret the van der Hoorn et al. convergence result as a demonstration of the expected Gromov-Hausdorff proximity of random geometric graphs to their underlying manifolds for suitable parameter choices. 
	
	Finally let us say some words on the novelty and significance of our results: if we restrict ourselves to the context of random geometric graphs, the main result (theorem \ref{theorem: MainTheorem}) follows from some rather elementary considerations, given the convergence result of van der Hoorn et al \cite{Hoorn_Convergence,HoornEtAl_OllCurvConv}. Explicitly we show how to approximate traces and integrals on $\mathcal{M}$ via sums on the graph $G$ when $G$ is Gromov-Hausdorff proximal to $\mathcal{M}$; these results are fairly elementary and as such they are almost certainly not new, though we are not aware of their statement or application in the existing literature. The delicate nature of the proof is their combination: there are several different scales involved in the problem and each approximation result places different constraints on these scales. It is not at all clear that these constraints are consistent, and in fact the most naive approach to taking the trace is not consistent with the other constraints on the available scales. 
	
	On the other hand, if we are to consider the more general kinematic context we need to extend the van der Hoorn et al. convergence result to a geometric one. From a mathematical perspective the idea is to introduce a suitable topology on the space of (compact) metric-measure spaces in which the Ollivier curvature is `stable', i.e. curvature (bounds) respect limits in the relevant topologies. In fact, two relevant topologies already exist in the literature: in the first topology---in which Sturm-Lott-Villani curvature bounds are stable---one metric-measure space converges to another if it converges as a metric space in the sense of Gromov-Hausdorff and all the measures converge weakly to suitable corresponding measures as identified by (measurable) near isometries. In this topology the measures may converge arbitrarily slowly which unfortunately makes this topology unsuitable for present purposes: the requirement that the Ollivier curvature approximates the manifold Ricci curvature puts constraints on the rate at which the discrete Ollivier curvature is to converge if we wish to approximate the manifold Ricci curvature. Ollivier also introduced a topology on metric-measure spaces for which his curvature is stable: c.f. proposition 47 of \cite{Ollivier_RCMCMS}. Unfortunately the topology introduced here is so fine as to preclude even the van der Hoorn et al. derivation of convergence in random geometric graphs. Thus our `extension' of the van der Hoorn et al. result amounts to a demonstration that convergence in the Sturm-Lott-Villani topology can be made sufficiently rapid to ensure that the Ollivier curvature continues to approximate the Ricci curvature. This extension requires the application of known Euclidean semidiscrete optimal transport results \cite{HartmannSchumacher_Semidiscrete} and turns out to be more `costly' than the van der Hoorn et al. result in the sense that it imposes stricter constraints on the available scales of the problem. 
	
	Given that we expect the graph Ollivier curvature to converge to its manifold counterpart in suitable circumstances, out main result---that there is a convergent discrete Einstein-Hilbert action defined on networks---is then also no surprise. Indeed, had an analogous result not existed the Ollivier curvature would have been, at some level, the `wrong' notion. We have, of course, believed that the Ollivier curvature represents an interesting tool in this context for some time; nonetheless, as suggested in \cite{Akara-pipattanaChotibutEvnin-Emergence,KellyEtAl_Circle}, the precise relation between our previous work and quantum gravity proper was not entirely clear. We hope that this paper helps to clarify this issue.
\section{Mathematical Preliminaries}\label{section: MathematicalPreliminaries}
	The main purpose of this section of this section is to introduce the Ollivier curvature and the Gromov-Hausdorff distance since these ideas will play a basic role in the subsequent. However before beginning these more in depth discussions we shall make some points on notation and briefly review some simple ideas in metric and Riemannian geometry.
	
	\subsection{Some Points from Metric and Riemannian Geometry}
	We shall throughout be concerned with metric spaces. For a set $X$, a metric on $X$ will be denoted $\rho_X$ unless we specify otherwise. The \textit{open ball} of radius $r>0$ centred at a point $x\in X$ is denoted
	\begin{align}
	    B_r^X(x)=\set{y\in X:\rho_X(x,y)<r}.
	\end{align}
	The boundary of this ball is then given
	\begin{align}
	    \partial B_r^X(x)=\set{y\in X:\rho_X(x,y)=r}.
	\end{align}
	Later on we shall be concerned with the \textit{shell} or \textit{annulus} of \textit{radius} $R$ and \textit{width} $r$ centred at $x\in X$, which is defined as the set
	\begin{align}
	    S_{R,r}^X(x)=B_{R+r}^X(x)\backslash B_{R-r}^X(x).
	\end{align}
	
	We shall use a special notation for Euclidean space $\mathbb{R}^D$: the metric is denoted $\rho_D$, an open ball of radius $R$ and centre $x\in \mathbb{R}^D$ by $B_r^D(x)$, and the annulus of radius $R$, width $r$ and centre $x\in X$ by $S_{R,r}^D(x)$. We also let $B_r^D:=B_r^D(0)$ and $S_{R,r}^D:=S_{R,r}^D(0)$. The volume form on $\mathbb{R}^D$ will be denoted ${\rm vol}_D$. The volume  of the unit ball in $\mathbb{R}^D$ is denoted by $\omega_D$. The unit $(D-1)$-sphere is the subset of unit vectors in $\mathbb{R}^D$ and will be denoted $\mathbb{S}^{D-1}$. For any two points $x_1,\:x_2\in \mathbb{R}^D$, we let $\theta_{D}(x_1,x_2)$ denote the angle $\angle x_1Ox_2$ with $O$ the origin; by the cosine rule we have
	\begin{align}
	    \theta_D(x_1,x_2)=\arccos \left(\frac{||x_1||^2+||x_2||^2-||x_1-x_2||^2}{2||x_1||\cdot ||x_2||}\right).
	\end{align}
	
	It will be convenient to introduce the function $\mathfrak{d}~:~\mathbb{R}^D\backslash\set{0}~\rightarrow ~\mathbb{R}$ which gives the scale factor of the spherical volume element of the normalisation of its argument in Cartesian coordinates: let $x~=~(R,\varphi_1,...,\varphi_{D-1})\in \mathbb{R}^D\backslash \set{0}$ in spherical coordinates; then
	\begin{align}\label{equation: SphericalVolumeElement}
	        \mathfrak{d}(x)=\sin^{D-2}(\varphi_1)\sin^{D-3}(\varphi_2)\cdots \sin(\varphi_{D-2}).
	\end{align}
	We extend this function to all of $\mathbb{R}^D$ by choosing $\mathfrak{d}(0)=0$.
	
	$G$ will always denote a graph, by which we mean a finite, simple, connected, weighted graph. Also $N=|G|$ for the rest of the text. The weights will be given by a weight function $w_G:E(G)\rightarrow \mathbb{R}$, where $E(G)$ is the set of edges of $G$. $G$ may be regarded as a metric space, equipped with its geodesic distance: the length of a path $\gamma_{u,v}=v_0\cdots v_n$ between the vertices $u=v_0$ and $v=v_n$ is defined as
	\begin{align}
	    L(\gamma_{u,v})=\sum_{k=0}^{n-1}|w(v_kv_{k+1})|.
	\end{align}
	Then
	\begin{align}
	    \rho_G(u,v)=\inf_{\gamma_{u,v}}L(\gamma_{u,v})
	\end{align}
	where the infimum is taken over all paths $\gamma_{u,v}$ between $u$ and $v$. With this metric $G$ has the discrete topology, i.e. every subset of $G$ is open and hence Borel measurable. We shall assume $G$ is equipped with a natural measure of the volume of a subset given by the counting measure: $|E|$ is the number of points in $E\subseteq G$.
	
	Similarly, $\mathcal{M}$ will always denote a $D$-dimensional Riemannian manifold which is implicitly equipped with some Riemannian metric. This means in particular for each $p\in \mathcal{M}$ we have an inner product $\braket{\cdot,\cdot}_p$ on the tangent space $T_p\mathcal{M}$; we use the notation $||\cdot||_p$ for the associated norm. The \textit{angle} between two vectors $V_1,:V_2\in T_p\mathcal{M}$ is defined
	\begin{align}
	    \theta_p^{\mathcal{M}}(V_1,V_2)=\arccos\left(\frac{\braket{V_1,V_2}_p}{||V_1||_p\cdot ||V_2||_p}\right).
	\end{align}
	We assume that these local inner products vary smoothly over $\mathcal{M}$ and as such extend immediately to an inner product $\braket{\cdot,\cdot}$ on smooth vector fields of $\mathcal{M}$. Associated to the inner product is a unique metric compatible affine connection $\nabla$ known as the \textit{Levi-Civita connection}. This defines a notion of differentiation on vector fields on $\mathcal{M}$. Associated to this connection is also a natural volume form ${\rm vol}_{\mathcal{M}}$, and Riemmann, Ricci and scalar curvatures tensors, denoted ${\rm Riem}$, ${\rm Ric}$ and $R$ respectively. A subscript $p\in \mathcal{M}$ on any of these tensors refers to the restriction of the tensor to the point $p\in \mathcal{M}$. The (Riemannian) \textit{Einstein-Hilbert action} on $\mathcal{M}$ is defined
	\begin{align}\label{equation: EHAction}
	   \mathcal{A}_{EH}(\mathcal{M})&=\int_{\mathcal{M}}{\rm d vol}_{\mathcal{M}}(x)R(x).
    \end{align}
    This action, of course, governs gravitational dynamics in general relativity and is certainly well-defined and finite if $\mathcal{M}$ is compact.
	
	A smooth curve in $\mathcal{M}$ is a smooth mapping $\gamma:(a,b)\rightarrow \mathcal{M}$, where we assume $a<0<b$; we let the tangent vector to the curve at $t\in (a,b)$ be denoted by $\dot{\gamma}_t$. We say that a smooth curve $\gamma$ with domain $(a,b)$ \textit{connects} $p\in \mathcal{M}$ to $q\in \mathcal{M}$ iff we have unique $c,:d\in (a,b)$ such that $p=\gamma(c)$, $q=\gamma(d)$ and $c\leq d$. The \textit{length} of a curve $\gamma$ between $p$ and $q$ is then given
	\begin{align}
	    L_{p,q}(\gamma)=\int_{t=c}^{t=d}{\rm d t}\braket{\dot{\gamma}_t,\dot{\gamma}_t}_{\gamma(t)}.
	\end{align}
	$\gamma$ is a \textit{geodesic} iff $\nabla_{\dot{\gamma}_t}\dot{\gamma}_t=0$ for all $t\in{\rm dom}(\gamma)$; a geodesic is \textit{unit-speed} off $||\dot{\gamma}_t||_{\gamma(t)}=1$ for all $t\in {\rm dom}(\gamma)$. $\mathcal{M}$ is a metric space when equipped with the geodesic distance:
	\begin{align}
	    \rho_{\mathcal{M}}(p,q)=\inf L_{p,q}(\gamma)
	\end{align}
	where the infimum is taken over all curves that connect $p$ and $q$. It turns out that for any point $p\in \mathcal{M}$ and any $V\in T_p\mathcal{M}$, there is a maximal domain $(a,b)\subseteq \mathbb{R}$ such that there is a unique geodesic $\gamma$ satisfying ${\rm dom}(\gamma)=(a,b)$, $\gamma(0)=p$ and $\dot{\gamma}_0=V$. By making $||\dot{\gamma}_0||_p=V$ smaller we may extend the domain $(a,b)$ such that for $V$ sufficiently small $1\in (a,b)$. Since the tangent vector $V$ identifies the geodesic uniquely, we can thus define a mapping $\exp_p$ known as the \textit{exponential map} at $p\in \mathcal{M}$ via $\exp_p(V)=\gamma(1)$ where $\gamma$ is the unique geodesic such that $\gamma(0)=p$ and $\dot{\gamma}_0=V$. The domain of $\exp_p$ is some subset of $T_p\mathcal{M}$ that contains the origin. A metric ball in $T_p\mathcal{M}$ ($\mathcal{M}$) centred at the origin ($p\in \mathcal{M}$) will be called \textit{geodesic} iff it is a subset of the domain (codomain) of the exponential map at $p$. Note that for sufficiently small balls about the origin in $\mathbb{R}^D$, the exponential map is smooth and a radial isometry: the latter in particular means that $\rho_D(0,x)=\rho_{\mathcal{M}}(p,\exp_p(x))$ for all $x$ sufficiently close to the origin.
	
	Henceforth, we assume that $\mathcal{M}$ is complete, connected and without boundary. From section \ref{subsection: GHDistance} onwards, $\mathcal{M}$ will also be compact unless specified otherwise.  
	
	We shall use the big O notation---a little loosely---throughout to specify errors: for functions $f,\:g:U\rightarrow \mathbb{R}$, where $U\subseteq \mathbb{R}$ is some suitable subset, we write
	\begin{align}
	    f(x)=\mathcal{O}(g(x))
	\end{align}
	iff $|f(x)|<K g(x)$ for some constant $K>0$ in some suitable limit of $x$. We are always concerned with limits of $x$ such that $g(x)\rightarrow 0$: typically when we specify $g$ in terms of the variable $x=N\in \mathbb{N}$, we are interested in the limit $N\rightarrow\infty$ and we have a typical form $g(N)=N^{-a}$ for some $a>0$. Otherwise we shall typically specify $x$ in terms of variables such as $\delta,\:\varepsilon,\:\ell$ etc. and we are interested in limits as these variables go to $0$.
	\subsection{Ollivier Curvature}
	The Ollivier curvature utilises ideas from optimal transport theory \cite{Villani_OptimalTransport} for its construction so we begin with a review that follows closely the cited reference. We will also need some elementary ideas from measure theory and Riemannian geometry, e.g. c.f. \cite{Bogachev_MeasureI,DoCarmo_Riem} for nice reviews. Finally, we will also use ideas from \cite{Ollivier_RCMCMS} in this section without much comment.
	
	Let $(X,\rho_X)$ be a complete separable metric space and let $\mathcal{P}(X)$ denote the set of Borel probability measures on $X$; every measure considered in this text will be of this type, i.e. defined on the Borel $\sigma$-algebra of a complete separable metric space. Recall that for any measurable spaces $(\Omega_1,\Sigma_1)$ and $\Omega_2,\Sigma_2)$, the \textit{pushforward} of the measure $\mu:\Sigma_1\rightarrow \mathbb{R}$ under a measurable map $f:\Omega_1\rightarrow \Omega_2$ is defined as the measure
	\begin{align}
	    f_*\mu(E)=\mu(f^{-1}(E))
	\end{align}
	for all $E\in \Sigma_2$. The basic fact about pushforward measures is the following: let $f:\Omega_1\rightarrow \Omega_2$ be measurable and let $\mu$ be a Borel measure on $\Omega_1$. A mapping $g:\Omega_2\rightarrow \mathbb{C}$ is $f_*\mu$-integrable iff $g\circ f:X\rightarrow\mathbb{C}$ is $\mu$-integrable. Then 
	\begin{align}
	    \int_{\Omega_1}{\rm d}\mu(x)g(f(x))=\int_{\Omega_2}{\rm d} f_*\mu(y)g(y).
	\end{align}
	
	The basic notion in optimal transport theory is that of a \textit{transport plan}: for any $\mu,\:\nu\in \mathcal{P}(X)$, a \textit{transport plan} between $\mu$ and $\nu$ is a probability measure $\xi$ on $X\times X$ such that
	\begin{align}\label{equation: MarginalConstraint}
        (\pi_1)_*\xi=\mu && (\pi_2)_*\xi=\nu
    \end{align}
	where $\pi_1:(x,y)\mapsto x$ and $\pi_2:(x,y)\mapsto y$ are the projections onto the first and second elements respectively. We refer to the equations \ref{equation: MarginalConstraint} as \textit{marginal constraints} and $\mu$ and $\nu$ are said to be \textit{marginals} of the transport plan $\xi$. Roughly speaking, we may suppose we have a distribution $\mu$ of dirt on $X$ which we wish to transform into a distribution $\nu$; then given measurable subsets $E_1,\:E_2\subseteq X$, the idea is that $\xi(E_1\times E_2)$ denotes the amount of earth to be transported from $E_1$ to $E_2$ according to the transport plan $\xi$. The marginal constraints are required to ensure that we do indeed have the correct initial and final distributions and that no dirt is somehow lost in the process.
	
	The space of all transport plans between $\mu$ and $\nu$ is denoted $\Pi(\mu,\nu)$. The \textit{transport cost} of a transport plan $\xi\in \Pi(\mu,\nu)$ is defined 
	\begin{equation}
        \mathcal{T}_X(\xi)= \int_{X\times X}{\rm d} \xi(x,y)\rho_X(x,y),
	\end{equation}
	while the \textit{optimal transport cost} is then given
	\begin{equation}
        \mathcal{T}_X(\mu,\nu)= \inf_{\xi \in \Pi(\mu,\nu)}\mathcal{T}_X(\xi).
	\end{equation}
	This is also called the \textit{Wasserstein distance} between $\mu$ and $\nu$. A transport plan $\xi\in \Pi(\mu,\nu)$ is said to be \textit{optimal} iff $\mathcal{T}_X(\xi)=\mathcal{T}_X(\mu,\nu)$.
	
	It can be shown that optimal transport plans always exist. Moreover, the mapping $\mathcal{T}:(\mu,\nu)\mapsto \mathcal{T}(\mu,\nu)$ is a metric on $\mathcal{P}(X)$, i.e. $\mathcal{T}$ is positive definite, symmetric and subadditive. There is a slight caveat in that $\mathcal{T}(\mu,\nu)$ is not necessarily finite, but this problem can be avoided if we restrict consideration to spaces with bounded diameter or only consider measures with finite first moment. For the purposes of this paper, both restrictions can in fact be assumed to hold. With this in mind, we note that the Wasserstein distance metrises the topology of weak convergence on the space of all probability measures with finite first moment.
	
	A Riemannian $D$-manifold $\mathcal{M}$ can obviously be regarded as a metric space where the distance $\rho_{\mathcal{M}}(p,q)$ between two points $p,\:q\in \mathcal{M}$ is given by the length of the shortest geodesic between those points. Moreover, since $\mathcal{M}$ comes with a natural volume form ${\rm vol}$, we may define the uniform probability measures
	\begin{align}\label{equation: MeasureManifold}
	    \mu_p^\delta(E)=\frac{{\rm vol}(B^{\mathcal{M}}_\delta(p)\cap E)}{{\rm vol}(B^{\mathcal{M}}_\delta(p))},
	\end{align}
	for any pair $(p,\delta)\in \mathcal{M}\times (0,\infty)$, where $E$ is any (Borel) measurable set of $\mathcal{M}$. The idea is to specify the uniform measure with support given by (the closure of) the (open) ball $B^{\mathcal{M}}_\delta(p)$. Equipping $\mathcal{M}$ with such measures for each $p\in \mathcal{M}$ gives $\mathcal{M}$ the structure of a \textit{metric measure space}, i.e. a metric space equipped with a (Markovian) random walk.
	
	The basic connection between optimal transport and Ricci curvature arises in the following manner: let $p,\:q\in \mathcal{M}$ such that their geodesic distance $\ell= \rho_{\mathcal{M}}(p,q)$ is sufficiently small. For $\delta>0$ sufficiently small (i.e. for $\delta$ less than the injectivity radius), we may identify every point of $B_\delta^{\mathcal{M}}(p)$ with an element of $B_\delta^D\subseteq T_p\mathcal{M}\cong \mathbb{R}^D$ and similarly for $q$. Parallelly transporting $B_\delta^{\mathcal{M}}(p)$ along a minimal geodesic connecting $p$ and $q$ thus gives us a bijection $B_\delta^{\mathcal{M}}(p)\cong B_\delta^{\mathcal{M}}(q)$ which defines a so-called \textit{deterministic} transport plan $\xi_0\in \Pi(\mu_p^\delta,\mu_q^\delta)$. It turns out that this transport plan is in fact optimal---at least up to negligible errors---i.e. $\mathcal{T}_{\mathcal{M}}(\mu_p^\delta,\mu_q^\delta)=\mathcal{T}_X(\xi_0)$. A fairly basic application of the Jacobi equation, however, gives us the asympotic expression:
	\begin{align}
	\mathcal{T}_{\mathcal{M}}(\mu_p^\delta,\mu_q^\delta)=\ell\left(1+\frac{\delta^2{\rm Ric}_p(V,V)}{2(D+2)}+\mathcal{O}(\delta^2(\delta+\ell))\right),
	\end{align}
	where $V$ is the velocity of the unique unit-speed geodesic connecting $p$ and $q$ at $p$. Thus, if we define the manifold Ollivier curvature by
	\begin{align}\label{equation: ManifoldOllivierCurvature}
	    \kappa_{\mathcal{M}}^\delta(p,q)= 1-\frac{\mathcal{T}_{\mathcal{M}}(\mu_p^\delta,\mu_q^\delta)}{\rho_{\mathcal{M}}(p,q)},
	\end{align}
	we see that
	\begin{align}\label{equation: OllivierError}
	    \kappa_{\mathcal{M}}^\delta(p,q)=\frac{\delta^2 {\rm Ric}_p(V,V)}{2(D+2)}+\mathcal{O}(\delta^2(\delta+\ell)).
	\end{align}
	In this sense, up to a scale-dependent factor and small corrections, the manifold Ollivier curvature gives the local Ricci curvature.
	
	It should be clear that the manifold Ollivier curvature (Eq.~\ref{equation: ManifoldOllivierCurvature}) is determined by the metric-measure structure of $\mathcal{M}$ so we have the following generalisation: let $(X,\rho_X)$ be a complete separable metric space equipped with a random walk $\set{\mu_x}_{x\in X}$. The Ollivier curvature of $X$ is then defined as the function
	\begin{align}\label{equation: OllivierCurvature}
	    \kappa_X(x,y)=1-\frac{\mathcal{T}_X(\mu_x,\mu_y)}{\rho_X(x,y)},
	\end{align}
	on all sufficiently nearby but distinct points $x,\:y\in X$. Because the domain of $\kappa_X$ is not (necessarily) all of $X\times X$, it is convenient to regard $\kappa_X$ as a \textit{partial function} on $X\times X$.
	
	We finish this section by noting that the above definition applies fairly trivially to graphs. Each graph $G$ is regarded as a metric space as described above, so we are only concerned with the specification of the random walk on $G$. There is some freedom in the choice of graph measures; by analogy with the manifold measures Eq.~\ref{equation: MeasureManifold} we consider the uniform measures:
	\begin{align}\label{equation: GraphMeasure}
	    m^\delta_u(E)=\frac{|B^G_\delta(u)\cap E|}{|B^G_\delta(u)|},
	\end{align}
	for any $u\in G$ and all measurable $E\subseteq G$, where $\delta>0$ takes on any (small) strictly positive real-value. Note that since $G$ has the discrete topology, the Borel $\sigma$-algebra of $G$ is simply the set of all subsets of $G$, i.e. every subset of $G$ is measurable. In specifying the measures via Eq.~\ref{equation: GraphMeasure}, the hope is that the the uniform graph measures $m^\delta_u$ will approximate the uniform manifold measures $\mu^\delta_u$, allowing one to approximate the manifold Ollivier curvature by the graph Ollivier curvature. We show below that this is indeed the case. 
	\subsection{Gromov-Hausdorff Distance}\label{subsection: GHDistance}
	As mentioned in the introduction, the Gromov-Hausdorff distance between two compact metric spaces $X$ and $Y$ is defined by taking the infimum of the Hausdorff distance between the images of $X$ and $Y$ under isometric embeddings into some ambient metric space $Z$. Thus to introduce the Gromov-Hausdorff distance we begin by introducing the Hausdorff distance between subsets of a metric space. For more general references on the material in this section see \cite{BuragoBuragoIvanov_MetricGeometry,Gromov_MetricStructures}.
	
	Let $(X,\rho_X)$ be a metric space. For any point $x\in X$ and any set $A\subseteq X$, we have
	\begin{align}
	    \rho_X(x,A)= \inf_{y\in A}\rho_X(x,y).
	\end{align}
	The Hausdorff distance between two subsets $A,\:B\subseteq X$ is then defined
	\begin{align}
	    \rho_H^X(A,B)= \max\set{\sup_{x\in A}\rho_X(x,B),\sup_{y\in B}\rho_X(y,A)}.
	\end{align}
	
	There is an alternative formulation of the Hausdorff distance that permits us to introduce some useful notation. Let $B_\delta^X(x)$ be the (open) $\delta$-ball centred at $x$ for any $x\in X$. We define the \textit{$\varepsilon$-thickening} of any set $A\subseteq X$ as the set
	\begin{align}
	    A_\varepsilon= \bigcup_{x\in A}B_\varepsilon^X(x)
	\end{align}
	for any $\varepsilon>0$. With this notation we note that
	\begin{align}
	    \rho_H^X(A,B)=\inf\set{\varepsilon>0:B\subseteq A_\varepsilon{\rm  and }A\subseteq B_\varepsilon}.
	\end{align}
	
	The positivity, semidefiniteness and symmetry of $\rho_H$ are trivial consequences of the definition; note that symmetry requires both $B\subseteq A_\varepsilon$ and $A\subseteq B_\varepsilon$ since if $A\subseteq B$, $A\subseteq B_\varepsilon$ for all $\varepsilon>0$. For subadditivity we note that if $\rho_H^X(A,C)=\varepsilon$ and $\rho_H^X(C,B)=\delta$ we have $A\subseteq C_\varepsilon$ and $C\subseteq B_{\delta}$ and so $A\subseteq B_{\delta+\varepsilon}$ since $U\subseteq V$ implies $U_\varepsilon\subseteq V_\varepsilon$ for all $\varepsilon>0$ and all $U,\:V\subseteq X$ and $(U_\varepsilon)_{\delta}\subseteq U_{\delta+\varepsilon}$ for all $U\subseteq X$ and all $\delta,\:\varepsilon>0$. Similar remarks show that $B\subseteq A_{\delta+\varepsilon}$ and subadditivity follows from the infimum. 
	
	$\rho_H^X$ is thus a pseudometric on the set of subsets of $X$. To show that it is not a metric, let ${\rm cl}(A)$ denote the closure of $A$ and note that $B_\varepsilon(x)\cap A\neq \emptyset$ for any $\varepsilon>0$ for any $x\in {\rm cl}(A)\backslash A$. Hence ${\rm cl}(A)\subseteq A_\varepsilon$ for all $\varepsilon>0$, while $A\subseteq {\rm cl}(A)$, and $\rho_H^X(A,{\rm cl}(A))=0$. Taking $A$ not closed shows the failure of definiteness. $\rho_H^X$ thus defines a metric on the equivalence classes of subsets of $X$, where two subsets $A$ and $B$ of $X$ are equivalent iff $\rho_H^X(A,B)=0$. It turns out that we may represent these equivalence classes by the closed subsets of $X$: we have already shown that every set is equivalent to its closure so it is sufficient to show that two distinct closed sets are inequivalent. In particular suppose that $A,\:B\subseteq X$ are closed and $A\neq B$, i.e. without loss of generality we may assume that there is an $x\in A\backslash B$. Since $x\notin {\rm cl}(B)=B$, there is an $\varepsilon>0$ such that $B_\varepsilon^X(x)\cap A=\emptyset$; then $\rho_H^X(A,B)\geq \varepsilon>0$ and $A$ and $B$ are inequivalent as required.
	
	The Hausdorff distance is thus a metric on the closed subsets of a space; we show that it is a \textit{finite} metric on the closed and bounded subsets of $X$ if the metric on $X$ is finite: in particular suppose that $A$ and $B$ are bounded, i.e. $A\subseteq B_r(x)$ and $B\subseteq B_R(y)$ for some $r,\:R>0$ for all $x\in A$ and all $y\in B$. Since $\rho_X$ is finite we may pick $(x,y)\in A\times B$ so that $A\subseteq B_{r+R+\rho_X(x,y)}$ and $B\subseteq A_{r+R+\rho_X(x,y)}$ as required. Since every compact set is both closed and bounded, the Hausdorff dimension naturally induces a finite metric on the compact subsets of a space.
	
	With this in mind we automatically see that if we define the Gromov-Hausdorff distance $\rho_{GH}(X,Y)$ between two compact metric spaces $X$ and $Y$ as the infimum of the Hausdorff distance over isometric embeddings of $X$ and $Y$ into all possible ambient metric spaces $Z$, we have a finite pseudometric on the space of compact metric spaces, and which clearly gives $0$ iff $X$ and $Y$ are isometric.
	
	For the purposes of this paper we shall need a basic result on the characterisation of the Gromov-Hausdorff distance in terms of \textit{near isometries}. We need the following notions:
	\begin{enumerate}
	    \item Let $(X,\rho_X)$ be a metric space. An \textit{$\varepsilon$-net} in $X$ is a subset $A\subseteq X$ such that $A_\varepsilon=X$.
	    \item Now let $(X,\rho_X)$ and $(Y,\rho_Y)$ be metric spaces; we define the \textit{distortion} of a map $f:X\rightarrow Y$ as the quantity
	    \begin{align}
	        \text{dis}(f)=\sup_{(x_1,x_2)\in X\times X}|\rho_Y(f(x_1),f(x_2))-\rho_X(x_1,x_2)|.
	    \end{align}
	    \item An \textit{$\varepsilon$-isometry} between $X$ and $Y$ is a mapping $f:X\rightarrow Y$ such that ${\rm dis}(f)\leq \varepsilon$ and $f(X)$ is a $\varepsilon$-net in $Y$.
	\end{enumerate}
	The basic relation between near isometries and the Gromov-Hausdorff distance that we wish to show is that there is a $\mathcal{O}(\varepsilon)$-isometry between two compact metric spaces $(X,\rho_X)$ and $(Y,\rho_Y)$ iff $\rho_{GH}(X,Y)=\mathcal{O}(\varepsilon)$. More precisely
	\begin{lemma}\label{lemma: GHProximity=ExistenceNearIsometry}
	    Let $(X,\rho_X)$ and $(Y,\rho_Y)$ be compact metric spaces. If $\rho_{GH}(X,Y)\leq \varepsilon$ then there is a $2\varepsilon$-isometry $f:X\rightarrow Y$; conversely, given an $\varepsilon$-isometry $f:X\rightarrow Y$ we have $\rho_{GH}(X,Y)\leq 2\varepsilon$. 
	\end{lemma}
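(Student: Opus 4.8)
The plan is to prove each implication by freely converting between the two descriptions of Gromov-Hausdorff proximity: an isometric embedding of $X$ and $Y$ into a common ambient space, and an $\varepsilon$-isometry $X\to Y$. I take $\varepsilon>0$ throughout (the case $\varepsilon=0$ being degenerate under the open-ball convention in the definition of an $\varepsilon$-net). For the first implication, suppose $\rho_{GH}(X,Y)\le\varepsilon$. The initial step is to \emph{realise} this distance, i.e.\ to produce an actual metric space $Z$ containing isometric copies of $X$ and $Y$ with $\rho_H^Z(X,Y)\le\varepsilon$ --- this is the standard fact that the Gromov-Hausdorff infimum is attained on compact spaces, which I would prove by taking admissible metrics on $X\sqcup Y$ witnessing $\rho_H\le\varepsilon+1/n$, noting that the associated cross-distance functions on the compact space $X\times Y$ are uniformly bounded and $1$-Lipschitz in each variable, and applying Arzel\`{a}-Ascoli to extract a uniformly convergent subsequence whose limit is an admissible metric (up to passing to the metric quotient) realising $\rho_H^Z(X,Y)\le\varepsilon$. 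The second step is then immediate: for each $x\in X$ pick a nearest point $f(x)\in Y$ in $Z$, which exists since $Y$ is compact, so that $\rho_Z(x,f(x))\le\varepsilon$. The triangle inequality gives $|\rho_Y(f(x_1),f(x_2))-\rho_X(x_1,x_2)|=|\rho_Z(f(x_1),f(x_2))-\rho_Z(x_1,x_2)|\le 2\varepsilon$, so $\text{dis}(f)\le2\varepsilon$; and for $y\in Y$, compactness of $X$ gives $x\in X$ with $\rho_Z(y,x)\le\varepsilon$, whence $\rho_Y(y,f(x))\le\rho_Z(y,x)+\rho_Z(x,f(x))\le2\varepsilon$ and $f(X)$ is a $2\varepsilon$-net. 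Thus $f$ is a $2\varepsilon$-isometry.

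For the converse I would argue directly. Given an $\varepsilon$-isometry $f:X\to Y$, equip $Z=X\sqcup Y$ with the function restricting to $\rho_X$ on $X$ and to $\rho_Y$ on $Y$, and defined for $x\in X$, $y\in Y$ by $\rho_Z(x,y)=\varepsilon+\inf_{x'\in X}\bigl(\rho_X(x,x')+\rho_Y(f(x'),y)\bigr)$. The additive offset $\varepsilon$ keeps the cross-distances strictly positive and, crucially, provides slack in the triangle inequalities; a routine verification of the mixed triangle inequalities --- using $\text{dis}(f)\le\varepsilon$ to bound $\rho_Y(f(x'),f(x''))\ge\rho_X(x',x'')-\varepsilon$ in the cases whose endpoints lie in the same factor --- shows $\rho_Z$ is a genuine metric on $Z$ extending $\rho_X$ and $\rho_Y$. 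Taking $x'=x$ gives $\rho_Z(x,f(x))\le\varepsilon$, so every point of $X$ is within $\varepsilon$ of $Y$; and since $f(X)$ is an $\varepsilon$-net, for each $y\in Y$ there is $x$ with $\rho_Y(f(x),y)<\varepsilon$, hence $\rho_Z(y,x)<2\varepsilon$. Therefore $\rho_H^Z(X,Y)\le2\varepsilon$ and so $\rho_{GH}(X,Y)\le2\varepsilon$.

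The conceptual content here is slight, and I expect the only genuine technical point to be the realisation step in the first implication (attainment of the infimum on compact spaces); without it one still obtains, for every $\eta>0$, a $(2\varepsilon+\eta)$-isometry, which suffices for all the $\mathcal{O}(\cdot)$ estimates used elsewhere in the paper but is weaker than the stated bound. A second, harmless, nuisance is that the arguments naturally produce points within distance $\le2\varepsilon$ of the relevant set, whereas the $\varepsilon$-net definition is written with open balls; this costs nothing in the applications. An alternative and essentially equivalent route packages both implications through the correspondence formula $\rho_{GH}(X,Y)=\tfrac{1}{2}\inf_R\text{dis}(R)$, the infimum being over correspondences $R\subseteq X\times Y$; the disjoint-union metric written above for the converse is exactly the metric one attaches to such a correspondence.
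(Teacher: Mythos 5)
Your proof is correct and is essentially the standard argument from Burago--Burago--Ivanov, which is precisely where the paper defers for this lemma (it gives no proof of its own). The extra care you take---realising the Gromov-Hausdorff infimum via Arzel\`{a}-Ascoli to handle the non-strict bound $\rho_{GH}(X,Y)\leq\varepsilon$, and flagging the open-ball convention in the paper's definition of an $\varepsilon$-net---covers the only boundary-case subtleties, and the fallback $(2\varepsilon+\eta)$-isometry you mention would indeed suffice for every $\mathcal{O}(\cdot)$ use of the lemma elsewhere in the paper.
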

	This lemma is a standard result and the interested reader is directed to e.g. the monograph \cite{BuragoBuragoIvanov_MetricGeometry} for a proof; it allows us to control the precision of a nearly isometric embedding with the Gromov-Hausdorff distance between two metric spaces.
	
	\section{The Discrete Action}\label{section: DiscreteAction}
	In this section we prove our main result, theorem \ref{theorem: MainTheorem}. We first present the theorem and proof in a heuristic outline, before giving a more formal presentation. This section substantially overlaps with chapter 5 of the PhD thesis of one of the authors \cite{Kelly-Thesis} which gives more complete proofs of all the statements given here.
	\subsection{An Informal Outline of the Main Result}\label{subsection: InformalOutline}
	Our result concerns the following situation: recall that $G$ is a graph on $N$ vertices and $\mathcal{M}$ is a compact Riemannian manifold. We assume that we have an $\varepsilon$-isometry $\iota:G\hookrightarrow \mathcal{M}$. The assumption is that $G$ arises from some random dynamical process that does not fix the background manifold $\mathcal{M}$ \textit{a priori}; the graph $G$ itself however is a fixed configuration i.e. it has no random structure that we can appeal to. 
	
	At its most elementary our main claim is the following:
	\begin{quote}
	    Outline of Theorem \ref{theorem: MainTheorem}: \textit{If $\varepsilon=\varepsilon(N)$ is sufficiently small, $N$ sufficiently large and $\iota(G)$ samples $\mathcal{M}$ sufficiently evenly, then there is a discrete Einstein-Hilbert action $\mathcal{A}_{DEH}=\mathcal{A}_{DEH}(G)$ such that the error}
	    \begin{align}
	        \delta \mathcal{A}(G,\mathcal{M}):=|\mathcal{A}_{DEH}(G)-\mathcal{A}_{EH}(\mathcal{M})|
	    \end{align}
	    \textit{is small where $\mathcal{A}_{EH}$ is the Riemannian Einstein-Hilbert action (Eq.~\ref{equation: EHAction}). $\mathcal{A}_{DEH}$ is defined in terms of the graph Ollivier curvature and other graph-intrinsic quantities.}
	\end{quote}
	As suggested in the introduction, we have long believed that there should be some discrete Einstein-Hilbert action $\mathcal{A}_{DEH}$ such that the above is true and the main challenge is to find the form of $\mathcal{A}_{DEH}$ and specify any conditions that need to be satisfied for the above to hold. A convergence result also follows almost immediately from the above. 
	
	It is quite clear that going from the Einstein-Hilbert action to the graph Ollivier curvature requires three main steps:
	\begin{enumerate}
	    \item The Einstein-Hilbert action $\mathcal{A}_{EH}$ is defined as the integral over $\mathcal{M}$ of the scalar curvature $R$. Thus, if we can approximate integrals over $\mathcal{M}$ via a suitable operation in $G$, we reduce the problem of approximating $\mathcal{A}_{EH}$ to approximating the Ricci scalar $R$ at each point.
	    \item The Ricci scalar $R$ at a point $p\in \mathcal{M}$ is defined as the trace of the Ricci curvature ${\rm Ric}_p$ at $p\in \mathcal{M}$. If we can approximate the trace, the problem thus reduces to a finding an approximation of the Ricci curvature at each point $p\in \mathcal{M}$.
	    \item We know that the manifold Ollivier curvature approximates the Ricci curvature. It is thus sufficient for the purpose of theorem \ref{theorem: MainTheorem} to show that the graph Ollivier curvature approximates in some sense the manifold Ollivier curvature.
	\end{enumerate}
	It turns out that the difficulty associated with each of the three steps above is in ensuring that various graph theoretic measures approximate relevant measures on manifolds. This is no surprise in the case of the Ollivier curvature which is explicitly defined in terms of metric-measure structures. (Recall that in the basic set-up above we have \textit{assumed} that the graph $G$ approximates the manifold $\mathcal{M}$ as a metric space). This is also trivially the case in the first step where we wish to approximate an integral over $\mathcal{M}$; for the second step, however, the measure theoretic aspect of the problem comes from the fact that taking the trace of a bilinear form essentially involves averaging that bilinear form over the sphere.
	
	Why then is this a problem? The basic issue comes from the fact that a graph $G$ can metrically approximate a manifold $\mathcal{M}$ \textit{without} necessarily approximating $\mathcal{M}$ well at the measure theoretic level. For instance $\iota(G)$ can be regarded as a sample of points in $\mathcal{M}$ and this sample might be highly uneven with respect to the volume measure in $\mathcal{M}$. In this way natural operations such as sums over points in $G$ correspond to integrals in $\mathcal{M}$, \textit{distorted} by the sampling density. 
	
	At the most naive level these problems are circumvented in two ways. In the case of approximating the Ricci curvature and taking the trace, the manifold measures that we wish to approximate via graph structures are \textit{local} measures by which we mean that the measures are associated to a given point and supported in some small compact region of that associated point. This allows us to exploit the universal local structure of smooth manifolds and extract suitable local subgraphs of $G$ around each vertex in such a way that the local subgraphs sample the relevant regions of $\mathcal{M}$ evenly under the nearly isometric imbedding $\iota$. In this way, both the Ricci and scalar curvatures at a point $p\in \mathcal{M}$ can be obtained from the Ollivier curvature of a vertex $u\in G$, $p=\iota(u)$. On the other hand, to obtain the full Einstein-Hilbert action, one must integrate over $\mathcal{M}$. Without fixing $\mathcal{M}$ in advance, the only way to obtain a good approximation at this level is to impose a condition on the evenness of the sample $\iota(G)\subseteq \mathcal{M}$. While this can be done using entirely metric constraints---i.e. without reference to the measure theoretic structure of $G$ or $\mathcal{M}$---it imposes severe restrictions on the kinds of sequence of graph converging to $\mathcal{M}$ we can consider and imposes major restrictions on the scope of the present results.
	
	Let us consider each step in slightly more detail. We first consider step $1$, relating to integration over $\mathcal{M}$. A formal statement of the relevant result along with proof is given in appendix \ref{appendix: Integral}. Essentially the idea is to replace
	\begin{align}
	    \int_{\mathcal{M}}\text{d vol}_{\mathcal{M}}(x)f(x)\leftrightarrow \sum_{u\in G}\omega_D\varepsilon^Df(\iota(u)),
	\end{align}
	where the right-hand side is interpreted as a kind-of Riemann sum for the function $f$ on the cover $\set{B_{\varepsilon}^{\mathcal{M}}(\iota(u))}_{u\in G}$ of $\mathcal{M}$. In particular, the factor $\omega_D\varepsilon^D$ corresponds to the volume of the Euclidean $D$-ball of radius $\varepsilon$. Two restrictions have to be imposed however to make this approximation work well. Firstly, $f$ cannot vary too much over the balls $B_{\varepsilon}^{\mathcal{M}}(\iota(u))$, $u\in G$, if it is reasonable to set it constant over the entire ball. Concretely, this can be achieved by restricting the class of functions $f$ under consideration to e.g. those that restrict to $K$-Lipschitz functions on the balls for a suitable Lipschitz constant $K$. Since every smooth function is locally Lipschitz continuous, this restriction on the class of functions $f$ considered is sufficiently generous to allow the approximation to be valid for \textit{any} smooth function, as long as we are free to decrease $\varepsilon$ (increase $N$). The second condition relates to the fact that the cover $\set{B_{\varepsilon}^{\mathcal{M}}(\iota(u))}_{u\in G}$ is not pairwise disjoint so we need some way to control the overlap error. This is achieved, for instance, by assuming that the balls $\set{B_{\varepsilon(1-\alpha)}^{\mathcal{M}}(\iota(u))}_{u\in G}$ are in fact pairwise disjoint for some $\alpha\in (0,1)$. Note that we will find that we require $\alpha\ll 1$ which places a sever restriction on the class of nearly isometric imbeddings $\iota$ that can be considered. 
	
	Two basic steps are required in showing that we may approximate the trace. Fundamentally, we use the fact that the trace in $\mathbb{R}^D$ has the following integral representation:
	\begin{align}\label{equation: TraceIntegral}
	    {\rm tr}_{D}(T)&=\frac{1}{\omega_D}\int_{\mathbb{S}^{D-1}}{\rm d vol}_{\mathbb{S}^{D-1}}(x)T(x,x)\nonumber\\
	    &=\frac{1}{\omega_D}\int_{\mathbb{S}^{D-1}}{\rm d^D x}\:\mathfrak{d}(x) T(x,x).
	\end{align}
	(Recall that $\mathfrak{d}$ is a function that assigns to $x$ the Jacobean determinant for spherical coordinates evaluated on $x/||x||$; c.f. Eq.~\ref{equation: SphericalVolumeElement}.) This can be checked explicitly with an elementary, if tedious, calculation. The factor $\omega_D$ arises because the trace of the constant bilinear form $1:(x,x)\mapsto 1$ is $D$ and we have the relation $D\omega_D={\rm vol}_{D}(\mathbb{S}^{D-1}).$ In fact, with minimal adjustment we can clearly express Eq.~\ref{equation: TraceIntegral} as an integral over $\partial B_\ell^D$ for any $\ell>0$: 
	\begin{align}\label{equation: TraceIntegralII}
	    {\rm tr}_{D}(T)&=\frac{1}{\omega_D}\int_{\partial B^D_\ell}{\rm d^D x}\:\mathfrak{d}(x) T\left(\frac{x}{||x||},\frac{x}{||x||}\right).
	\end{align}
	We have thus introduced a scale $\ell>0$ into the problem. The first step in approximating the trace involves discretising this result; to do so, we note that we may choose a set of points $U$ in $\mathbb{R}^D$ such that a sum of the integrand over these points constitutes a Riemann sum for the integral Eq.~\ref{equation: TraceIntegralII}. We shall call any such set a \textit{trace grid}. It is natural to assume that these points lie in the annulus $S_{\ell,r}^D$---introducing a secondary scale $r>0$ that determines the thickness of the shell---and that the points are regular in the sense that adjacent points have constant separation. Since we wish to use $S_{\ell,r}^D$ to approximate $\partial B_\ell^D$, we have
	\begin{align}
	    r\ll \ell,
	\end{align}
	while the distance between points is essentially determined by the size $|U|$ of the set $U$. In particular if we let $M\in \mathbb{N}$ be such that
	\begin{align}
	    |U|=2M^{D-1},
	\end{align}
	the angular difference between adjacent points of $U$ will be $\mathcal{O}(M^{-1})$. The metric distance between adjacent points is consequently $\mathcal{O}((\ell+r)M^{-1})=\mathcal{O}(\ell M^{-1})$. We also find that the discretisation error is of order $\mathcal{O}(M^{-1})$.
	
	We now turn to the intrinsic characterisation of the trace approximation above. The idea is that for any $u\in G$, we may recursively construct a set $\tilde{U}\subseteq S_{\ell,r}^G(u)\subseteq G$ such that $\iota(\tilde{U})$ is almost $\exp_p(U)$, $p=\iota(u)$, for some trace grid $U\subseteq T_p\mathcal{M}$. When we say that $\iota(\tilde{U})$ is almost $\exp_p(U)$ this means there is a bijection $u\mapsto \tilde{u}$ such that $\rho_{\mathcal{M}}(\exp_p(u),\iota(\tilde{u}))<\varepsilon$. In fact we require slightly more since in the specification of the set $\tilde{U}$ we also need to be able to assign angular coordinates to elements of $\tilde{U}$ in such a way that the spherical volume element function $\mathfrak{d}$ can be approximated on $\tilde{U}$. Since $\iota(G)$ is, by assumption, a $\varepsilon$-net in $\mathcal{M}$ it is at some level obvious that we may find a set $\tilde{U}$ and a (surjective) mapping $u\mapsto \tilde{u}$ such that $\rho_{\mathcal{M}}(\exp_p(u),\iota(\tilde{u}))<\varepsilon$ for a trace grid $U$. It is a little harder, however, to ensure that this mapping is a bijection: the point is that while the exponential map is a radial isometry it is not a full isometry. Hence even though $\exp_p(S_{\ell,r}^D)=S_{\ell,r}^{\mathcal{M}}(p)$, we may find points in $U$ moved closer together by a distance determined by the distortion of the exponential map. This distortion can be characterised more or less precisely for manifolds with bounded sectional curvature:
	\begin{align}\label{equation: DistortionExponential}
	    {\rm dis}(\exp_p|B_R^D)=\mathcal{O}(R^3)
	\end{align}
	for geodesic balls $B_R^D$ with radius $0<R\ll 1$. In particular this means that the points in $U$ are shifted by a distance $\mathcal{O}((\ell+r)^3)=\mathcal{O}(\ell^3)$ and adjacent points in $\exp_p(U)$ can be as close as $\mathcal{O}(\ell M^{-1}-\ell^3)$. Thus if $\ell M^{-1}-\ell^3\sim \varepsilon$ we may have several points of $\exp_p(U)$ nearby the \textit{same} point of $\iota(G)$; if this occurs sufficiently often then the sum over points in $G$ matched to $\exp_p(U)$ will not approximate the sum over $U$ after all and we require:
	\begin{align}
	    (\varepsilon +\ell^3)M\ll \ell. 
	\end{align}
	This constraint is essentially a constraint on $M$: $M$ must be large since the error arising from treating Eq.~\ref{equation: TraceIntegralII} via a Riemann sum goes with the inverse of $M$. However if $M$ is too large, the points in $U$ become too close together to guarantee that a sum over matching points in $G$ will in fact approximate the Riemann sum of ${\rm tr}_D$.
	
	The recursive construction of $\tilde{U}$ itself ultimately boils down to being able to specify a notion of \textit{relative angle} between points of $G$, essentially via the cosine rule. For sufficiently nearby points we have a precise estimate of the error of this assignment when nearly isometrically imbedded in $\mathcal{M}$; thus if this error is much smaller than the angular separation between adjacent points in $U$, we can reliably pick out elements of $G$ which `should be' adjacent to points $v\in \tilde{U}$ if $\tilde{U}$ is almost a set . Such points will exist if $G$ approximates $\mathcal{M}$ since such points exist in $\mathcal{M}$. Note, however, that the recursive construction of $\tilde{U}$ proceeds intrinsically and as such it should be specified in such a way that the recursion terminates in all graphs; however the recursion will obviously fail to produce a set $\tilde{U}$ satisfying the above properties if $G$ is not Gromov-Hausdorff close to $\mathcal{M}$ since, for instance, there is no need for the points that `should be' adjacent to $v\in \tilde{U}$ to exist.
	
	Finally we note that a similar procedure takes place in the case of the Ollivier curvature: the problem reduces to showing that the uniform measure on the (discrete) set $\iota(B_{\delta}^G(u))$ approximates the uniform measure on the ball $B_{\delta}^{\mathcal{M}}(\iota(u))$. Showing that this is in fact the case essentially involves showing that the discrete measure is \textit{nearly} a uniform discrete measure on an even grid of points in Euclidean space; then known Euclidean semidiscrete optimal transport problems can show that this is near the uniform measure in $\mathbb{R}^D$ which is near to the relevant uniform measure in $\mathcal{M}$. Translating between $\mathcal{M}$ and $\mathbb{R}^D$ involves the use of the exponential map which introduces an error $\delta^3$ according to its distortion on the $\delta$-ball. Also, as in the case of the trace grid above, the evenly spaced points in Euclidean space must be sufficiently separated that they remain identifiable after a distortion of $\varepsilon+\delta^3$. The details are discussed in appendix \ref{appendix: Semidiscrete}.
    \subsection{Taking the Trace}
    As mentioned above, rigorous statements and proofs of steps $1$ and $3$ have been given in the appendices. Because several notions introduced in the process of taking the trace are required to define the discrete action we shall consider taking the trace in more detail here. Proofs of all theorems etc. in this section are found in appendix \ref{appendix: Trace}.
    
    In our informal outline above, we proceeded by discretising the Euclidean integral representation of the trace \ref{equation: TraceIntegral}. The idea is to introduce a set of points in $T_p\mathcal{M}$ in such a way that summing over these points gives a Riemann sum for the integral \ref{equation: TraceIntegralII}. More precisely we define Euclidean \textit{trace grids} as follows:
\begin{definition}
	Let $M\in \mathbb{N}$ be a (large) positive integer, and consider the multi-index $\boldsymbol{m}=(m_1,...,m_{D-1})$ where $m_1,...,m_{D-2}\in \set{1,...,M}$ and $m_{D-1}\in \set{1,...,2M}$. For any $\ell,\:r>0$ with $r\ll \ell$, a Euclidean \textit{trace grid} on $2M^{D-1}$ points in the annulus $S_{\ell,r}^D$ is any set $U=\set{(t_\textbf{m},u_\textbf{m})}$ indexed by the multi-index $\textbf{m}$ such that
	\begin{align}
		u_\textbf{m}=\left(\frac{\pi m_1}{M},\cdots ,\frac{\pi m_{D-1}}{M}\right)
	\end{align}
	in spherical coordinates and $\ell-r<t_\textbf{m}<\ell+r$ for all $\textbf{m}$ (up to a rigid rotation of the entire grid). For any trace grid $U$ and any bilinear form $T:\mathbb{R}^D~\times~ \mathbb{R}^D~\rightarrow~\mathbb{R}$ we define:
	\begin{align}
		\text{ tr}_U(T)=\frac{\pi^{D-1}}{\omega_DM^{D-1}}\sum_\textbf{m} \mathfrak{d}(u_{\textbf{m}})T(u_{\textbf{m}},u_{\textbf{m}}).
	\end{align}
\end{definition}
The point of this definition is that it immediately yields the following:
\begin{align}
	\left|\text{ tr}_{D}(T)-\text{ tr}_{U}(T)\right|=\mathcal{O}(M^{-1}),
\end{align}
i.e. as long as $M$ is large we may approximate the trace by summing over the points in the trace grid.

The aim is to characterise $\text{ tr}_U$ via a sum over some set in $G$. This involves (a) intrinsically specifying a set $\tilde{U}\subseteq G$ such that $\iota(\tilde{U})$ is \textit{almost} $\exp_p(U)$ for some Euclidean trace grid $U\subseteq T_p\mathcal{M}$ and (b) finding a way of characterising the summand $\mathfrak{d}(u_{\textbf{m}})T(u_{\textbf{m}},u_{\textbf{m}})$ for each multi-index $\textbf{m}$ via some function on $G$ that can be specified without reference to $\mathcal{M}$. For present purposes, we shall assume that we can in fact do this for the bilinear form $T$---this is the substance of the demonstration that the Ollivier curvature may be approximated---and we need only specify the spherical volume element function $\mathfrak{d}$. It turns out that both of these tasks can be achieved at once: the idea is that for each $u\in G$, we may recursively construct a mapping $\theta_u:S_{\ell,r}^G(u)\rightarrow U\cup \set{0}$ where $U$ is a Euclidean trace grid, such that if we take $\tilde{U}=\theta_u^{-1}(U)\subseteq S_{\ell,r}^G(u)$ we have that $\theta_u|\tilde{U}$ is an injection and $\exp_p(U)$ approximates $\iota(\tilde{U})$. By construction, then, $\tilde{U}$ resolves point (a) and since $\theta_u$ obviously assigns each $u\in \tilde{U}$ an angular coordinate (the coordinate of the corresponding point in $U$), we see that the summand of $\text{ tr}_U$ can also be well approximated on $\tilde{U}$.

How does one construct the mapping $\theta_u$? We begin by noting that given any root point in a Euclidean trace grid $U$---which by rotational invariance of the trace we may identify as $u_{(1,...,1)}\in U$---we can recursively obtain the entire trace grid by successively moving to adjacent points in the grid. In this sense it is sufficient to be able to identify the analogue of adjacent points in the set $\tilde{U}$. It turns out that this may be achieved given knowledge of the \textit{relative angle} between points in $G$. More precisely we define the following:
\begin{definition}
	For any $u\in G$ and for suitable $\ell,\:r>0$ and $v_1,\:v_2\in S_{\ell,r}^G(u)$, we define the \textit{relative angle} between $v_1$ and $v_2$ with respect to $u$ by
	\begin{align}
		\theta_u^G(v_1,v_2)=\arccos\left(\frac{a^2+b^2-c^2}{2ab}\right),
	\end{align}
	where $a=\rho_G(u,v_1)$, $b=\rho_G(u,v_2)$ and $c=\rho_G(v_1,v_2)$.
\end{definition}
We have a precise estimate for the error of the relative angle when $G$ is Gromov-Hausdorff close to $\mathcal{M}$:
\begin{lemma}\label{lemma: AngularErrorGHProximal}
	Let $p=\iota(u)\in \mathcal{M}$ for $u\in G$ and consider $v_1,\:v_2\in S_{\ell,r}^G(u)$ with $r,\:\ell>0$ and $r\ll \ell$. For $\ell$ sufficiently small, we have
	\begin{widetext}
	\begin{align}
		\left|\theta_D(V_1,V_2)-\theta_u^G(v_1,v_2)\right|=\left|\theta_D(V_1,V_2)-\theta_p^{\mathcal{M}}(V_1,V_2)\right|=\left|\theta_p^{\mathcal{M}}(V_1,V_2)-\theta_u^G(v_1,v_2)\right|=\mathcal{O}(\ell^2),
	\end{align}
	\end{widetext}
	where $V_k=\exp_{p}^{-1}(\iota(v_k))$ for $k\in \set{1,2}$ and $\theta_D(x,y)$ is the Euclidean angle between any vectors $x,\:y\in \mathbb{R}^D$.
\end{lemma}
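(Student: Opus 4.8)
The plan is to reduce the statement to an elementary stability estimate for the Euclidean cosine rule under small perturbations of the side lengths. Fix an orthonormal frame at $p$, identifying $T_p\mathcal{M}\cong\mathbb{R}^D$ isometrically; then $\theta_D(V_1,V_2)$ \emph{equals} $\theta_p^{\mathcal{M}}(V_1,V_2)$, so the middle difference in the statement vanishes identically and it suffices to prove $|\theta_p^{\mathcal{M}}(V_1,V_2)-\theta_u^G(v_1,v_2)|=\mathcal{O}(\ell^2)$, the first difference then following by the triangle inequality. Compactness of $\mathcal{M}$ supplies a uniform lower bound on the injectivity radius and a uniform bound on the sectional curvature, so ``$\ell$ sufficiently small'' and the $\mathcal{O}$-constants (including that in Eq.~\ref{equation: DistortionExponential}) may be taken uniform over $\mathcal{M}$.

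I would compare three triangles. The \emph{graph triangle} $uv_1v_2$ has side lengths $a=\rho_G(u,v_1)$, $b=\rho_G(u,v_2)$, $c=\rho_G(v_1,v_2)$ (the triangle inequality for $\rho_G$ makes these the sides of an honest Euclidean triangle), and $\theta_u^G(v_1,v_2)$ is by definition its angle at $u$. The \emph{tangent triangle} $0V_1V_2$ in $T_p\mathcal{M}$ has side lengths $\|V_1\|,\|V_2\|,\|V_1-V_2\|$ and angle $\theta_p^{\mathcal{M}}(V_1,V_2)$ at $0$. Between them sits the \emph{manifold triangle} $p\,\iota(v_1)\,\iota(v_2)$, with side lengths $\rho_{\mathcal{M}}(p,\iota(v_k))$ and $\rho_{\mathcal{M}}(\iota(v_1),\iota(v_2))$. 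Because $\iota$ is an $\varepsilon$-isometry, $\text{dis}(\iota)\le\varepsilon$ makes the graph-triangle side lengths agree with those of the manifold triangle up to $\varepsilon$. Because $v_k\in S_{\ell,r}^G(u)$, the points $\iota(v_k)$ lie in the geodesic ball of radius $R=\ell+r+\varepsilon=\mathcal{O}(\ell)$, on which $\exp_p$ is a radial isometry (for $\ell$ below the injectivity radius); hence two sides of the manifold triangle agree \emph{exactly} with the tangent triangle, $\rho_{\mathcal{M}}(p,\iota(v_k))=\|V_k\|$, while the third pair differ by at most $\text{dis}(\exp_p|B_R^D)=\mathcal{O}(R^3)=\mathcal{O}(\ell^3)$ by Eq.~\ref{equation: DistortionExponential}. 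Under the standing relations on the scales (which the scale choices imposed in the appendix ensure; in particular $\varepsilon,r=\mathcal{O}(\ell^3)$, so $\varepsilon,r\ll\ell$) we conclude that corresponding side lengths of the graph and tangent triangles agree up to $\mathcal{O}(\ell^3)$, and that in both triangles the two sides meeting at the relevant vertex have length $\Theta(\ell)$ and so differ from one another by only $\mathcal{O}(r+\varepsilon)$.

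It remains to control $|f(a,b,c)-f(\|V_1\|,\|V_2\|,\|V_1-V_2\|)|$ for $f(x,y,z)=\arccos\bigl(\tfrac{x^{2}+y^{2}-z^{2}}{2xy}\bigr)$, given that the two argument triples differ coordinatewise by $\mathcal{O}(\ell^3)$ with first two entries $\Theta(\ell)$. Away from degenerate triangles — the angle bounded away from $0$ and $\pi$ — this is routine: the numerator of the cosine shifts by $\mathcal{O}(\ell^4)$, the denominator is $\Theta(\ell^2)$ and shifts by $\mathcal{O}(\ell^4)$, so the cosine shifts by $\mathcal{O}(\ell^2)$ and $\arccos$, being Lipschitz there, transfers this to an $\mathcal{O}(\ell^2)$ bound on the angle.

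The main obstacle is the near-degenerate regime, where $V_1,V_2$ are nearly parallel or nearly antiparallel: $\cos\theta$ is then near $\pm1$, $\arccos$ has unbounded derivative, and merely propagating the $\mathcal{O}(\ell^2)$ cosine error through the (square-root) modulus of continuity of $\arccos$ only yields $\mathcal{O}(\ell)$ for the angle. One instead uses the factorisations $1-\cos\theta=\tfrac{(z-|x-y|)(z+|x-y|)}{2xy}$ and $1+\cos\theta=\tfrac{(x+y-z)(x+y+z)}{2xy}$ and tracks the relevant triangle-inequality slack — $z-|x-y|\ge0$ in the parallel branch, $x+y-z\ge0$ in the antiparallel branch — directly. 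In the parallel branch both factors of the numerator are small (the slack because we are near parallel, and $|x-y|$ because the adjacent sides are $\approx\ell$), so $1-\cos\theta$ is quadratically small and one checks that an $\mathcal{O}(\ell^3)$ perturbation of the three sides moves the angle by $\mathcal{O}(\ell^2)$. The antiparallel branch is more delicate, since only the slack $x+y-z$ is small while $x+y+z=\Theta(\ell)$: here the uniform bound $\text{dis}(\exp_p|B_R^D)=\mathcal{O}(\ell^3)$ is too coarse, and one needs the sharper fact that the distortion contributed by the \emph{specific} pair $(V_1,V_2)$ shrinks as $V_1,V_2$ become antiparallel, which reduces the effective perturbation of $z$ and closes the $\mathcal{O}(\ell^2)$ estimate. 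Carrying out this case split and bookkeeping the several competing scales is where the real work of the lemma lies; the remaining steps are soft.
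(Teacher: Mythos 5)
Up to the non-degenerate case your argument is the paper's in different clothing: the paper Taylor-expands $\rho_{\mathcal{M}}^2(\gamma_1(s),\gamma_2(t))=s^2+t^2-2st\cos\theta+\mathcal{O}((s,t)^4)$ via the Jacobi equation, swaps manifold for graph distances at cost $\mathcal{O}(\varepsilon\ell)$, and obtains $\cos\theta_u^G=\cos\theta+\mathcal{O}(\max(\varepsilon/\ell,\ell^2))$; your three-triangle comparison through ${\rm dis}(\iota)\le\varepsilon$, radial isometry and Eq.~\ref{equation: DistortionExponential} yields the same cosine-level estimate, and your observation that $\theta_D(V_1,V_2)=\theta_p^{\mathcal{M}}(V_1,V_2)$ holds exactly is a clean simplification of the paper's redundant second expansion. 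Where you genuinely depart from the paper is the last step: the paper converts the cosine estimate into the angle estimate by invoking ``smoothness of $\arccos$'' with no case analysis, and you rightly point out that $\arccos$ is not Lipschitz at $\pm1$, so the degenerate regime is the real issue.

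The gap is that your treatment of that regime does not close. (i) The parallel branch rests on the assertion that the standing scales give $r=\mathcal{O}(\ell^3)$; this is false --- the paper only ever imposes $\varepsilon\ll r\ll\ell$, and in the parameter choice proving Theorem~\ref{theorem: MainTheorem} ($r=N^{-c}$, $\ell=N^{-b}$ with $b,c$ both near $1/4D$) one has $r\gg\ell^3$. With $r$ merely $o(\ell)$, take $V_1\parallel V_2$ with radii $\ell\pm r/2$, so the true angle is $0$ and the manifold distance is exactly $r$ by radial isometry, but let $\rho_G(v_1,v_2)=r+\varepsilon$ (permitted by ${\rm dis}(\iota)\le\varepsilon$): then $1-\cos\theta_u^G\sim r\varepsilon/\ell^2$, i.e.\ an angle error $\sim\sqrt{r\varepsilon}/\ell$, which is $\mathcal{O}(\ell^2)$ only if $r\varepsilon\lesssim\ell^6$ --- not implied by $\varepsilon\ll\ell^3$, $r\ll\ell$, and violated by the paper's own scales. (ii) In the antiparallel branch, the sharper fact you invoke (curvature distortion of the opposite side scaling like $\ell^4\sin^2\theta$) is correct but only suppresses the curvature contribution; the $\varepsilon$-contribution from ${\rm dis}(\iota)$ carries no $\sin^2\theta$ factor, and at $\theta_0=\pi$ a side perturbation of size $\varepsilon$ gives $1+\cos\theta_u^G\sim\varepsilon/\ell$, hence an error $\sim\sqrt{\varepsilon/\ell}$, which is $o(\ell)$ but not $\mathcal{O}(\ell^2)$ unless $\varepsilon\ll\ell^5$. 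So the ``main obstacle'' you correctly identify is left open by your sketch (the key steps are ``one checks'' and ``closes the estimate''), and what your argument actually establishes is the stated bound for angles bounded away from $0$ and $\pi$ --- which, to be fair, is also all that the paper's own appeal to smoothness of $\arccos$ establishes.
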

We now turn to our recursive construction of $\theta_u$. Essentially we wish to prove the following:
\begin{theorem}\label{theorem: AngleAlgorithm}
	Let $G$ be a graph and let $\ell,\:r>0$ with $r\ll \ell$. For any positive integer $M$ such that
	\begin{align}\label{constraint: AngleTrace}
		M\ell^2\ll 1,
	\end{align}
	and for each $u\in G$, we can recursively construct a mapping $\theta_u:S_{\ell,r}^G(u)\rightarrow \mathbb{R}^{D}$ satisfying the following properties: 
	\begin{enumerate}
		\item $\theta_u(S_{\ell,r}^G(u))=U\cup \set{0}$ where $U\subseteq S_{\ell,r}^D$ is a Euclidean trace grid on $2M^{D-1}$ points. 
		\item If $\rho_{GH}(\mathcal{M},G)=\varepsilon/2$ for some $\varepsilon\ll r\ll \ell$ such that
		\begin{align}\label{constraint: EpsilonTrace}
			\varepsilon\ll \ell^3  && 2M^{D-1}\leq \mathcal{O}\left(\frac{r\ell^{D-1}}{\varepsilon^D}\right), 
		\end{align}
		then $|\theta_u^{-1}(U)|=2M^{D-1}$ and for any $\varepsilon$-isometry $\iota:G\rightarrow\mathcal{M}$ we have 
		\begin{align}
			\rho_{\mathcal{M}}(\exp_{\iota(u)}(\theta_u(v)),\iota(v))=\mathcal{O}(\ell^3)
		\end{align}
		for all $v\in \theta_u^{-1}(U)$. 
	\end{enumerate}
\end{theorem}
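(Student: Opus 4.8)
The plan is to define $\theta_u$ by a recursion internal to $G$ that tries to grow a copy of a Euclidean trace grid inside the shell $S_{\ell,r}^G(u)$, using the relative graph angle $\theta_u^G$ as a surrogate for Euclidean angles, and then to show that under the stated hypotheses the recursion in fact reproduces $\exp_{\iota(u)}$ of a genuine trace grid up to an error $\mathcal{O}(\ell^3)$. Property $1$ is built into the construction; property $2$ is the substantive claim, and it rests on Lemma~\ref{lemma: AngularErrorGHProximal}, the distortion estimate \eqref{equation: DistortionExponential}, and the fact (Lemma~\ref{lemma: GHProximity=ExistenceNearIsometry}) that $\rho_{GH}(\mathcal{M},G)=\varepsilon/2$ furnishes an $\varepsilon$-isometry $\iota$, which by definition realises $\iota(G)$ as an $\varepsilon$-net of $\mathcal{M}$ with distortion $\le\varepsilon$.

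\textbf{Construction.} Fix the abstract multi-index set of a trace grid on $2M^{D-1}$ points, with its grid-graph structure (indices adjacent when they differ by $\pm 1$ in one slot). Declare an arbitrary $v_0\in S_{\ell,r}^G(u)$ to be the image of the root index $(1,\dots,1)$; rotational invariance of the trace and of the grid makes this a harmless choice of frame. Traverse the index set in breadth-first order from the root. To place the image of the next index $\mathbf{m}$, search $S_{\ell,r}^G(u)$ for an unused vertex $v$ whose relative angles $\theta_u^G(v,v')$ to the already-placed vertices $v'$ all agree, within a tolerance $\tau$ taken to be a fixed large multiple of $\ell^2$, with the \emph{exact} grid angles $\theta_D(u_{\mathbf{m}},u_{\mathbf{m}'})$; computing the target angles from the exact grid coordinates is what keeps errors from accumulating along the recursion. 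If such a $v$ is found, set $\theta_u(v)=(\rho_G(u,v),u_{\mathbf{m}})$, reading the radial coordinate off the graph distance; if none is found, assign an arbitrary unused shell vertex with a default radius (a cosmetic fallback ensuring termination on every graph). Unselected vertices are sent to $0$. This produces a mapping with $\theta_u(S_{\ell,r}^G(u))=U\cup\{0\}$ for a trace grid $U$ --- property $1$ --- provided the shell holds at least $2M^{D-1}$ vertices, which the packing bound in \eqref{constraint: EpsilonTrace} will guarantee.

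\textbf{Correctness under Gromov--Hausdorff proximity.} Assume $\rho_{GH}(\mathcal{M},G)=\varepsilon/2$ with $\varepsilon\ll r\ll\ell$ and $M$ satisfying \eqref{constraint: AngleTrace} and \eqref{constraint: EpsilonTrace}. Write $p=\iota(u)$ and identify a geodesic ball of radius $\mathcal{O}(\ell)$ about $p$ with a ball in $T_p\mathcal{M}\cong\mathbb{R}^D$ via $\exp_p$, a radial isometry whose distortion (and that of $\exp_p^{-1}$) is $\mathcal{O}(\ell^3)$ by \eqref{equation: DistortionExponential}. The verification has three ingredients, each controlled by a different scale.
\begin{enumerate}
	\item[(i)] \emph{Unambiguous matching.} By Lemma~\ref{lemma: AngularErrorGHProximal} the relative graph angle $\theta_u^G(v,v')$ agrees to within $\mathcal{O}(\ell^2)$ with the Euclidean angle between $\exp_p^{-1}(\iota(v))$ and $\exp_p^{-1}(\iota(v'))$. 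Maintaining inductively the invariant that each already-placed vertex lies $\mathcal{O}(\ell^3)$-close (in $\mathcal{M}$) to $\exp_p$ of its grid point, the target angles used when placing $\mathbf{m}$ are within $\mathcal{O}(\ell^2)$ of the true ones. Since distinct grid directions are $\pi/M$ apart and \eqref{constraint: AngleTrace} forces $\ell^2\ll M^{-1}$, with $\tau$ a fixed large multiple of $\ell^2$ the recursion accepts $v$ for $\mathbf{m}$ precisely when the direction of $\exp_p^{-1}(\iota(v))$ lies within $\mathcal{O}(\ell^2)$ of $u_{\mathbf{m}}$; in particular no vertex can be accepted for two distinct indices.
	\item[(ii)] \emph{Existence and bijectivity.} For each $\mathbf{m}$ and any admissible radius $t$, the $\varepsilon$-net property supplies a vertex within $\varepsilon$ of the point $\exp_p(t\,\hat u_{\mathbf{m}})\in\mathcal{M}$; as $\varepsilon\ll\ell^3$ and $\exp_p^{-1}$ distorts by $\mathcal{O}(\ell^3)$, its image under $\exp_p^{-1}$ has direction within $\mathcal{O}(\ell^2)$ of $u_{\mathbf{m}}$, so by (i) it is accepted. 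A vertex used for one index is, by (i), neither available nor needed for another, so the recursion fills all $2M^{D-1}$ indices with distinct vertices: $|\theta_u^{-1}(U)|=2M^{D-1}$. The bound $2M^{D-1}\le\mathcal{O}(r\ell^{D-1}/\varepsilon^D)$ --- together with $\varepsilon M\ll\ell$, which follows from \eqref{constraint: AngleTrace} and \eqref{constraint: EpsilonTrace} --- guarantees the shell holds enough vertices even if $\iota(G)$ samples $\mathcal{M}$ unevenly, since an $\varepsilon$-net of the shell (volume of order $r\ell^{D-1}$) has at least of order $r\ell^{D-1}/\varepsilon^D$ points.
	\item[(iii)] \emph{Accuracy.} If $v$ is matched to $\mathbf{m}$ then $\lVert\exp_p^{-1}(\iota(v))\rVert=\rho_{\mathcal{M}}(p,\iota(v))=\rho_G(u,v)\pm\varepsilon$ (radial isometry, plus $\mathrm{dis}(\iota)\le\varepsilon$) while its direction is within $\mathcal{O}(\ell^2)$ of $\hat u_{\mathbf{m}}$ by (i); hence $\lvert\exp_p^{-1}(\iota(v))-\theta_u(v)\rvert=\mathcal{O}(\varepsilon+(\ell+r)\ell^2)=\mathcal{O}(\ell^3)$ in $\mathbb{R}^D$, and applying $\exp_p$ (distortion $\mathcal{O}(\ell^3)$) gives $\rho_{\mathcal{M}}(\exp_{\iota(u)}(\theta_u(v)),\iota(v))=\mathcal{O}(\ell^3)$, which is property $2$. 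This also restores the invariant of (i), closing the induction.
\end{enumerate}

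\textbf{The main obstacle.} None of the three estimates is individually difficult; the real content is that the relevant scales can be made mutually consistent, which is exactly what \eqref{constraint: AngleTrace}--\eqref{constraint: EpsilonTrace} encode. The matching tolerance must lie in the window $\ell^2\ll\tau\ll M^{-1}$, so that Lemma~\ref{lemma: AngularErrorGHProximal} never rejects a legitimate vertex while adjacent grid directions stay resolved; the net scale and exponential distortion must satisfy $\varepsilon+\ell^3\ll\ell M^{-1}$, so that distinct indices resolve to distinct vertices and the final error is $\mathcal{O}(\ell^3)$ rather than $\mathcal{O}(\ell M^{-1})$; and the shell must stay densely enough populated to fill all $2M^{D-1}$ indices. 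Checking that the stated regime makes all of these windows nonempty, and that the recursion is genuinely algorithmic --- terminating and outputting a trace grid on arbitrary graphs, not merely Gromov--Hausdorff-proximal ones --- is where the care is needed.
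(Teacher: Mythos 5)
Your proposal is correct at the same level of rigor as the paper and takes essentially the same route: an intrinsic greedy recursion matching shell vertices to grid points via the relative angle of Lemma \ref{lemma: AngularErrorGHProximal} (error $\mathcal{O}(\ell^2)$, hence $\varepsilon\ll\ell^3$), existence from the $\varepsilon$-net property, injectivity from the separation-versus-distortion comparison $\varepsilon+\ell^3\ll\ell/M$, non-accumulation via long-range angular constraints, and the counting bound $2M^{D-1}\leq\mathcal{O}(r\ell^{D-1}/\varepsilon^D)$ for the shell population. The only difference is cosmetic: you check centre-based angles against the exact grid angles of all previously placed vertices, where the paper encodes the same long-range rigidity through its $(M,\ell)$-adjacency, $(M,\ell)$-perpendicularity and in-line relations.
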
 
\begin{definition}
	The function $\theta_u$ defined in theorem \ref{theorem: AngleAlgorithm} above is called the \textit{angular assignment} at $u\in G$ on $n=2M^{D-1}$ vertices. We then define the \textit{trace} of a function $f:S_{\ell,r}^G(u)\rightarrow\mathbb{R}$ at $u$ via the expression
	\begin{align}
		\text{tr}_G^u(f)= \frac{\pi^{D-1}}{\omega_D M^{D-1}}\sum_{v\in S_{\ell,r}^G(u)}\mathfrak{d}(\theta_u(v))f(v),
	\end{align}
	where the sum is over all multi
\end{definition}
The point of this definition is ultimately the following which shows that if we have a function on the graph that approximates a bilinear form on a manifold then the traces agree up to small errors:
\begin{corollary}\label{corollary: TraceApproximation}
	Let $\mathcal{M}$ be a Riemannian manifold and let $T$ be a bilinear form at $p\in \mathcal{M}$. Let $G$ be a graph and $\iota:G\hookrightarrow \mathcal{M}$ an $\varepsilon$-isometry with $p=\iota(u)$ for some $u\in G$. Let $\ell,\:r$ and $M$ be as in theorem \ref{theorem: AngleAlgorithm}, i.e. we have constraints \ref{constraint: AngleTrace} and \ref{constraint: EpsilonTrace}. Finally suppose there is a mapping $f:S_{\ell,r}^G(u)\rightarrow \mathbb{R}$ such that
	\begin{align}
		|f(v)-T(\theta_u(v),\theta_u(v))|=\mathcal{O}(\sigma)
	\end{align}
	for all $v\in \theta_u^{-1}(U)$. Then
	\begin{align}
		\left|\text{tr}_D(T)-\text{ tr}_G^u(f)\right|=\mathcal{O}(\max(M^{-1},\sigma)).
	\end{align}
\end{corollary}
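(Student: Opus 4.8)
The plan is to establish the estimate by a two-step triangle inequality, interposing the Euclidean trace-grid functional $\text{tr}_U(T)$ between $\text{tr}_D(T)$ and $\text{ tr}_G^u(f)$, where $U\subseteq S_{\ell,r}^D$ is the Euclidean trace grid produced by Theorem \ref{theorem: AngleAlgorithm} (whose hypotheses \ref{constraint: AngleTrace} and \ref{constraint: EpsilonTrace} are in force by assumption). Writing
\begin{align*}
\left|\text{tr}_D(T)-\text{ tr}_G^u(f)\right|\leq\left|\text{tr}_D(T)-\text{ tr}_U(T)\right|+\left|\text{ tr}_U(T)-\text{ tr}_G^u(f)\right|,
\end{align*}
the first term on the right is already $\mathcal{O}(M^{-1})$ by the trace-grid Riemann-sum estimate $|\text{tr}_D(T)-\text{ tr}_U(T)|=\mathcal{O}(M^{-1})$ recorded just after the definition of the Euclidean trace grid. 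Everything therefore reduces to controlling the second term.

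First I would invoke part 2 of Theorem \ref{theorem: AngleAlgorithm}: under \ref{constraint: AngleTrace} and \ref{constraint: EpsilonTrace} the set $\tilde U:=\theta_u^{-1}(U)$ has exactly $2M^{D-1}=|U|$ elements, so, since $\theta_u(S_{\ell,r}^G(u))=U\cup\set{0}$, the restriction $\theta_u|_{\tilde U}:\tilde U\rightarrow U$ is a bijection; write $v_{\textbf{m}}$ for the vertex sent to the grid point with multi-index $\textbf{m}$. Because $\mathfrak{d}(0)=0$ by the extension convention following Eq.~\ref{equation: SphericalVolumeElement}, the vertices in $S_{\ell,r}^G(u)\backslash\tilde U$ contribute nothing to $\text{ tr}_G^u(f)$, so after reindexing by $\textbf{m}$ we obtain
\begin{align*}
\text{ tr}_U(T)-\text{ tr}_G^u(f)=\frac{\pi^{D-1}}{\omega_D M^{D-1}}\sum_{\textbf{m}}\left(\mathfrak{d}(u_{\textbf{m}})\,T(u_{\textbf{m}},u_{\textbf{m}})-\mathfrak{d}(\theta_u(v_{\textbf{m}}))\,f(v_{\textbf{m}})\right).
\end{align*}

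Next I would bound the general summand. The angular assignment records the exact angular coordinates $u_{\textbf{m}}$ of the grid point, and $\mathfrak{d}$ depends only on the angular part of its argument, so $\mathfrak{d}(\theta_u(v_{\textbf{m}}))=\mathfrak{d}(u_{\textbf{m}})$; since $\mathfrak{d}$ is moreover bounded on $\mathbb{S}^{D-1}$, each summand is at most a fixed constant (depending only on $D$) times $|T(u_{\textbf{m}},u_{\textbf{m}})-f(v_{\textbf{m}})|$, which I would split as
\begin{align*}
|T(u_{\textbf{m}},u_{\textbf{m}})-f(v_{\textbf{m}})|\leq\left|T(u_{\textbf{m}},u_{\textbf{m}})-T(\theta_u(v_{\textbf{m}}),\theta_u(v_{\textbf{m}}))\right|+\left|T(\theta_u(v_{\textbf{m}}),\theta_u(v_{\textbf{m}}))-f(v_{\textbf{m}})\right|.
\end{align*}
The second piece is $\mathcal{O}(\sigma)$ by hypothesis. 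The first piece is where the precise properties of the angular assignment enter: by the construction of $\theta_u$ in Theorem \ref{theorem: AngleAlgorithm}, $\theta_u(v_{\textbf{m}})$ agrees with the direction entering the $\textbf{m}$-th summand of $\text{ tr}_U(T)$ up only to the $\ell$-order distortions quantified by Lemma \ref{lemma: AngularErrorGHProximal} and Eq.~\ref{equation: DistortionExponential}; since $T$ is a fixed, hence bounded and locally Lipschitz, bilinear form this piece is $\mathcal{O}(\ell^2)$, which is $\mathcal{O}(M^{-1})$ by constraint \ref{constraint: AngleTrace}.

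Finally I would sum: there are $2M^{D-1}$ multi-indices, each summand contributes $\mathcal{O}(\max(M^{-1},\sigma))$ uniformly in $\textbf{m}$, and the prefactor supplies the compensating factor $M^{-(D-1)}$ ($D$ being fixed), so the powers of $M$ cancel and $|\text{ tr}_U(T)-\text{ tr}_G^u(f)|=\mathcal{O}(\max(M^{-1},\sigma))$; combined with the first step this yields the corollary. I expect the genuine obstacle to lie not in any individual estimate but in exactly this scale bookkeeping: the $2M^{D-1}$ per-point errors are amplified by the sheer number of grid points and only damped again by the $M^{-(D-1)}$ prefactor, so each must be controlled \emph{uniformly} in $\textbf{m}$, and it is precisely constraint \ref{constraint: AngleTrace} ($M\ell^2\ll1$) that forces the residual $\ell$-order geometric distortions to be absorbed into the $\mathcal{O}(M^{-1})$ Riemann-sum error rather than surfacing as a separate, uncontrolled term. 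The remaining ingredients are a routine assembly of facts already established.
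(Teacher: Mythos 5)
Your proposal is correct and follows essentially the same route as the paper: interpose $\mathrm{tr}_U(T)$, use the $\mathcal{O}(M^{-1})$ Riemann-sum estimate, exploit the injectivity of $\theta_u$ on $\theta_u^{-1}(U)$ together with $\mathfrak{d}(0)=0$ to reduce to a sum over multi-indices, and bound each summand by the hypothesis, with the $M^{-(D-1)}$ prefactor cancelling the $2M^{D-1}$ terms. The only deviation is your extra term $\left|T(u_{\textbf{m}},u_{\textbf{m}})-T(\theta_u(v_{\textbf{m}}),\theta_u(v_{\textbf{m}}))\right|$, which is in fact identically zero because $\theta_u(v_{\textbf{m}})=u_{\textbf{m}}$ by construction (the geometric distortion you worry about enters only through the hypothesis on $f$, not here), so your $\mathcal{O}(\ell^2)$ bound for it is superfluous but harmless.
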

\subsection{The Main Theorem}
In this section we prove our main theorem, viz. there is a discrete Einstein-Hilbert action that converges to its counterpart on suitable sequences of graphs that converge to a compact Riemannian manifold.
We first define the discrete Einstein-Hilbert action.
\begin{definition}
    Let $G$ be a graph with $N$ vertices, let $\delta,\:\ell,\:r>0$ be real numbers and let $M$ be an integer smaller than $N$. Then we define the \textit{discrete Einstein-Hilbert action} for parameters $\delta$, $\ell$, $r$ and $M$ via the assignment:
    \begin{widetext}
	\begin{align}\label{equation: DiscreteEHAction}
		\mathcal{A}_{DEH}(G;\delta,\ell,r,M)=\frac{1}{N\delta^2}\sum_{u\in G}\frac{4\pi^{D-1}(D+2)}{M^{D-1}}\sum_{v\in S^G_{\ell,r}(u)}\mathfrak{d}\left(\theta_u(v)\right)\kappa_G^\delta(u,v),
	\end{align}
	\end{widetext}
	where for each $u\in G$, $\theta_u$ is an angular assignment on $2M^{D-1}$ points at $u$. For any given compact Riemannian manifold $\mathcal{M}$, the \textit{error} in the discrete Einstein-Hilbert action is thus defined:
	\begin{align}
	    \delta\mathcal{A}_{DEH}(\delta,\ell,r,M)&=\left|\mathcal{A}_{DEH}(G;\delta,\ell,r,M)-\mathcal{A}_{EH}(\mathcal{M}))\right|
	\end{align}
\end{definition}
\begin{theorem}\label{theorem: MainTheorem}
	Let $G$ be a graph with $N$ vertices, let $\mathcal{M}$ be a compact Riemannian $D$-manifold, $D>1$, and let $\iota:G\rightarrow \mathcal{M}$ be an $\varepsilon(N)$-isometry such that the balls $\set{B_{\varepsilon(1-\alpha)}^{\mathcal{M}}(\iota(u))}_{u\in G}$ are pairwise disjoint for some $\alpha>0$ with $\alpha\ll 1$. Let $R$ be the Ricci scalar of $\mathcal{M}$ and suppose that the positive constants $K$ and $\tilde{K}$ are such that $R\vert B_{\varepsilon}^{\mathcal{M}}(\iota(u))$ is $K$-Lipschitz for each $u\in G$, the uniform norm $\vert \vert R\vert \vert_{\infty}\leq \tilde{K}$ and
	\begin{align}\label{constraint: Manifold}
		K\varepsilon\ll 1\ll K && \tilde{K}\alpha \ll 1.
	\end{align}
	Also let:
	\begin{subequations}\label{equation: Scales}
	\begin{align}
		\varepsilon(N)&=N^{-\frac{1}{D}} & \delta(N)&= N^{-a} & \ell(N)&= N^{-b} \\ r(N)&= N^{-c} & M(N)&=N^{d}.
	\end{align}
	\end{subequations}
	For $N$ sufficiently large, any choice of numbers $a,\:b,\:c,\:d>0$ such that
	\begin{subequations}\label{constraint: Scales}
		\begin{align}
			b<a,\:c<\frac{1}{D}\\
			(3a+b)D<1<4aD \\
			\frac{1}{2}d<b\leq \frac{1-c}{D-1}-d
		\end{align}
	\end{subequations}
	ensures that
	\begin{align}\label{equation: ActionError}
		\delta\mathcal{A}_{DEH}(\delta,\ell,r,M)=\mathcal{O}\left(K\varepsilon,\tilde{K}\alpha,\sigma\right)
	\end{align}
	where
	\begin{align}
	    \sigma=\max\left(N^{-a(3+2b)},N^{-\frac{1-(c+(D-1)b)}{D-1}}\right).
	\end{align}
	Under these conditions, $\sigma$ is small $(\ll 1)$ for sufficiently large $N$.
\end{theorem}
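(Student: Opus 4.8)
The plan is to exhibit $\mathcal{A}_{DEH}(G;\delta,\ell,r,M)$ as the endpoint of a three-step chain of approximations running backwards from $\mathcal{A}_{EH}(\mathcal{M})$ (Eq.~\ref{equation: EHAction}), the steps being the integral approximation of Appendix~\ref{appendix: Integral}, the trace approximation of Corollary~\ref{corollary: TraceApproximation}, and the Ollivier-curvature approximation of Appendix~\ref{appendix: Semidiscrete} together with the Ollivier--Ricci expansion Eq.~\ref{equation: OllivierError}. Writing $p_u:=\iota(u)$ and unwinding the definition of $\mathrm{tr}_G^u$, the inner double sum of $\mathcal{A}_{DEH}$ is, up to the fixed dimensional prefactor carried by the definition, the weighted vertex sum $\sum_{u\in G}\omega_D\varepsilon^D\,\mathrm{tr}_G^u\!\big(\tfrac{2(D+2)}{\delta^2}\kappa_G^\delta(u,\cdot)\big)$ with $\varepsilon^D=1/N$; that is, $\mathcal{A}_{DEH}$ is a Riemann sum over the cover $\{B_\varepsilon^{\mathcal{M}}(p_u)\}_{u\in G}$ of $\mathcal{M}$ whose summand, being an intrinsic trace, should approximate $\mathrm{tr}_D(\mathrm{Ric}_{p_u})=R(p_u)$. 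The reference quantity is therefore $\sum_u\omega_D\varepsilon^D R(p_u)$, itself a Riemann sum for $\mathcal{A}_{EH}(\mathcal{M})$, and one splits
\[\delta\mathcal{A}_{DEH}\le\Big|\mathcal{A}_{EH}(\mathcal{M})-\textstyle\sum_u\omega_D\varepsilon^D R(p_u)\Big|+\Big|\textstyle\sum_u\omega_D\varepsilon^D R(p_u)-\mathcal{A}_{DEH}\Big|.\]
The first term is $\mathcal{O}(K\varepsilon,\tilde K\alpha)$ by Appendix~\ref{appendix: Integral}: disjointness of $\{B_{\varepsilon(1-\alpha)}^{\mathcal{M}}(p_u)\}$ controls the overlap of the cover (cost $\mathcal{O}(\tilde K\alpha)$ via $\|R\|_\infty\le\tilde K$), and $K$-Lipschitzness of $R|B_\varepsilon^{\mathcal{M}}(p_u)$ controls the error of treating $R$ as constant on each ball (cost $\mathcal{O}(K\varepsilon)$), both in the regime $K\varepsilon\ll1$, $\tilde K\alpha\ll1$ of (\ref{constraint: Manifold}).

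For the second term, fix $u$ and apply Corollary~\ref{corollary: TraceApproximation} with $T=\mathrm{Ric}_{p_u}$ and $f=f_u:=\tfrac{2(D+2)}{\delta^2}\kappa_G^\delta(u,\cdot)$; the only thing to check is its hypothesis, $|f_u(v)-\mathrm{Ric}_{p_u}(\theta_u(v),\theta_u(v))|=\mathcal{O}(\sigma)$ for $v\in\theta_u^{-1}(U)$, which I would obtain from a further three-way split of the left-hand side. (i) $\tfrac{2(D+2)}{\delta^2}\big|\kappa_G^\delta(u,v)-\kappa_{\mathcal{M}}^\delta(p_u,\iota(v))\big|$ is bounded by the semidiscrete optimal-transport estimate of Appendix~\ref{appendix: Semidiscrete}; this is the step coupling $\varepsilon$ and $\delta$, since the evenly-spaced points underlying the discrete ball measure $m_u^\delta$ must remain resolvable after the $\varepsilon+\delta^3$ distortion from the near-isometry and the exponential map. (ii) The Ollivier--Ricci correction $\tfrac{2(D+2)}{\delta^2}\,\mathcal{O}\!\big(\delta^2(\delta+\rho_{\mathcal{M}}(p_u,\iota(v)))\big)=\mathcal{O}(\delta+\ell)$ comes from Eq.~\ref{equation: OllivierError}, using $v\in S_{\ell,r}^G(u)$ and the near-isometry to write $\rho_{\mathcal{M}}(p_u,\iota(v))=\ell+\mathcal{O}(r,\varepsilon)$. (iii) Replacing the true initial unit direction $\exp_{p_u}^{-1}(\iota(v))/\|\cdot\|$ by $\theta_u(v)/\|\cdot\|$ costs $\mathcal{O}(\ell^2)$, by property~2 of Theorem~\ref{theorem: AngleAlgorithm} (ultimately Lemma~\ref{lemma: AngularErrorGHProximal}) and the smoothness of $\mathrm{Ric}$. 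Substituting the power laws (\ref{equation: Scales}) collects these into the stated $\sigma$; Corollary~\ref{corollary: TraceApproximation} then gives per-vertex error $\mathcal{O}(\max(M^{-1},\sigma))$, and summing over the $N$ vertices against the weights $\omega_D\varepsilon^D=\omega_D/N$ keeps the total at $\mathcal{O}(\max(M^{-1},\sigma))$, which (\ref{constraint: Scales}) reduces to $\mathcal{O}(\sigma)$.

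The real obstacle is not any one estimate but showing that the three ingredients are \emph{simultaneously} available: each holds only inside its own window of admissible scales, and those windows must be shown to intersect — the power laws (\ref{equation: Scales})--(\ref{constraint: Scales}) are in effect a certificate that the intersection is non-empty. The constraints compete. The trace discretisation wants $M$ large (its error is $\mathcal{O}(M^{-1})$), but Theorem~\ref{theorem: AngleAlgorithm} requires $M\ell^2\ll1$ and the counting bound $2M^{D-1}=\mathcal{O}(r\ell^{D-1}\varepsilon^{-D})$ so that $\exp_{p_u}(U)$ does not collapse onto a sub-$\varepsilon$-net under the distortion $\varepsilon+\ell^3$. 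The radius $\ell$ wants to be small so that Eq.~\ref{equation: OllivierError} and the $\mathcal{O}(\ell^2)$ angular error are sharp, but not so small that $\varepsilon\ll\ell^3$ or $\varepsilon\ll r\ll\ell$ fail. And $\delta$ wants to be small for the Ollivier--Ricci expansion and the semidiscrete step, yet the factor $\delta^{-2}$ in the action amplifies every curvature error, while $\varepsilon=N^{-1/D}$ is tied to $N$ and cannot be varied independently. As the informal outline warns, the naive choice — taking $M$ as large as the $M^{-1}$ error alone would suggest — already violates the angular-assignment constraints. I would therefore proceed in the order: (1) prove the two master inequalities above and pin down the reference sum and dimensional prefactor; (2) discharge the first term via Appendix~\ref{appendix: Integral}; (3) assemble the per-vertex bound $\sigma$ from Appendices~\ref{appendix: Trace} and \ref{appendix: Semidiscrete}, Theorem~\ref{theorem: AngleAlgorithm}, and Eq.~\ref{equation: OllivierError}, and feed it to Corollary~\ref{corollary: TraceApproximation}; (4) substitute (\ref{equation: Scales}) and verify mechanically that (\ref{constraint: Scales}) implies every hypothesis invoked in (2)--(3) and that the region of admissible $(a,b,c,d)$ is non-empty (e.g.\ $a\in(\tfrac1{4D},\tfrac1{3D})$, then $b$, then $c$ and $d$, successively small); (5) conclude that $\sigma\to0$ as $N\to\infty$. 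Step (4) — the simultaneous scale bookkeeping — is where the difficulty lies.
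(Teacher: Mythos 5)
Your proposal follows essentially the same route as the paper's own proof: bound the per-vertex discrepancy between $\tfrac{2(D+2)}{\delta^2}\kappa_G^\delta(u,\cdot)$ and $\mathrm{Ric}$ via the semidiscrete transport lemma of Appendix~\ref{appendix: Semidiscrete} plus Eq.~\ref{equation: OllivierError}, replace $\exp_{p}^{-1}(\iota(v))$ by $\theta_u(v)$ via Theorem~\ref{theorem: AngleAlgorithm}, feed this into Corollary~\ref{corollary: TraceApproximation} to recover $R(\iota(u))$, convert the vertex sum to $\mathcal{A}_{EH}$ with Lemma~\ref{lemma: ApproximatingIntegrals}, and finally verify that the power-law exponents make all scale windows simultaneously non-empty (your suggested choice $a\in(\tfrac{1}{4D},\tfrac{1}{3D})$, then $b,c,d$, parallels the paper's explicit $a=c=\tfrac{1+\epsilon}{4D}$, $b=\tfrac{1-4\epsilon}{4D}$). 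The only differences are cosmetic (you insert the intermediate Riemann sum $\sum_u\omega_D\varepsilon^D R(p_u)$ explicitly, and quote $\mathcal{O}(\ell^2)$ rather than the paper's $\mathcal{O}(\ell^3)$ for the angular-replacement step, both subdominant), so the proposal is correct in approach and substance.
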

\begin{remark}
	We have two types of constraint in the statement of the above theorem: the constraints \ref{constraint: Scales} are basically constraints on the relevant scales of the problem required in order to ensure convergence at various different levels. There is a real possibility \textit{a priori} that these constraints are incompatible and part of the proof will precisely be an attempt to show that the above constraints are in fact compatible. The second type of constraint are given by the inequalities \ref{constraint: Manifold} and are required due to their role in lemma \ref{lemma: ApproximatingIntegrals}. Naively, these constraints are essentially presented as a constraint on the type of limiting manifold for which the above discrete Einstein-Hilbert action can be shown to converge. Certainly the quantities $K$ and $\tilde{K}$ are defined as bounds on the Ricci scalar of the manifold. However, it is perhaps better to interpret these as restrictions on the types of convergent sequence for which the discrete Einstein-Hilbert action will converge.
\end{remark}
\begin{remark}
	Note that to make the action $\mathcal{A}_{DEH}$ intrinsic, we need to be able to specify $D$ independently of the limiting manifold; noting that involved in the above claim is the fact that small balls at each point converge to small balls in the manifold, we see that the intrinsic Hausdorff dimension---defined as the power relating the radius of a ball to its volume---also converges and we may characterise $D$ intrinsically in this manner.
\end{remark}
As an immediate corollary to the above theorem we thus have:
\begin{corollary}
	For any space of (weighted) graphs $\Omega$, there is a discrete Einstein-Hilbert action $\mathcal{A}_{DEH}:\Omega\rightarrow \mathbb{R}$ given by Eq.~\ref{equation: DiscreteEHAction} such that $\mathcal{A}_{DEH}(\omega_N)\rightarrow \mathcal{A}_{EH}(\mathcal{M})$
	for any suitable sequence of graphs $\set{\omega_N}_{N\in \mathbb{N}^+}\subseteq \Omega$ such that $|\omega_N|=N$ and $\omega_N\rightarrow \mathcal{M}$ sufficiently rapidly in the sense of Gromov-Hausdorff. 
\end{corollary}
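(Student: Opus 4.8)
The statement is an immediate consequence of Theorem~\ref{theorem: MainTheorem}, so the substance of any proof plan is really a plan for the theorem; I will sketch that and then return to the corollary. For the theorem, the plan is to realise the three-step decomposition anticipated in Section~\ref{subsection: InformalOutline}: approximate the integral over $\mathcal{M}$ by a vertex sum, approximate the metric trace by the trace grid of Theorem~\ref{theorem: AngleAlgorithm}, and approximate the manifold Ollivier curvature by the graph one. The first move is purely algebraic: using $\varepsilon(N)^{D}=1/N$ and the definition of ${\rm tr}_G^u$, the action Eq.~\ref{equation: DiscreteEHAction} can be written (up to the universal constant $4\omega_D(D+2)$) as
\begin{align}
	\mathcal{A}_{DEH}(G;\delta,\ell,r,M)=\sum_{u\in G}\omega_D\varepsilon^{D}\,\frac{4(D+2)}{\delta^{2}}\,{\rm tr}_G^u(\kappa_G^\delta(u,\cdot)),
\end{align}
which exhibits the right-hand side as a Riemann-type sum of the local quantity $\tfrac{4(D+2)}{\delta^{2}}{\rm tr}_G^u(\kappa_G^\delta(u,\cdot))$ against the cover $\set{B_\varepsilon^{\mathcal{M}}(\iota(u))}_{u\in G}$ of $\mathcal{M}$, exactly in the sense of appendix~\ref{appendix: Integral}.

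Since $R={\rm tr}({\rm Ric})$ and $\mathcal{A}_{EH}(\mathcal{M})=\int_{\mathcal{M}}{\rm dvol}_{\mathcal{M}}(x)\,R(x)$, a triangle inequality splits $\delta\mathcal{A}_{DEH}$ into the integration error $\bigl|\mathcal{A}_{EH}(\mathcal{M})-\sum_{u}\omega_D\varepsilon^{D}R(\iota(u))\bigr|$ and the weighted sum over $u$ of the per-vertex defects $\bigl|\tfrac{4(D+2)}{\delta^{2}}{\rm tr}_G^u(\kappa_G^\delta(u,\cdot))-R(\iota(u))\bigr|$. The first is $\mathcal{O}(K\varepsilon,\tilde K\alpha)$ by lemma~\ref{lemma: ApproximatingIntegrals}, whose hypotheses are precisely the constraints~\ref{constraint: Manifold} together with the pairwise disjointness of $\set{B^{\mathcal{M}}_{\varepsilon(1-\alpha)}(\iota(u))}$; the same lemma (applied to the constant function, or to the bound $\vert\vert R\vert\vert_{\infty}\le\tilde K$) shows $\sum_u\omega_D\varepsilon^{D}$ differs from ${\rm vol}(\mathcal{M})=\mathcal{O}(1)$ by $\mathcal{O}(K\varepsilon,\tilde K\alpha)$, so the theorem reduces to bounding the per-vertex defect uniformly in $u$.

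For fixed $u$ with $p=\iota(u)$, I would invoke corollary~\ref{corollary: TraceApproximation} with $T={\rm Ric}_p$ (whose Euclidean trace in normal coordinates is $R(p)$, by Eq.~\ref{equation: TraceIntegral}) and with $f(v)$ the appropriately rescaled graph Ollivier curvature implicit in Eq.~\ref{equation: DiscreteEHAction}. Its hypothesis, $\vert f(v)-{\rm Ric}_p(\theta_u(v),\theta_u(v))\vert=\mathcal{O}(\sigma)$, is verified by chaining three estimates, each then amplified by the $\delta^{-2}$ pulled out in front: (i) $\kappa_G^\delta(u,v)\approx\kappa_{\mathcal{M}}^\delta(p,\iota(v))$ from the semidiscrete optimal-transport bound of appendix~\ref{appendix: Semidiscrete}; (ii) $\kappa_{\mathcal{M}}^\delta(p,\iota(v))=\tfrac{\delta^{2}}{2(D+2)}{\rm Ric}_p(V,V)+\mathcal{O}(\delta^{2}(\delta+\ell))$ from Eq.~\ref{equation: OllivierError}, with $V$ the unit direction at $p$ toward $\iota(v)$; and (iii) ${\rm Ric}_p(V,V)={\rm Ric}_p(\theta_u(v),\theta_u(v))+\mathcal{O}(\ell^{2})$, because Theorem~\ref{theorem: AngleAlgorithm} places $\exp_p(\theta_u(v))$ within $\mathcal{O}(\ell^{3})$ of $\iota(v)$, so the directions differ in angle by $\mathcal{O}(\ell^{2})$ by lemma~\ref{lemma: AngularErrorGHProximal}, while $\vert\vert{\rm Ric}_p\vert\vert$ is bounded. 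Collecting these contributions and using the scale choices~\ref{constraint: Scales} to identify the dominant term produces $\sigma$; corollary~\ref{corollary: TraceApproximation} then bounds the per-vertex defect by $\mathcal{O}(\max(M^{-1},\sigma))$, and putting the pieces together yields Eq.~\ref{equation: ActionError}.

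The hard part is the scale analysis. Each sub-result imposes inequalities on $a,b,c,d$: the angular construction needs $M\ell^{2}\ll1$ (constraint~\ref{constraint: AngleTrace}) and $\varepsilon\ll\ell^{3}$, $2M^{D-1}\le\mathcal{O}(r\ell^{D-1}/\varepsilon^{D})$ (constraints~\ref{constraint: EpsilonTrace}); the semidiscrete bound needs enough vertices per $\delta$-ball yet large enough spacing to keep matched points identifiable; and one still needs $M^{-1}\to0$ and $\sigma\to0$. One must check that the region cut out by~\ref{constraint: Scales} is non-empty and implies all of these simultaneously — this is exactly the consistency claim flagged in the remark after the theorem, and I expect it to be the real obstacle. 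Granting the theorem, the corollary is then formal: fix $D$ intrinsically via the Hausdorff scaling of ball volumes (second remark), fix one admissible tuple $(a,b,c,d)$ and set $\delta,\ell,r,M$ as functions of $N=\vert\omega\vert$ by Eq.~\ref{equation: Scales}, and at each vertex pick an angular assignment, which exists for every finite graph by Theorem~\ref{theorem: AngleAlgorithm}; this makes $\mathcal{A}_{DEH}:\Omega\to\mathbb{R}$ a well-defined combinatorial functional. A sequence is ``suitable'' when, for large $N$, $\rho_{GH}(\omega_N,\mathcal{M})\le\tfrac12\varepsilon(N)=\tfrac12 N^{-1/D}$ (so lemma~\ref{lemma: GHProximity=ExistenceNearIsometry} supplies the $\varepsilon(N)$-isometry the theorem needs) and the samples spread out evenly enough that the disjointness condition holds with $\alpha_N\downarrow0$, $\tilde K\alpha_N\to0$; since $\mathcal{M}$ is fixed and compact, $R$ is smooth and bounded so $K,\tilde K$ exist with $K\varepsilon(N)\to0$. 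Theorem~\ref{theorem: MainTheorem} then applies to each $\omega_N$, giving $\vert\mathcal{A}_{DEH}(\omega_N)-\mathcal{A}_{EH}(\mathcal{M})\vert=\mathcal{O}(K\varepsilon(N),\tilde K\alpha_N,\sigma(N))\to0$.
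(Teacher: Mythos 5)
Your proposal is correct and follows essentially the same route as the paper: the corollary is obtained formally from Theorem~\ref{theorem: MainTheorem} exactly as the paper intends (``suitable'' meaning $\rho_{GH}(\omega_N,\mathcal{M})\leq\varepsilon(N)/2$ so that lemma~\ref{lemma: GHProximity=ExistenceNearIsometry} supplies the required $\varepsilon(N)$-isometry, plus the disjointness/evenness condition), and your sketch of the theorem itself mirrors the paper's own proof structure in appendices~\ref{appendix: Integral}--\ref{appendix: MainTheorem}, including the scale-consistency check that the paper settles with an explicit admissible choice of $(a,b,c,d)$.
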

\subsection{Klein-Gordon Fields}
In this section we briefly summarise our method of dealing with Klein-Gordon fields, defined as scalar fields which locally satisfy the Euler-Lagrange equations associated with the following Klein-Gordon Lagrangian:
\begin{align}
	\mathcal{L}_{KG}(\varphi,\text{d}\varphi)=\frac{1}{2}\text{ tr}(\text{d}\varphi \otimes \text{d}\varphi)+\frac{1}{2}m\varphi^2.
\end{align}
The point is that previously derived results allow us to approximate the Klein-Gordon action
\begin{align}
	\mathcal{A}_{KG}=\int_{\mathcal{M}}\text{d}\text{ vol}_{\mathcal{M}}(x)\mathcal{L}_{KG}(\varphi(x),\text{d}\varphi(x))
\end{align}
If we have points $a,\:b\in \mathcal{M}$ such that $\rho_{\mathcal{M}}(a,b)<\delta$ for $\delta$ sufficiently small, then we have a unit tangent vector $V$ at $a$ for the unique geodesic connecting $a$ and $b$; by definition this satisfies:
\begin{align}
	\left|\text{d}\varphi(V)-\frac{\varphi(b)-\varphi(a)}{\rho_{\mathcal{M}}(a,b)}\right|=\mathcal{O}(\delta).
\end{align}
The following then follows immediately from an application of lemma \ref{lemma: ApproximatingIntegrals} and corollary \ref{corollary: TraceApproximation}:
\begin{proposition}\label{proposition: KleinGordon}
	Let $\omega_n\rightarrow \mathcal{M}$ in the sense of Gromov-Hausdorff as above and let $f_n:\omega_n\rightarrow \mathbb{R}$ be a sequence of functions that converges pointwise to a smooth function $\varphi:\mathcal{M}\rightarrow\mathbb{R}$. This means that for any sequence of $\varepsilon_n$-nets $\iota_n:\omega_n\rightarrow\mathcal{M}$ such that $\varepsilon_n\rightarrow 0$, and any sequence of points $u_n\in \omega_n$ such that $\iota_n(u_n)\rightarrow p\in \mathcal{M}$, we have $f_n(u_n)\rightarrow \varphi(p)$. If $\varphi$ is a Klein-Gordon field, i.e. if it extremises the Klein-Gordon action $\mathcal{A}_{KG}$, then there is a discrete Klein-Gordon action $\mathcal{A}_{DKG}$ such that $\mathcal{A}_{DKG}(f_n)\rightarrow \mathcal{A}_{KG}(\varphi)$.
\end{proposition}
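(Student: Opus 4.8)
The plan is to mirror the construction of $\mathcal{A}_{DEH}$ (Eq.~\ref{equation: DiscreteEHAction}), replacing the Ollivier-curvature summand---which there encoded the Ricci tensor---by a graph-intrinsic proxy for the Klein--Gordon density. Given a graph $\omega_n$ with $\varepsilon_n$-isometry $\iota_n:\omega_n\hookrightarrow\mathcal{M}$ and a function $f_n:\omega_n\to\mathbb{R}$, put, for $u\in\omega_n$ and $v\in S_{\ell,r}^G(u)$,
\begin{align}
	g_u(v)=\left(\frac{f_n(v)-f_n(u)}{\rho_G(u,v)}\right)^{\!\!2},
\end{align}
and define
\begin{align}
	\mathcal{A}_{DKG}(f_n)=\sum_{u\in\omega_n}\omega_D\varepsilon_n^{D}\left(\tfrac12\,\mathrm{tr}_G^{u}(g_u)+\tfrac12 m\,f_n(u)^2\right),
\end{align}
using the same scales $\delta,\ell,r,M$ and angular assignments $\theta_u$ as in Theorem~\ref{theorem: MainTheorem} (with $D$ read off intrinsically as in the Remarks). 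The mass term is the obvious discretisation of $\tfrac12 m\varphi^2$; the kinetic term is the graph trace of a squared finite difference, which by the stated estimate $|\mathrm d\varphi(V)-(\varphi(b)-\varphi(a))/\rho_{\mathcal{M}}(a,b)|=\mathcal{O}(\delta)$ should reproduce $\tfrac12\,\mathrm{tr}(\mathrm d\varphi\otimes\mathrm d\varphi)$ at $\iota_n(u)$. This $\mathcal{A}_{DKG}$ is manifestly graph-intrinsic.

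Before estimating, one upgrades pointwise convergence to a uniform statement: $e_n:=\sup_{v\in\omega_n}|f_n(v)-\varphi(\iota_n(v))|\to 0$. Indeed, if not, pass to a subsequence with vertices $v_n$ such that $|f_n(v_n)-\varphi(\iota_n(v_n))|\geq\epsilon_0>0$; by compactness of $\mathcal{M}$ assume $\iota_n(v_n)\to p$, whence $f_n(v_n)\to\varphi(p)$ by hypothesis and $\varphi(\iota_n(v_n))\to\varphi(p)$ by continuity---a contradiction. (This yields $e_n\to 0$ but no rate, which is the one delicate point.) The estimate then splits exactly along the three steps of Section~\ref{subsection: InformalOutline}. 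The Klein--Gordon density $h=\tfrac12|\nabla\varphi|^2+\tfrac12 m\varphi^2$ is smooth on the compact $\mathcal{M}$, hence locally Lipschitz and bounded, so Lemma~\ref{lemma: ApproximatingIntegrals} gives $\bigl|\sum_u\omega_D\varepsilon_n^D h(\iota_n(u))-\mathcal{A}_{KG}(\varphi)\bigr|=\mathcal{O}(K_h\varepsilon_n,\tilde K_h\alpha)$, provided the sequence inherits the packing hypotheses of Theorem~\ref{theorem: MainTheorem}. The mass part then contributes $\sum_u\omega_D\varepsilon_n^D\cdot\tfrac12 m\,|f_n(u)^2-\varphi(\iota_n(u))^2|=\mathcal{O}(e_n)$, since $\sum_u\omega_D\varepsilon_n^D=\mathcal{O}(1)$ by disjointness of the $\varepsilon_n(1-\alpha)$-balls and near-Euclideanity of small geodesic ball volumes. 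For the kinetic part one applies Corollary~\ref{corollary: TraceApproximation} with $T=\mathrm d\varphi_{\iota_n(u)}\otimes\mathrm d\varphi_{\iota_n(u)}$: it remains to bound $|g_u(v)-T(\theta_u(v),\theta_u(v))|$ on $v\in\theta_u^{-1}(U)$ by some $\sigma_n$, after which the corollary gives $|\mathrm{tr}_D(T)-\mathrm{tr}_G^u(g_u)|=\mathcal{O}(\max(M^{-1},\sigma_n))$, and summing against $\mathrm{tr}_D(T)=|\nabla\varphi(\iota_n(u))|^2$ controls the kinetic error by the same order.

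Bounding $\sigma_n$ is the heart of the matter, obtained by chaining four approximations on a vertex $v\in\theta_u^{-1}(U)$ with $\rho_G(u,v)\asymp\ell$: replacing $\rho_G(u,v)$ by $\rho_{\mathcal{M}}(\iota_n(u),\iota_n(v))$ costs $\mathcal{O}(\varepsilon_n)$ in a denominator of size $\asymp\ell$, hence $\mathcal{O}(\varepsilon_n/\ell)$; replacing $f_n(v)-f_n(u)$ by $\varphi(\iota_n(v))-\varphi(\iota_n(u))$ costs $\mathcal{O}(e_n)$ in the numerator, hence $\mathcal{O}(e_n/\ell)$; the stated finite-difference identity supplies $\mathrm d\varphi_{\iota_n(u)}$ of the unit geodesic direction up to $\mathcal{O}(\ell)$; and Theorem~\ref{theorem: AngleAlgorithm} with Lemma~\ref{lemma: AngularErrorGHProximal} places the direction of $\theta_u(v)$ within $\mathcal{O}(\ell^2)$ (in angle) of that geodesic direction, costing $\mathcal{O}(\ell^2)$ against the bounded form $\mathrm d\varphi$. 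All quantities involved are $\mathcal{O}(1)$ once $(e_n+\varepsilon_n)/\ell=\mathcal{O}(1)$, so squaring preserves the order and $\sigma_n=\mathcal{O}(\max(\ell,\varepsilon_n/\ell,e_n/\ell))$. Assembling by the triangle inequality, $|\mathcal{A}_{DKG}(f_n)-\mathcal{A}_{KG}(\varphi)|=\mathcal{O}(\max(K_h\varepsilon_n,\tilde K_h\alpha,M^{-1},\ell,\varepsilon_n/\ell,e_n,e_n/\ell))$; choosing $\delta,\ell,r,M$ as in Eqs.~\ref{constraint: Scales} kills every term except possibly $e_n/\ell$. The main obstacle is exactly this: pointwise convergence of $f_n$ carries no a priori rate, and the gradient discretisation divides an $\mathcal{O}(e_n)$ numerator by the vanishing scale $\asymp\ell$. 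This is resolved either by imposing a mild rate on $f_n\to\varphi$, or---since the constraints \ref{constraint: Scales} permit the exponent $b$ to be arbitrarily small, i.e. $\ell(n)\to 0$ arbitrarily slowly---by choosing the sequence of scales depending on $\{f_n\}$ so that $\ell(n)\gg\sqrt{e_n}$; everything else is a routine reassembly of Lemma~\ref{lemma: ApproximatingIntegrals} and Corollary~\ref{corollary: TraceApproximation}.
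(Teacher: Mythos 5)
Your construction is essentially the paper's own argument: the paper's (very terse) proof of Proposition \ref{proposition: KleinGordon} consists precisely of discretising $\mathrm{d}\varphi$ by the difference quotient along graph geodesics, feeding the squared quotient into $\mathrm{tr}_G^u$ and the mass term into the vertex sum with weight $\omega_D\varepsilon_n^D$, and then invoking Lemma \ref{lemma: ApproximatingIntegrals} and Corollary \ref{corollary: TraceApproximation}---exactly the two ingredients you assemble. Your further observation that bare pointwise convergence of $f_n$ carries no rate, so the $\mathcal{O}(e_n/\ell)$ term in the kinetic part must be tamed either by a mild rate assumption or by letting the scale $\ell$ depend on the sequence, is a legitimate refinement of a point the paper passes over in silence.
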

Higher interaction terms can easily be incorporated, but it is not clear how one might go about approximating higher valence (vector/tensor) fields. The main point is that in order to take the derivative of such fields one needs to be able to approximate the smooth Levi-Civita connection; our entire approach, however, has been based around the idea that this can be dispensed with since the phenomenal quantities of conventional Einstein-Gravity can be determined via the Ricci tensor alone, something which can be determined by the metric and measure theoretic structure, allowing us to dispense with the full smooth structure. Presumably, given knowledge of the connection---a strictly stronger requirement since it would permit us to reconstruct the full Riemann curvature tensor---we would be much closer to reproducing the full smooth structure of the Riemannian manifold in question, and it seems unlikely that we have many prospects in this direction. From this perspective, the fact that we can nonetheless obtain convergent Klein-Gordon (scalar) fields appears to be something of an accident.

	\section{Conclusions}
	Let us briefly make some remarks about the significance of this paper: this contribution is part of a programme pursued by the authors of combinatorial quantum gravity, in which we have attempted to utilise new mathematical insights into the characterisation of course Ricci curvature to shed new light on certain problems in Euclidean quantum gravity. Thus far \cite{Trugenberger_CombQG,KellyEtAl_Circle,KellyEtAl}, we have investigated a suggestive random graph model which displays some particularly clear signatures of emergent geometric structure including a classical phase dominated by near-lattice configurations and good evidence of a continuous phase transition. It is not entirely clear how that model relates to Euclidean quantum gravity proper, however, because the action adopted is at best a formal discretisation of the continuum Einstein-Hilbert action which agrees on the Ricci-flat sector. Indeed Akara-pipattana, Chotibut and Evnin \cite{Akara-pipattanaChotibutEvnin-Emergence} have expressed reservations about the gravitational interpretation of our work. The result of the present paper is thus a necessary step in the programme of combinatorial quantum gravity, and we hope it addresses some of these reservations.
	
	Conceptually, the result shows that if a graph is like a manifold---in the precise sense that it is Gromov-Hausdorff proximal to a manifold---then the discrete Einstein-Hilbert action of the graph will be very nearly the continuum Einstein-Hilbert action of the manifold. By extension such a graph will contribute a similar amount to the partition function as its Gromov-Hausdorff proximal manifold configurations. The partition function itself of any model of random graphs will probably look rather different from that of naive Euclidean quantum gravity or---less naively---of Euclidan dynamical triangulations. As shown by Loisel and Romon \cite{LoiselRomon-OllCurvPolyhedral}, discrete notions of curvature often differ significantly even on the most simple configurations. Ollivier curvature, in particular, is unlikely to look like Regge calculus in many instances since it is sensitive to the precise structure of local clustering; for instance if we take a regular triangulation and regular quadrangulation of the same surface, an Ollivier curvature based action will give very different results for the two discrete configurations. The nature of the configuration spaces on which the models are defined are also potentially very different. We stress that from our perspective, significant differences from Regge calculus on discrete configurations is desirable: if Ollivier curvature base models are too similar to dynamical triangulations models we could hardly avoid pathologies like the branched polymer phase.
	
	There is a slightly different way of looking at the main result of the present paper: we are now in a position to specify a statistical model of random graphs that is \textit{kinematically} consistent with Euclidean gravity. The hope is that the dynamics give rather different results from Euclidean dynamical triangulations. From this perspective, the present result is rather similar to the results of Benincasa and Dowker \cite{BenincasaDowker_ScalCurvCausSet} which gives an action on causal sets that agrees on sprinklings in Lorentzian manifolds. However, while it is not clear how a dynamical model of random graphs (or causal sets) can result in a random geometric graph (sprinkling) in some background manifold, it is quite possible to obtain (even spontaneously) graphs that are Gromov-Hausdorff proximal to particular manifolds as we showed in \cite{KellyEtAl_Circle}.
	
	Several difficulties remain, however. Firstly there is the problem of noncompact manifolds: the Gromov-Hausdorff metric is only defined on the space of (isometry classes of) compact metric spaces and the question of noncompact limits immediately arises. In fact there is a sense of \textit{pointed Gromov-Hausdorff} convergence for \textit{locally} compact geodesic spaces, where one considers Gromov-Hausdorff convergence with respect to some reference family of compact subsets of the space. One potential issue with this approach is that pointed Gromov-Hausdorff convergence depends on a choice of reference point, leading to a loss of gauge invariance. Another issue is that the Gromov-Hausdorff distance is hard to compute \cite{Schmiedl-Discrete}, so even knowing that a scaling limit exists may be somewhat uninformative if the structure of the scaling limit is not known.
	
	There is an interesting question about connections to causal models: all of our work has been firmly concerned with Euclidean gravity, and due to the significance of metric structure for the specification of Ollivier curvature it is unclear to the authors how and if the present results extend to Lorentzian contexts. It is worth noting that Gorard \cite{Gorard} has suggested that the Ollivier curvature extends to Lorentzian manifolds with little extra work, while there has been some interesting work on optimal transport in Lorentzian manifolds \cite{McCann,MondinoSuhr,EcksteinMiller-Causality}. Another interesting question is on the existence of coarse analogues of the Einstein field equations: again, Gorard has suggested that discrete Einstein-Field equations may be derived using a kind of formal analogue of Chapman-Enskog theory, though we confess that we have yet to fully understand this argument. Conceptually, the role of the field equations is to provide a `local' test of extremality in the sense that they allow one to check extremality without knowledge of any other configurations. As such, while undoubtedly convenient, it is not clear that the existence of field equations has any fundamental significance. In summary, we believe that both the problem of Lorentzian coarse curvature and the problem of rough Einstein field equations are worth pursuing further, but we have little of interest to say about them at present.
	\section*{Acknowledgements}
    C.K. would like to thank Pim van der Hoorn for explaining various aspects of the references \cite{Hoorn_Convergence,HoornEtAl_OllCurvConv} and Timothy Budd for pointing out an error in one of the lemmas in an earlier version of this text; he also acknowledges studentship funding from EPSRC under the grant number EP/L015110/1. F.B. acknowledges funding from EPSRC (UK) and the Max Planck Society for the Advancement of Science (Germany).
	\appendix
	\section{Approximating Integrals}\label{appendix: Integral}
    In this appendix we formalise the discussion in section \ref{subsection: InformalOutline} on the approximation of integrals of functions on $\mathcal{M}$:
    \begin{lemma}\label{lemma: ApproximatingIntegrals}
	Let $\alpha>0$ be a constant such that $\alpha\ll 1$ and such that the family $\set{B_{\varepsilon(1-\alpha)}^{\mathcal{M}}(\iota(u))}_{u\in G}$ is pairwise disjoint. Let $f:\mathcal{M}\rightarrow\mathbb{R}$ be a function which is $K$-locally Lipschitz in the balls $\set{B_{\varepsilon}^{\mathcal{M}}(\iota(u))}_{u\in G}$ with bounded uniform norm $\vert\vert f\vert\vert_{\infty}\leq \tilde{K}$, where the constants $K$ and $\tilde{K}$ satisfy
	\begin{align}\label{Constraint: Integral}
		K\varepsilon\ll 1\ll K && \tilde{K}\alpha\ll 1,
	\end{align}
	and let $g:G\rightarrow\mathbb{R}$ be a function such that
	\begin{align}\label{equation: IntegrandError}
		|f(\iota(u))-g(u)|=\mathcal{O}(\sigma)
	\end{align}
	for some $\sigma>0$. Then if
	\begin{align}\label{equation: IntegralScaling}
		\varepsilon=N^{-\frac{1}{D}},
	\end{align}
	we have
	\begin{widetext}
	\begin{align}\label{equation: IntegralApproximation}
		\left|\int_{\mathcal{M}} \text{d vol}_{\mathcal{M}}(x)f(x)-\frac{\omega_D}{N}\sum_{u\in G}g(u)\right|=\mathcal{O}\left(\max(K\varepsilon,\tilde{K}\alpha,\sigma)\right).
	\end{align}
	\end{widetext}
\end{lemma}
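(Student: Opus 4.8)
The plan is to compare $\int_{\mathcal M}f$ with $\frac{\omega_D}{N}\sum_{u}g(u)$ cell by cell, absorbing the discrepancy between $g$ and $f\circ\iota$ along the way. Since $\varepsilon=N^{-1/D}$ we have $\frac{\omega_D}{N}=\omega_D\varepsilon^D$, so the quantity to be bounded equals $\bigl|\sum_{u\in G}\bigl(\int_{A_u}f\,{\rm d vol}_{\mathcal M}-\omega_D\varepsilon^D g(u)\bigr)\bigr|$ for a suitable Borel partition $\{A_u\}_{u\in G}$ of $\mathcal M$; the role of the scaling $\varepsilon=N^{-1/D}$ is precisely that each per-cell error carries a factor $\varepsilon^D=1/N$, so that summing $N$ of them leaves a bounded total rather than one of order $N$. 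To build the cells I would first assign every $x\in\mathcal M$ lying in one of the (by hypothesis pairwise disjoint) inner balls $B^{\mathcal M}_{\varepsilon(1-\alpha)}(\iota(u))$ to that unique $u$, and then assign every remaining $x$ to some $u$ with $x\in B^{\mathcal M}_\varepsilon(\iota(u))$ — such a $u$ exists because $\iota(G)$ is an $\varepsilon$-net, so the outer balls cover $\mathcal M$. Doing the inner-ball assignment first is exactly what forces the sandwich $B^{\mathcal M}_{\varepsilon(1-\alpha)}(\iota(u))\subseteq A_u\subseteq B^{\mathcal M}_\varepsilon(\iota(u))$ for every $u$.

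Next I would split each per-cell error into three parts,
\begin{align*}
\int_{A_u}f-\omega_D\varepsilon^D g(u)&=\int_{A_u}\bigl(f(x)-f(\iota(u))\bigr){\rm d vol}(x)+f(\iota(u))\bigl({\rm vol}(A_u)-\omega_D\varepsilon^D\bigr)\\
&\qquad+\omega_D\varepsilon^D\bigl(f(\iota(u))-g(u)\bigr),
\end{align*}
and dispatch the two easy ones first. Since every point of $A_u\subseteq B^{\mathcal M}_\varepsilon(\iota(u))$ lies within geodesic distance $\varepsilon$ of $\iota(u)$, the local Lipschitz hypothesis bounds the first part by $K\varepsilon\,{\rm vol}(A_u)$, whose sum over $u$ is $\mathcal O\bigl(K\varepsilon\,{\rm vol}(\mathcal M)\bigr)=\mathcal O(K\varepsilon)$ by compactness of $\mathcal M$; the third part sums to $\omega_D\varepsilon^D N\cdot\mathcal O(\sigma)=\mathcal O(\sigma)$ using $|f(\iota(u))-g(u)|=\mathcal O(\sigma)$ and $N\varepsilon^D=1$.

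The remaining, and most delicate, contribution is the volume-deficit term. Here I would invoke the standard uniform small-ball volume expansion on a compact manifold, ${\rm vol}(B^{\mathcal M}_\rho(p))=\omega_D\rho^D\bigl(1+\mathcal O(\rho^2)\bigr)$ with an implied constant controlled by curvature bounds of $\mathcal M$; applied at $\rho=\varepsilon$ and $\rho=\varepsilon(1-\alpha)$ together with the sandwich from the first step, this yields $\bigl|{\rm vol}(A_u)-\omega_D\varepsilon^D\bigr|=\omega_D\varepsilon^D\,\mathcal O(\alpha+\varepsilon^2)$ for each $u$, so (using $|f(\iota(u))|\le\tilde K$ and $N\varepsilon^D=1$ once more) the deficit terms sum to $\mathcal O(\tilde K\alpha)+\mathcal O(\tilde K\varepsilon^2)$; as $\alpha\ll1$ is fixed while $\varepsilon\to0$, the term $\tilde K\varepsilon^2$ is absorbed into $\tilde K\alpha$ for $N$ large. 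Adding the three contributions gives the claimed bound $\mathcal O\bigl(\max(K\varepsilon,\tilde K\alpha,\sigma)\bigr)$. The main obstacle is precisely this deficit estimate: it is where the pairwise disjointness of the inner balls and the $\varepsilon$-net property of the outer balls are simultaneously consumed — together they pin ${\rm vol}(A_u)$, and indeed ${\rm vol}(\mathcal M)$ itself, to within $\mathcal O(\alpha)$ of $\omega_D\varepsilon^D$ (resp.\ $\omega_D$) — and it is where the hypothesis $\tilde K\alpha\ll1$ is needed to keep the accumulated deficit small.
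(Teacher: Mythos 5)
Your proof is correct, and it reaches the bound by a genuinely different (and arguably cleaner) decomposition than the paper's. The paper first reduces to $f\geq 0$ via $f=f_+-f_-$, then runs two one-sided estimates with partitions of unity subordinate to the cover $\set{B_\varepsilon^{\mathcal{M}}(\iota(u))}_u$: an upper bound over the outer balls giving the $K\varepsilon$ term, and a lower bound over the pairwise disjoint inner balls $B_{\varepsilon(1-\alpha)}^{\mathcal{M}}(\iota(u))$ giving the $\tilde{K}\alpha$ term, before a final triangle inequality brings in $\sigma$. You instead build an explicit Borel partition $\set{A_u}$ sandwiched between the inner and outer balls (inner balls first, leftover points assigned measurably to any covering outer ball) and do a single signed per-cell split into Lipschitz variation, volume deficit, and $f$--$g$ discrepancy; this dispenses with the positivity reduction and the partition-of-unity machinery while consuming the hypotheses at exactly the same points (local Lipschitz continuity $\rightarrow K\varepsilon$, disjointness plus the uniform bound $\|f\|_\infty\leq\tilde{K}$ and the small-ball volume expansion $\rightarrow\tilde{K}\alpha$, the hypothesis \ref{equation: IntegrandError} together with $N\varepsilon^D=1$ $\rightarrow\sigma$). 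The only point needing the caveat you already supply is the residual $\tilde{K}\varepsilon^2$ (the paper's analogue is a $\tilde{K}\varepsilon$ term from the $(1\pm\mathcal{O}(\varepsilon))$ volume factors): since $\alpha$ is a fixed constant while $\varepsilon=N^{-1/D}\to 0$, it is eventually dominated by $\tilde{K}\alpha$, so the stated $\mathcal{O}(\max(K\varepsilon,\tilde{K}\alpha,\sigma))$ bound holds for $N$ large, exactly as in the paper.
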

\begin{proof}
	Recall that any measurable function $f$ can be expressed as a sum $f=f_+-f_-$ where $f_{\pm}$ are positive measurable functions; in this way $\int f=\int f_+-\int f_-$ and to show that we may approximate $\int f$ it is sufficient to show we may approximate $\int f$ for $f$ positive. Thus let us assume that $f:\mathcal{M}\rightarrow \mathbb{R}$ is positive without loss of generality.
	
	Now recall that $\iota:G\hookrightarrow \mathcal{M}$ is an $\varepsilon$-isometry, $\iota(G)$ is an $\varepsilon$-net in $\mathcal{M}$ and the balls $\set{B_\varepsilon^{\mathcal{M}}(\iota(u))}_{u\in G}$ form an open cover of $\mathcal{M}$. We may choose a partition of unity $\set{\rho_u}_{u\in G}$ subordinate to $\set{B_\varepsilon^{\mathcal{M}}(\iota(u))}_{u\in G}$ such that $\rho_u(u)=1$ for all $u\in G$. Since $\rho_u$ takes values in $[0,1]$ for all $u\in G$ we note that $\rho_u f\leq f$ and
	\begin{widetext}
	\begin{align}\label{inequality: IntegralApproxI}
		\int_{\mathcal{M}}\text{d vol}_{\mathcal{M}}(x)f(x)&=\sum_{u\in G}\int_{B_\varepsilon^{\mathcal{M}}(\iota(u))} \text{d vol}_{\mathcal{M}}(x)\rho_u(x)f(x)\nonumber\\
		&\leq \sum_{u\in G}\int_{B_\varepsilon^{\mathcal{M}}(\iota(u))}\text{d vol}_{\mathcal{M}}(x)f(x)\nonumber\\
		&\leq \sum_{u\in G}\int_{B_\varepsilon^{\mathcal{M}}(\iota(u))}\text{d vol}_{\mathcal{M}}(x)(f(\iota(u))+\vert f(x)-f(\iota(u))\vert)\nonumber\\
		&\leq \sum_{u\in G}\int_{B_\varepsilon^{\mathcal{M}}(\iota(u))}\text{d vol}_{\mathcal{M}}(x)(f(\iota(u))+K\rho_{\mathcal{M}}(\iota(u),x))\nonumber\\
		&\leq \sum_{u\in G}\int_{B_\varepsilon^{\mathcal{M}}(\iota(u))}\text{d vol}_{\mathcal{M}}(x)(f(\iota(u))+K\varepsilon)\nonumber\\
		&=\sum_{u\in G}(f(\iota(u))+K\varepsilon)\text{vol}_{\mathcal{M}}(B_\varepsilon^{\mathcal{M}}(\iota(u)))\nonumber\\
		&=\omega_D\varepsilon^D\sum_{u\in G}(f(\iota(u))+K\varepsilon)(1+\mathcal{O}(\varepsilon))\nonumber\\
		&=\frac{\omega_D}{N}\sum_{u\in G}f(\iota(u))+\mathcal{O}(K\varepsilon).
	\end{align}
	\end{widetext}
	We have used the Lipschitz continuity of $f\vert B_{\varepsilon}^{\mathcal{M}}(\iota(u))$, $u\in G$, in moving to the fourth line, the fact that $\text{ vol}(B_\varepsilon^{\mathcal{M}}(q))=\varepsilon^D\omega_D(1+\mathcal{O}(\varepsilon))$ for all $q\in \mathcal{M}$ in moving to the sixth line, and the identification $N=\varepsilon^D$ in moving to the final line. 
	
	On the other hand, using the fact that the family $\set{B_{r}^{\mathcal{M}}(\iota(u))}_{u\in G}$ is pairwise disjoint for $r=\varepsilon(1-\alpha)$ we note that there is a partition of unity $\set{\rho_u}_{u\in G}$ subordinate to the cover $\set{B_{\varepsilon}^{\mathcal{M}}(\iota(u))}_{u\in G}$ such that $\rho_u\vert B_{r}^{\mathcal{M}}(\iota(u))=1$. Then again by the definition of the integral we have
	\begin{align}
		\int_{\mathcal{M}}\text{d vol}_{\mathcal{M}}(x)f(x)&=\sum_{u\in G}\int_{B_\varepsilon^{\mathcal{M}}(\iota(u))}\text{d vol}_{\mathcal{M}}(x)\rho_u(x)f(x)\nonumber\\
		&\geq \sum_{u\in G}\int_{B_r^{\mathcal{M}}(\iota(u))}\text{d vol}_{\mathcal{M}}(x)f(x)\nonumber.
	\end{align}
	Note also that for every $x\in B_\varepsilon^{\mathcal{M}}(\iota(u))$, we have:
	\begin{align}
		f(\iota(u))-K\varepsilon\leq f(x).\nonumber
	\end{align}
	This is trivial if $f(\iota(u))\leq f(x)$ while if $f(\iota(u))>f(x)$ we have by Lipschitz continuity
	\begin{align}
		f(\iota(u))-f(x)=\vert f(\iota(u))-f(x)\vert \leq K\rho_{\mathcal{M}}(\iota(u),x)<K\varepsilon\nonumber.
	\end{align}
	Thus
	\begin{widetext}
	\begin{align}
		\int_{\mathcal{M}}\text{d vol}_{\mathcal{M}}(x)f(x)&\geq \sum_{u\in G}\int_{B_r^{\mathcal{M}}(\iota(u))}\text{d vol}_{\mathcal{M}}(x)f(x)\nonumber\\
		&\geq \sum_{u\in G}\int_{B_r^{\mathcal{M}}(\iota(u))}\text{d vol}_{\mathcal{M}}(x)(f(\iota(u))-K\varepsilon)\nonumber\\
		&=\sum_{u\in G}(f(\iota(u))-K\varepsilon)\text{vol}_{\mathcal{M}}(B_r^{\mathcal{M}}(\iota(u)))\nonumber\\
		&=\omega_D\sum_{u\in G}(f(\iota(u))-K\varepsilon)r^D(1-\mathcal{O}(r))\nonumber\\
		&=\omega_D\varepsilon^D\sum_{u\in G}(f(\iota(u))-K\varepsilon)(1-\alpha)^D(1-\mathcal{O}(\varepsilon))\nonumber\\
		&=\frac{\omega_D}{N}\sum_{u\in G}(f(\iota(u))-K\varepsilon)(1-\mathcal{O}(\alpha))(1-\mathcal{O}(\varepsilon))\nonumber.
	\end{align}
	Multiplying out the right-hand side and using the fact that
	\begin{align}
		\mathcal{O}(\alpha)\frac{\omega_D}{N}\sum_{u\in G}f(\iota(u))\leq \mathcal{O}(\alpha)\frac{\omega_D}{N}\sum_{u\in G}\tilde{K}=\mathcal{O}(\tilde{K}\alpha)\nonumber
	\end{align}
	\end{widetext}
	means that we finally have
	\begin{align}
		\int_{\mathcal{M}}\text{d vol}_{\mathcal{M}}(x)f(x)&\geq \frac{\omega_D}{N}\sum_{u\in G}f(\iota(u))-\mathcal{O}(\max(K\varepsilon,\tilde{K}\alpha)).\nonumber
	\end{align}
	Combining this inequality with inequality \ref{inequality: IntegralApproxI} gives
	\begin{align}
		\left\vert \int_{\mathcal{M}}\text{d vol}_{\mathcal{M}}(x)f(x)- \frac{\omega_D}{N}\sum_{u\in G}f(\iota(u))\right\vert =\mathcal{O}(\max(K\varepsilon,\tilde{K}\alpha)).\nonumber
	\end{align}
	But then by subadditivity we have
	\begin{widetext}
	\begin{align}
		\left\vert \int_{\mathcal{M}}\text{d vol}_{\mathcal{M}}(x)f(x)- \frac{\omega_D}{N}\sum_{u\in G}g(u)\right\vert &\leq \left\vert \int_{\mathcal{M}}\text{d vol}_{\mathcal{M}}(x)f(x)- \frac{\omega_D}{N}\sum_{u\in G}f(\iota(u))\right\vert\nonumber\\
		&\qquad +\left\vert \frac{\omega_D}{N}\sum_{u\in G}f(\iota(u))- \frac{\omega_D}{N}\sum_{u\in G}g(u)\right\vert\nonumber\\
		&=\mathcal{O}(\max(K\varepsilon,\tilde{K}\alpha))+\left\vert \frac{\omega_D}{N}\sum_{u\in G}(f(\iota(u))-g(u))\right\vert\nonumber\\
		&\leq \mathcal{O}(\max(K\varepsilon,\tilde{K}\alpha))+\frac{\omega_D}{N}\sum_{u\in G}\left\vert f(\iota(u))-g(u)\right\vert \nonumber\\
		&=\mathcal{O}(\max(K\varepsilon,\tilde{K}\alpha))+\frac{\omega_D}{N}\sum_{u\in G}\mathcal{O}(\sigma)\nonumber\\
		&=\mathcal{O}(\max(K\varepsilon,\tilde{K}\alpha))+\mathcal{O}(\sigma)\nonumber\\
		&=\mathcal{O}(\max(K\varepsilon,\tilde{K}\alpha,\sigma))\nonumber
	\end{align}
	\end{widetext}
	as required.
    \end{proof}
    \section{Trace Error Proofs}\label{appendix: Trace}
    \begin{proof}[Proof of Lemma \ref{lemma: AngularErrorGHProximal}]
	Let $p\in \mathcal{M}$ and let $\gamma_1$ and $\gamma_2$ be geodesics issuing from $p$, i.e. $\gamma_1(0)=\gamma_2(0)=p$. If $\theta$ is the angle between the geodesics $\gamma_1$ and $\gamma_2$ at $p$, i.e. the angle $\theta_p^{\mathcal{M}}((\dot{\gamma}_1)_0,(\dot{\gamma}_2)_0)$, then for $s,t>0$ sufficiently small we have the standard Taylor expansion
	\begin{align}
		\rho_{\mathcal{M}}^2(\gamma_1(s),\gamma_2(t))=s^2+t^2-2st\cos \theta +\mathcal{O}((s,t)^4)\nonumber
	\end{align}
	where $\mathcal{O}((s,t)^4)$ means that the error is fourth-order in products of $s$ and $t$. This follows from the Jacobi equation. Identifying $p=\iota(u)$, $\gamma_1(s)=\iota(v_1)$ and $\gamma_2(t)=\iota(v_2)$ for $u\in G$ and $v_1,\:v_2\in S_{\ell,r}^G(u)$, we may replace manifold distances by graph distances at the cost of an error of order $\varepsilon \ell$ since $\text{dis}(\iota)=\varepsilon$:
	\begin{align}
		\rho_G(v_1,v_2)^2=x^2+y^2-2xy\cos\theta+\mathcal{O}(\max(\varepsilon \ell,(s,t)^4))\nonumber,
	\end{align}
	where $x= \rho_G(u,v_1)$ and $y= \rho_G(u,v_2)$. Rearranging this expression gives
	\begin{align}
		\cos \theta=\cos \theta_u^G(v_1,v_2)+\mathcal{O}(\max(\varepsilon\ell^{-1},\ell^2)),\nonumber
	\end{align}
	if we note that $s,\:t=\mathcal{O}(\ell+r)=\mathcal{O}(\ell)$. But $\varepsilon\ll \delta^2\ell\ll\ell^3$ and $\varepsilon/\ell\ll\ell^2$, i.e. $\cos \theta=\cos \theta_u^G(v_1,v_2)+\mathcal{O}(\ell^2)$. The smoothness of $\arccos$ then implies that
	\begin{align}
		|\theta-\theta_u^G(v_1,v_2)|=\mathcal{O}(\ell^2).\nonumber
	\end{align}
	But we also have the asymptotic expansion
	\begin{align}
		\braket{V_1,V_2}_p=||V_1||_p\cdot ||V_2||_p\cos \theta +\mathcal{O}((||V_1||_p, ||V_2||_p)^4)\nonumber,
	\end{align}
	and by the definition of $\theta_p^{\mathcal{M}}(V_1,V_2)$ we have
	\begin{align}
		\cos \theta_p^{\mathcal{M}}(V_1,V_2)=\cos \theta+\mathcal{O}(\ell^2)\nonumber
	\end{align}
	where we have used $||V_1||_p=s$, $||V_2||_p=t$ and $s,\:t=\mathcal{O}(\ell)$. The smoothness of $\arccos$ then gives
	\begin{align}
		|\theta_p^{\mathcal{M}}(V_1,V_2)-\theta|=\mathcal{O}(\ell^2).\nonumber
	\end{align}
	Hence the required result follows from subaddditivity.
    \end{proof}
	To prove theorem \ref{theorem: AngleAlgorithm}, it will be helpful to have some terminology:
\begin{definition}
	Let $G$ be a graph and pick $\ell,\:r>0$ and the positive integer $M$ as above, i.e. $M\ell^2\ll 1$ and $r\ll \ell$. Fix some $u\in G$.
	\begin{enumerate}
		\item Two vertices $v_1,\:v_2\in S_{\ell,r}^G(u)$ are said to be \textit{$(M,\ell)$-adjacent} iff 
		\begin{align}
			\left|\theta_G^u(v_1,v_2)-\frac{\pi}{M}\right|=\mathcal{O}(\ell^2).
		\end{align}
		\item Let $v_1,\:v_2\in S_{\ell,r}^G(u)$ be $(M,\ell)$-adjacent. A vertex $v_3\in S_{\ell,r}^G(u)$ that is $(M,\ell)$-adjacent to $v_1$ is \textit{$(M,\ell)$-perpendicular} to the pair $(v_1,v_2)$ iff
		\begin{align}
			\left|\theta^G_{v_1}(v_2,v_3)-\frac{\pi}{2}\right|=\mathcal{O}\left(M\ell^2\right).
		\end{align}
		\item A sequence of points $v_0,...,v_n$ such that $v_k$ and $v_{k+1}$ are $(M,\ell)$-adjacent for all $k\in \set{1,...,n-1}$ are said to be \textit{in line} iff for all $k_1,\:k_2,\:k_3\in \set{1,...,n}$ such that $k_1<k_2<k_3$ we have
		\begin{align}
			\theta_G^u(v_{k_1},v_{k_3})=\theta_G^u(v_{k_1},v_{k_2})+\theta_G^u(v_{k_2},v_{k_3})+\mathcal{O}(\ell^2).
		\end{align}
	\end{enumerate}
\end{definition}
\begin{proof}[Proof of Theorem \ref{theorem: AngleAlgorithm}]
	Recall that any Euclidean trace grid on $2M^{D-1}$ points is given $U=\set{u_{\boldsymbol{m}}}$ for the multi-index $\boldsymbol{m}=(m_1,...,m_{D-1})$ where $m_1,...,m_{D-2}\in \set{1,...,M}$ and $m_{D-1}\in \set{1,...,2M}$. Thus it is sufficient to pick out a suitable set $V\subseteq S_{\ell,r}^G(u)$ such that $V=\set{v_{\boldsymbol{m}}}$ and then identify
	\begin{align}
		\theta_u(v)=\left\{\begin{array}{rl}
			u_{\boldsymbol{m}}, & v=v_{\boldsymbol{m}}\\
			0, & \text{otherwise} 
		\end{array}\right.\nonumber
	\end{align}
	The recursive construction of the set $V$ must satisfy the following properties if $\theta_u$ is to satisfy the required properties: the mapping $\boldsymbol{m}\rightarrow v_{\boldsymbol{m}}$ is injective and $\rho_{\mathcal{M}}(\exp_{\iota(u)}(u_{\boldsymbol{m}}),\iota(v_{\boldsymbol{m}}))=\mathcal{O}(\ell^3)$ whenever $G$ is Gromov-Hausdorff close to $\mathcal{M}$, where the precise meaning of this phrase is given by the conditions on $\varepsilon$ in the statement of the theorem.
	
	We now specify the recursive construction of $V$. First note that the set $U\cup \set{0}$ can be given a graph structure in the following manner: connect $0$ to every vertex in $U$ and connect two vertices in $U$ iff they are adjacent i.e. their angular separation is $\pi/M$. Thus the notions of $(M,\ell)$-adjacency etc. extend to this graph. Given $u_{(1,...,1)}$ note that we can obtain $U$ (up to a relabelling, or equivalently a rigid rotation about the axis of $u_{(1,...,1)}$) recursively in the following manner:
	\begin{enumerate}
		\item Pick $u_{(2,1,...,1)}$ as any element of $U$ that is $(M,0)$-adjacent to $u_{(1,...,1)}$. Assume that we have picked $u_{(1,...,2,...,1)}$ where the $2$ is in the $k$th position, for all $k\leq K<D-1$. Then $u_{(1,...,2,...,1)}$ with the $2$ in the $(K+1)$th position is an element of $U$ that is $(M,0)$-adjacent to $u_{(1,...,1)}$ and $(M,0)$-perpendicular to $u_{(1,...,2,...,1)}$ with $2$ in the $k$th position for all $k\in \set{1,...,K}$. 
		\item Now assume that we have obtained $u_{\boldsymbol{m}}$ for all multi-indices $\boldsymbol{m}=(m_1,...,m_{D-1})$ such that $\sum_{k=1}^{D-1}m_k-(D-1)=K$, $K\in \set{0,1,...}$. Let $\boldsymbol{m}=(m_1,...,m_{D-1})$ be a multi-index such that $\sum_{k=1}^{D-1}m_k-(D-1)=K+1$. Either $u_{\boldsymbol{m}}$ is uniquely specified as the unique vertex that is both $(M,0)$-perpendicular and $(M,0)$-adjacent to some family of points already specified, or it extends a sequence of already specified points that are in line. 
	\end{enumerate}
	We define $V$ by applying the above algorithm to points in $S_{\ell,r}^G(u)$, where $v_{(1,..,1)}$ can be chosen arbitrarily and $(M,0)$-adjacency and $(M,0)$-perpendicularity are replaced with $(M,\ell)$-adjacency etc. There is a caveat insofar as unlike in the case of the graph $U\cup \set{0}$ it is not clear that the required `adjacent' point will exist according to this algorithm; in such a situation the algorithm terminates early (alternatively we set $v_{\boldsymbol{m}}=v_{(1,...,1)}$) and $\theta_u$ does not have the desired properties (may not be well defined). 
	
	It is clear that the algorithm does not terminate early if $G$ is Gromov-Hausdorff close to a manifold $\mathcal{M}$ by lemma \ref{lemma: AngularErrorGHProximal}: the graph $U\cup \set{0}$ pushes forward to a distorted graph in $\mathcal{M}$ and so the required adjacent point exists in $S_{\ell,r}^{\mathcal{M}}(p)$ as long as we weaken $(M,0)$-adjacency etc to $(M,\ell)$-adjacency. But then we can pick points in $S_{\ell,r}^G(u)$ that are $\mathcal{O}(\varepsilon)$ close to the relevant points of $S_{\ell,r}^{\mathcal{M}}(p)$; we pick up an angular error of $\mathcal{O}(\varepsilon/\ell)$ due to the metric distortion of the near isometry $\iota$. Thus $\varepsilon\ll \ell^3$ ensures that this error is $\mathcal{O}(\ell^2)$ and thus that the algorithm does not terminate early. This is not sufficient to ensure that $\theta_u$ is well defined: the mapping $\boldsymbol{m}\mapsto v_{\boldsymbol{m}}$ must also be injective. This is guaranteed as long as we have
	\begin{align}
		\rho_{\mathcal{M}}(\exp_{\iota(u)}(u_{\boldsymbol{m}}),\iota(v_{\boldsymbol{m}}))\sim R\ll\frac{\ell}{M}-\ell^3\nonumber
	\end{align}
	for all $\boldsymbol{m}$: the quantity on the right-hand side controls (up to scale factors) the distance between adjacent points of $U$ in $\mathcal{M}$ so if $R$ controls the distance between $\exp_{\iota(u)}(u_{\boldsymbol{m}})$ and $\iota(v_{\boldsymbol{m}})$ for all $\boldsymbol{m}$, then the assignment must be injective since the $R$-ball at $\iota(v_{\boldsymbol{m}})$ can contain at most one point of $\exp_p(U)$. Since $ M\ell^2\ll 1$ by assumption, $\ell^3\ll \ell/M$ and so $2\ell^3\ll \ell/M$ i.e. $\ell^3\ll \ell/M-\ell^3$ and we may take $R=\mathcal{O}(\ell^3)$. It is thus sufficient to prove that the recursion preserves the constraint:
	\begin{align}
			\rho_{\mathcal{M}}(\exp_{\iota(u)}(u_{\boldsymbol{m}}),\iota(v_{\boldsymbol{m}}))=\mathcal{O}(\ell^3)\nonumber.
	\end{align}
	Let us consider the case that $u_{\boldsymbol{m}}$ is in line with some sequence $u_1,...,u_n\in U$ and pick $v_{\boldsymbol{m}}$ in line with the corresponding sequence $v_1,...,v_n$ and $(M,\ell)$-adjacent to $v_n$. By construction, $\iota(v_n)$ is in line with $\exp_{\iota(u)}(\set{u_1,...,u_n})$ which ensures the desired constraint. Similar remarks go for joint neighbours ($(M,\ell)$-perpendicular points.) Note that we use the fact that perpendicular points in $S_{\ell,r}^D$ are separated by a distance $\mathcal{O}(\ell/M)$ while they are distorted by the exponential map a distance $\mathcal{O}(\ell^3)$ so the angular error is $\mathcal{O}(ell^3)/\mathcal{O}(\ell/M)=\mathcal{O}(M\ell^2)$. However since all points lie in the sphere we only have to propagate this angular error a distance $\mathcal{O}(\ell/M)$ leading to an overall error in the distance $\mathcal{O}(\ell^3)$.
	
	Note that the constraint $2M^{D-1}\leq \mathcal{O}(r\ell^{D-1}/\varepsilon^D)$ is required to ensure that it is indeed possible to pick a subset of $S_{\ell,r}^G(u)$ with $2M^{D-1}$ points.
\end{proof}
\begin{proof}[Proof of Corollary \ref{corollary: TraceApproximation}]
	This is a simple consequence of subadditivity and the definition of $\theta_u$. In particular we have
	\begin{align}
		\left|\text{ tr}_D(T)-\text{ tr}_G^u(f)\right|&\leq |\text{ tr}_D(T)-\text{ tr}_U(T)|\nonumber\\
		&\qquad +|\text{ tr}_U(T)-\text{ tr}_G^u(g)|\nonumber\\
		&=\mathcal{O}(M^{-1})+|\text{ tr}_U(T)-\text{ tr}_G^u(g)|\nonumber.
	\end{align}
	But, assuming that $\theta_u$ is well formed, $\theta_u$ is injective on the set $\theta_u^{-1}(U)$ and we have a well-defined inverse $v_{\textbf{m}}=\theta_u^{-1}(u_{\textbf{m}})$ for each multi-index $\textbf{m}$; thus we see that
	\begin{align}
		\sum_{v\in S_{\ell,r}^G(u)}\mathfrak{d}(\theta_u(v))g(v)&=\sum_{\textbf{m}}\mathfrak{d}(\theta_u(v_{\textbf{m}}))g(v_{\textbf{m}})\nonumber\\
		&=\sum_{\textbf{m}}\mathfrak{d}(u_{\textbf{m}})g(v_{\textbf{m}})\nonumber,
	\end{align}
	since by construction $\theta_u(v)=0$ (and hence $\mathfrak{d}(\theta_u(v))=0$) if $v\notin \theta_u^{-1}(U)$. Then by subadditivity we have:
	\begin{widetext}
	\begin{align}
		\left|\text{ tr}_D(T)-\text{ tr}_G^u(f)\right|&\leq\mathcal{O}(M^{-1})+\frac{\pi^{D-1}}{\omega_D M^{D-1}}\left|\sum_{\textbf{m}}\mathfrak{d}(u_{\textbf{m}})(T(u_{\textbf{m}},u_{\textbf{m}})-g(v_{\textbf{m}}))\right|\nonumber\\
		&\leq\mathcal{O}(M^{-1})+\frac{\pi^{D-1}}{\omega_D M^{D-1}}\sum_{\textbf{m}}\mathfrak{d}(u_{\textbf{m}})\left|T(u_{\textbf{m}},u_{\textbf{m}})-g(v_{\textbf{m}})\right|\nonumber\nonumber\\
		&=\mathcal{O}(M^{-1})+\mathcal{O}(\sigma)\nonumber
	\end{align}
	\end{widetext}
	which proves the statement.
\end{proof}
	\section{The Error from the Ollivier Curvature}\label{appendix: Semidiscrete}
	In this section we find the error associated with the Ollivier curvature. The basic result is the following lemma:
	\begin{lemma}\label{lemma: OllivierCurvatureMainTheorem}
	Let $G$ be a graph and let $\iota:G\hookrightarrow \mathcal{M}$ be an $\varepsilon$-isometry. For any $u\in G$, let $p\in \mathcal{M}$ be such that $p=\iota(u)$. Then given $\delta,\:\ell,\:r>0$ such that 
	\begin{align}
		\delta^4\ll \varepsilon\ll \min(\delta^2\ell^2,\delta^3) && \delta\ll \ell,
	\end{align}
	the mapping 
	\begin{align}
	    f:v\mapsto\frac{2(D+2)}{\delta^2} \kappa_G^\delta(u,v)
	\end{align}
	is a mapping on $S_{\ell,r}^G(u)$ such that
	\begin{align}
		|\text{Ric}(\exp_p^{-1}(\iota(v)),\exp_p^{-1}(\iota(v)))-f(v)|=\mathcal{O}\left(\sigma\right) 
	\end{align}
	where
	\begin{align}
	    \sigma=\max\left(\ell,\frac{\varepsilon}{\delta^2\ell^2},\frac{\varepsilon}{\delta^3}\right),
	\end{align}
	for all $v\in S_{\ell,r}^G(u)$.
\end{lemma}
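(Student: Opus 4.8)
The plan is to reduce everything to the asymptotic relation (Eq.~\ref{equation: OllivierError}) between the \emph{manifold} Ollivier curvature and the Ricci curvature, so that the real content is a comparison of the \emph{graph} Ollivier curvature $\kappa_G^\delta(u,v)$ with its manifold counterpart $\kappa_{\mathcal M}^\delta(\iota(u),\iota(v))$. Writing $V$ for the unit velocity at $p=\iota(u)$ of the minimal geodesic to $\iota(v)$ (so that $\exp_p^{-1}(\iota(v))$ agrees with $V$ up to the normalisation built into the trace integral representation Eq.~\ref{equation: TraceIntegralII}), Eq.~\ref{equation: OllivierError} gives $\mathrm{Ric}_p(V,V)=\tfrac{2(D+2)}{\delta^2}\kappa_{\mathcal M}^\delta(\iota(u),\iota(v))+\mathcal{O}(\delta+\ell)$; since $\delta\ll\ell$ this $\mathcal{O}(\delta+\ell)$ is already the first entry of $\sigma$. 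Hence it suffices to prove
\begin{align}
\left|\kappa_G^\delta(u,v)-\kappa_{\mathcal M}^\delta(\iota(u),\iota(v))\right|=\mathcal{O}\!\left(\delta^2\max\!\left(\frac{\varepsilon}{\delta^2\ell^2},\frac{\varepsilon}{\delta^3}\right)\right),
\end{align}
after which a division by $\delta^2$ and the triangle inequality give the lemma.

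To establish that, I would expand $\kappa=1-\mathcal{T}/\rho$ and split the difference into a \textbf{metric} contribution and a \textbf{measure} contribution. For the metric part, $\mathrm{dis}(\iota)=\varepsilon$ gives $|\rho_G(u,v)-\rho_{\mathcal M}(\iota(u),\iota(v))|\le\varepsilon$, while $v\in S^G_{\ell,r}(u)$ with $r\ll\ell$ forces both distances and both optimal transport costs to be of order $\ell$; propagating the $\varepsilon$ denominator error through $1-\mathcal{T}/\rho$ therefore contributes to $|\kappa_G^\delta-\kappa_{\mathcal M}^\delta|$ at the level of $\varepsilon/\ell^2$. For the measure part one must control, for $w\in\{u,v\}$, the Wasserstein distance between the pushforward $\iota_*m^\delta_w$ of the uniform graph–ball measure and the uniform manifold–ball measure $\mu^\delta_{\iota(w)}$: using that $\mathcal{T}$ is a metric on $\mathcal{P}(\mathcal{M})$ and that pushing a transport plan forward under $\iota$ changes its cost by at most $\varepsilon$, the whole measure contribution is governed by $\mathcal{T}_{\mathcal M}(\iota_*m^\delta_w,\mu^\delta_{\iota(w)})$.

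To bound that distance I would pass to $T_{\iota(w)}\mathcal{M}\cong\mathbb{R}^D$ via $\exp_{\iota(w)}^{-1}$: by Eq.~\ref{equation: DistortionExponential} this distorts distances on the $\delta$-ball by only $\mathcal{O}(\delta^3)$, and its Jacobian sends $\mu^\delta_{\iota(w)}$ to within $\mathcal{O}(\delta^3)$ (in Wasserstein distance) of the uniform measure on the Euclidean $\delta$-ball. The image of $\iota(B^G_\delta(w))$ is then an $\mathcal{O}(\varepsilon)$-net inside that ball; arguing as in the recursive construction of Theorem~\ref{theorem: AngleAlgorithm}, the hypothesis $\varepsilon\ll\delta^3$ allows this net to be matched injectively to a regular Euclidean grid (its points stay separated under a distortion of order $\varepsilon+\delta^3$), while $\delta^4\ll\varepsilon$ prevents the grid from being finer than $|B^G_\delta(w)|$ permits. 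The Euclidean semidiscrete optimal–transport estimates of Hartmann and Schumacher \cite{HartmannSchumacher_Semidiscrete} then bound the Wasserstein distance between the uniform measure on such a grid and the uniform ball measure, and collecting these pieces gives $\mathcal{T}_{\mathcal M}(\iota_*m^\delta_w,\mu^\delta_{\iota(w)})=\mathcal{O}(\varepsilon+\delta^3)$. Feeding this back through $1-\mathcal{T}/\rho$ and the factor $\delta^{-2}$ produces the remaining entries of $\sigma$, using $\varepsilon\ll\delta^3$ and $\varepsilon\ll\delta^2\ell^2$ to see that everything is indeed $\mathcal{O}(\sigma)$.

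The main obstacle is this semidiscrete step together with the consistency of the scale hierarchy: $\varepsilon$ must be small enough — relative to both $\delta^3$ and $\delta^2\ell^2$ — for the exponential–map transfer to be accurate and for the net–to–grid matching to be unambiguous, and yet large enough (relative to $\delta^4$) that the graph ball actually contains enough vertices to realise a grid of the needed density, all while $\delta\ll\ell$ keeps the Ollivier–Ricci expansion valid. Checking that the window $\delta^4\ll\varepsilon\ll\min(\delta^2\ell^2,\delta^3)$, $\delta\ll\ell$ is nonempty — and, in the setting of Theorem~\ref{theorem: MainTheorem}, compatible with the further constraints imposed by the integral and trace approximations — is the delicate part; each of the individual estimates (distortion of $\iota$, distortion of $\exp$, the manifold Wasserstein expansion, the Hartmann–Schumacher bound) is routine once the admissible range of scales has been fixed.
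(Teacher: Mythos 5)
Your route is essentially the paper's own proof in appendix \ref{appendix: Semidiscrete}: use Eq.~\ref{equation: OllivierError} for the $\mathcal{O}(\ell)$ term, reduce the rest to the comparison of $\kappa_G^\delta(u,v)$ with $\kappa_{\mathcal M}^\delta(\iota(u),\iota(v))$, split off the metric distortion of $\iota$ (the $\varepsilon/(\delta^2\ell)$--type contribution), reduce the measure contribution by the triangle inequality to the semidiscrete discrepancy $\mathcal{T}_{\mathcal M}(\iota_*m^\delta_w,\mu^\delta_{\iota(w)})$, and bound that by transferring to the tangent space via the exponential map (Eq.~\ref{equation: DistortionExponential}), matching the image of the graph ball injectively to a regular Euclidean grid and invoking the semidiscrete optimal transport result of \cite{HartmannSchumacher_Semidiscrete}. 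The paper additionally routes the comparison through the uniform measure on the $\varepsilon$-thickening $A_u=\iota(B_\delta^G(u))_\varepsilon$ and bounds $\mathcal{T}_{\mathcal M}(\mu_A^u,\mu^\delta_{\iota(u)})\leq\mathcal{T}_{\mathcal M}(\mu^{\delta+\varepsilon}_{\iota(u)},\mu^{\delta-\varepsilon}_{\iota(u)})=\mathcal{O}(\varepsilon)$ with an explicit annulus-spreading plan, where you instead charge an $\mathcal{O}(\delta^3)$ Wasserstein error to the Jacobian of $\exp$; these are cosmetic differences of organisation.

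One piece of your scale bookkeeping is inverted relative to the paper, and as stated that step does not go through. Injectivity of the net-to-grid matching is not supplied by $\varepsilon\ll\delta^3$: with $n=|B^G_\delta(u)|=\mathcal{O}((\delta/\varepsilon)^D)$ points, the paper takes the grid spacing to be $\mathcal{O}(n^{-1/D})=\mathcal{O}(\varepsilon/\delta)$, and identifiability after the $\mathcal{O}(\varepsilon+\delta^3)$ displacement requires $\varepsilon+\delta^3\ll n^{-1/D}$, which is exactly where the lower bound $\delta^4\ll\varepsilon$ comes from; it is not a condition preventing the grid from being finer than $|B^G_\delta(w)|$ permits (the grid has $\mathcal{O}(n)$ points by construction). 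Conversely, the upper bounds $\varepsilon\ll\delta^2\ell^2,\delta^3$ (the appendix in fact ends up using $\varepsilon\ll\delta^3\ell$) are what make the accumulated transport-cost error---which the paper records as $\mathcal{O}(\varepsilon+\delta^3+n^{-1/D})$ with $n^{-1/D}=\mathcal{O}(\varepsilon/\delta)$, rather than your $\mathcal{O}(\varepsilon+\delta^3)$---small compared with $\delta^2\ell$ once divided by $\delta^2\rho_G(u,v)$. With the roles of the two constraints restored, your argument coincides with the paper's.
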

	
	Recall that we have a graph $G$, a compact Riemannian manifold $\mathcal{M}$ and a $\varepsilon$-isometry $\iota:G\rightarrow \mathcal{M}$.
	
	We begin by defining
	\begin{align}
	    \delta \kappa(u,v)&= \frac{2(D+2)}{\delta^2}\left|\kappa_G^\delta(u,v)-\kappa_{\mathcal{M}}^\delta(\iota(u),\iota(v))\right|\nonumber\\
	    &=\frac{2(D+2)}{\delta^2\ell}|\mathcal{T}_G^\delta(m_u^\delta,m_v^\delta)-\mathcal{T}_{\mathcal{M}}^\delta(\mu_{\iota(u)}^\delta,\mu_{\iota(v)}^\delta)|\nonumber\\
	    &\qquad +\mathcal{O}\left(\frac{\varepsilon}{\delta^2\ell}\right)
	\end{align}
	where we have used the fact that ${\rm dis}(\iota)\leq \varepsilon$ and $r\ll \ell$ in moving to the second line. We remark that
	\begin{align}
	    |\mathcal{T}_G(m_u^\delta,m_v^\delta)-\mathcal{T}_{\mathcal{M}}(\mu_{\iota(u)}^\delta,\mu_{\iota(v)}^\delta)|=\mathcal{O}(\varepsilon)
	\end{align}
	for all pairs $(u,v)\in G\times G$ such that $u\in G$ and $v\in S^G_{\ell,r}(u)$ since $\iota$ is a $\mathcal{O}(\varepsilon)$-net so
	\begin{align}
	    \delta\kappa(u,v)&=\frac{2(D+2)}{\delta^2\ell}|\mathcal{T}_{\mathcal{M}}^\delta(\iota_*m_u^\delta,\iota_*m_v^\delta )-\mathcal{T}_{\mathcal{M}}^\delta(\mu_{\iota(u)}^\delta,\mu_{\iota(v)}^\delta)|\nonumber\\
	    &\qquad +\mathcal{O}\left(\frac{\varepsilon}{\delta^2\ell}\right).
	\end{align}
	$\delta\kappa$ is thus small as long as 
	\begin{align}
	    \varepsilon&\ll \delta^2\ell \nonumber |\mathcal{T}_{\mathcal{M}}^\delta(\iota_*m_u^\delta,\iota_*m_v^\delta )-\mathcal{T}_{\mathcal{M}}^\delta(\mu_{\iota(u)}^\delta,\mu_{\iota(v)}^\delta)|&\ll \delta^2\ell
	\end{align}
	for all relevant pairs $(u,v)$; by the triangle inequality the latter follows as long as
	\begin{align}\label{error: Semidiscrete}
	    \mathcal{T}_{\mathcal{M}}^\delta(\iota_*m_u^\delta,\mu_{\iota(u)}^\delta)\ll \delta^2\ell
	\end{align}
	for all $u\in G$. Thus showing convergence of the Ollivier curvature has reduced to showing that the \textit{semidiscrete discrepancy} $\mathcal{T}_{\mathcal{M}}^\delta(\iota_*m_u^\delta,\mu_{\iota(u)}^\delta)$ is small.
	
	For any $u\in G$, let $A_u$ be the set $\iota(B_\delta^G(u))_\varepsilon$, i.e. the $\varepsilon$-thickening of the set $\iota(B_\delta^G(u))$. $A_u$ is measurable set and let $\mu_{A}^u$ denote the uniform measure on $A_u$:
	\begin{align}
	    \mu_A^u(E)=\frac{{\rm vol}(E\cap A_u)}{{\rm vol}(A_u)}.
	\end{align}
	By subadditivity we have
	\begin{align}
	    \mathcal{T}_{\mathcal{M}}^\delta(\iota_*m_u^\delta,\mu_{\iota(u)}^\delta)\leq \mathcal{T}_{\mathcal{M}}^\delta(\iota_*m_u^\delta,\mu_A^u)+\mathcal{T}_{\mathcal{M}}^\delta(m_A^u,\mu_{\iota(u)}^\delta)
	\end{align}
	so the inequality \ref{error: Semidiscrete} holds as long as the analogous inequality holds for each Wasserstein distance on the right-hand side respectively. 
	
	Since $\iota(G)$ is a $\varepsilon$-net in $\mathcal{M}$ we have that
	\begin{align}
	    B_{\delta-\varepsilon}^{\mathcal{M}}(\iota(u))\subseteq A_u\subseteq B_{\delta+\varepsilon}^{\mathcal{M}}(\iota(u))
	\end{align}
	so
	\begin{align}
	    \mathcal{T}_{\mathcal{M}}^\delta(m_A^u,\mu_{\iota(u)}^\delta)\leq \mathcal{T}_{\mathcal{M}}^\delta(\mu_{\iota(u)}^{\delta+\varepsilon},\mu_{\iota(u)}^{\delta-\varepsilon}).
	\end{align}
	We may bound the right-hand side of the above, however, by constructing an explicit transport plan from $\mu_{\iota(u)}^{\delta+\varepsilon}$ to $\mu_{\iota(u)}^{\delta-\varepsilon}$. One obvious candidate is as follows:
	\begin{enumerate}
	    \item We assume we have dirt distributed according to $\mu_{\iota(u)}^{\delta+\varepsilon}$. We leave all of the dirt on the subset $B_{\delta-\varepsilon}^{\mathcal{M}}(\iota(u))$ where it is. Since the dirt is not moving this contributes nothing to the total transport cost.
	    \item Since the total amount of earth is normalised, the remaining dirt on $B_{\delta+\varepsilon}^{\mathcal{M}}(\iota(u))\backslash B_{\delta-\varepsilon}^{\mathcal{M}}(\iota(u))$ is to be spread evenly on $B_{\delta-\varepsilon}^{\mathcal{M}}(\iota(u))$. However this is done, the contribution to the cost can be kept to order
	    \begin{align}
	        \mathcal{O}\left((\delta+\varepsilon)\frac{{\rm vol}(B_{\delta+\varepsilon}^{\mathcal{M}}(\iota(u))\backslash B_{\delta-\varepsilon}^{\mathcal{M}}(\iota(u)))}{B_{\delta-\varepsilon}^{\mathcal{M}}(\iota(u))}\right)=\mathcal{O}(\varepsilon).
	    \end{align}
	\end{enumerate}
	Thus $\mathcal{T}_{\mathcal{M}}^\delta(\mu_{\iota(u)}^{\delta+\varepsilon},\mu_{\iota(u)}^{\delta-\varepsilon})=\mathcal{O}(\varepsilon)$ which is sufficiently small since we have assumed $\varepsilon\ll \delta^2\ell$.
	
	It thus remains to show that
	\begin{align}
	    \mathcal{T}_{\mathcal{M}}(\iota_*m_u^\delta,\mu_A^u)\ll \delta^2\ell. 
	\end{align}
	To do this, let us first define $A= \exp_u^{-1}({\rm supp}(\iota_*m_u^\delta))\subseteq \mathbb{R}^D$ and $B= \exp_{\iota(u)}^{-1}(A_u)\subseteq \mathbb{R}^D$ and let $m_A$ and $\mu_B$ denote the uniform Euclidean measures on $A$ and $B$ respectively, i.e.
	\begin{align}
	    m_A(E)=\frac{|A\cap E|}{|A|} && \mu_B(E)=\frac{\lambda(B\cap E) }{\lambda(B)}
	\end{align}
	where $\lambda$ is the $D$-dimensional Lebesgue measure. By construction $(\exp_{\iota(u)})_*m_A=\iota_*m_u^\delta$ and $(\exp_{\iota(u)})_*\mu_B=\mu_A^u$. Thus, by equation \ref{equation: DistortionExponential} and the fact that the pushforward of a transport plan between two measures is a transport plan between the pushforwards of those measures, we see that
	\begin{align}
	    \mathcal{T}_{\mathcal{M}}(\iota_*m_u^\delta,\mu_A^u)\leq \mathcal{T}_{D}(m_A,\mu_B)+\mathcal{O}(\delta^3).
	\end{align}
	This is sufficiently small as long as
	\begin{align}
	    \delta\ll \ell  && \mathcal{T}_{D}(m_A,\mu_B)\ll \delta^2\ell. 
	\end{align}
	Note that $\mathcal{T}_D= \mathcal{T}_{\mathbb{R}^D}$. Let $n= |A|$ and let $\tilde{n}$ be a positive integer such that $\lfloor\tilde{n}\omega_D(\delta-\varepsilon)^D(\delta+\varepsilon)^{-D}(1-\mathcal{O}(\varepsilon))\rfloor=n$. The idea is that if we evenly and deterministically distribute $\tilde{n}$ points in the cube $(-\delta-\varepsilon,\delta+\varepsilon)^D$, $n(1-\mathcal{O}(\varepsilon))$ of those points will lie in the ball $B_{\delta-\varepsilon}^D$, and the fraction $n/\tilde{n}$ is the same as the fraction $\lambda(B)/\lambda((-\delta-\varepsilon,\delta+\varepsilon)^D)$. Also $\tilde{n}=\mathcal{O}(n)$. The minimal distance between points in the even grid is $\mathcal{O}(n^{-1/D})$ trivially. Thus the minimal distance between points in the pushforwards of the grid of points under the exponential map is $\mathcal{O}(n^{-1/D}-\delta^3)$ so if $\varepsilon\ll n^{-1/D}-\delta^3$ there is at most one point of the grid within a distance $\varepsilon$ of some point of $\exp_u(A)$. Moreover, since $\iota(G)$ is a $\varepsilon$-net in $\mathcal{M}$ the grid points in $B_{\delta-\varepsilon}^D$ under the exponential map all lie within a distance $\varepsilon$ of some point of $B_{\delta-\varepsilon}^{\mathcal{M}}(\iota(u))\cap \iota(G)$. This defines a one-to-one correspondence between grid points in $B_{\delta-\varepsilon}^D$ and elements of $A\cap B_{\delta-\varepsilon}^D$, which in turn essentially defines a deterministic transport plan between the empirical measure on the grid points intersecting with $B$ and the empirical measure on $A$ (elements of $A\backslash B_{\delta-\varepsilon}^D$ are matched to random grid points). The cost of this plan will be $\mathcal{O}((1-\varepsilon)\delta^3+\varepsilon\delta)=\mathcal{O}(\delta^3)$. Thus we can consider the transport cost from the empirical measure on the intersection of the uniform grid with $A$ to $\mu_B$; but since $\varepsilon\ll n^{-1/D}$ this is less than the transport cost of the grid points to the uniform measure on $\mathcal{V}$ where $\mathcal{V}$ is the union over grid points of the Voronoi cells centred at each such grid-point. It is known that the transport plan that sends each Voronoi cell to its centre is optimal \cite{HartmannSchumacher_Semidiscrete} and has transport cost $\mathcal{O}(n^{-1/D})$. With this we finally find that we may approximate the Ollivier curvature on $\mathcal{M}$ by the Ollivier curvature on $G$ as long as
	\begin{align}
	    \varepsilon+\delta^3\ll n^{-\frac{1}{D}}\ll \delta^2 \ell  && \delta\ll \ell 
	\end{align}
	Noting that
	\begin{widetext}
	\begin{align}
	    \mathcal{O}\left(\frac{\delta^D}{\varepsilon^D}\right)=\left\lfloor\frac{{\rm vol}(B_{\delta}^{\mathcal{M}}(\iota(u)))}{{\rm vol}(B_{\varepsilon}^{\mathcal{M}}(\iota(u)))}\right\rfloor \leq n\leq N-\left\lfloor\frac{{\rm vol}(\mathcal{M}\backslash  B_{\delta}^{\mathcal{M}}(\iota(\varepsilon)))}{{\rm vol}(B_\varepsilon^{\mathcal{M}}(\iota(u)))}\right\rfloor =\mathcal{O}\left(\frac{\delta^D}{\varepsilon^D}\right)
	\end{align}
	\end{widetext}
	we see that $n^{-1/D}=\mathcal{O}(\varepsilon/\delta)$ and $n^{-1/D}\ll \delta^2\ell$ if $\varepsilon\ll \delta^3\ell\ll \delta^3$. Thus $\varepsilon+\delta^3\ll n^{-1/D}$ iff $\delta^3\ll n^{-1/D}$, i.e. $\delta^4\ll \varepsilon$. Thus convergence of the Ollivier curvature holds as long as
	\begin{align}
	    \delta^4\ll \varepsilon\ll \delta^3\ell
	\end{align}
	where $\delta\ll \ell$ necessarily as a result of the above.
	\section{Proof of Theorem \ref{theorem: MainTheorem}}\label{appendix: MainTheorem}
	We now prove the main theorem.
    \begin{proof}[Proof of Theorem \ref{theorem: MainTheorem}]
	Given the conditions in lemma \ref{lemma: OllivierCurvatureMainTheorem}, i.e.
	\begin{align}
		\delta^4\ll \varepsilon\ll \min(\delta^2\ell^2,\delta^3) && \delta\ll \ell,\nonumber
	\end{align}
	and $r\ll \ell$, we have
	\begin{align}
		\frac{2(D+2)}{\delta^2}\kappa_G^{\delta}(u,v)&=\nonumber\text{Ric}(\exp_p^{-1}(\iota(v)),\exp_p^{-1}(\iota(v)))+\delta\kappa \nonumber
	\end{align}
	where
	\begin{align}
		\delta\kappa=\nonumber\mathcal{O}\left(\max\left(\ell,\frac{\varepsilon}{\delta^2\ell^2},\frac{\varepsilon}{\delta^3}\right)\right).
	\end{align}
	But by theorem \ref{theorem: AngleAlgorithm} we have an angular assignment $\theta_u$ on $n=2M^{D-1}$ vertices at $u\in G$ such that
	\begin{align}
		\rho_{\mathcal{M}}(\exp_{\iota(u)}u_{\boldsymbol{m}},\iota(v_{\boldsymbol{m}}))=\mathcal{O}(\ell^3)\nonumber
	\end{align}
	as long as
	\begin{align}
		M\ell^2\ll 1 && 2M^{D-1}=\mathcal{O}(r\ell^{D-1}/\varepsilon^D)\nonumber.
	\end{align}
	Then 
	\begin{align}
		\rho_{D}(u_{\boldsymbol{m}},v_{\boldsymbol{m}})=\mathcal{O}(\ell^3)\nonumber
	\end{align} 
	and we may Taylor expand $\text{Ric}$ about $\theta_u(v)$ for each $v\in S_{\ell,r}^G(u)$ to obtain
	\begin{align}
		\text{Ric}(\exp_p^{-1}(\iota(v)),\exp_p^{-1}(\iota(v)))=\text{Ric}(\theta_u(v),\theta_u(v))+\mathcal{O}(\ell^3)\nonumber
	\end{align}
	i.e.
	\begin{align}
		\left|\kappa_G^{\delta}(u,v)-\text{Ric}(\theta_u(v),\theta_u(v))\right|=\mathcal{O}(\max(\ell^3,\delta \kappa))\nonumber.
	\end{align}
	But noting that
	\begin{align}
		\mathcal{A}_{DEH}(G;\delta,\ell,r)=\frac{\omega_D}{N}\sum_{u\in G}\text{tr}_G^u(f)\nonumber,
	\end{align}
	we have by corollary \ref{corollary: TraceApproximation} that
	\begin{align}
		\mathcal{A}_{DEH}(G;\delta,\ell,r)&=\frac{\omega_D}{N}\sum_{u\in G}\text{tr}_G^u(f)\nonumber\\
		&=\frac{\omega_D}{N}\sum_{u\in G}\text{tr}_D(\text{Ric}_{\iota(u)})\nonumber\\
		&\qquad +\mathcal{O}(\max(\ell^3,\delta \kappa,M^{-1}))\nonumber\\
		&=\frac{\omega_D}{N}\sum_{u\in G}R(\iota(u))\nonumber\\
		&\qquad +\mathcal{O}(\max(\ell^3,\delta \kappa,M^{-1}))\nonumber
	\end{align}
	where $R$ is the scalar curvature. The statement then follows by lemma \ref{lemma: ApproximatingIntegrals} and the definition of $\mathcal{A}_{EH}=\text{vol}_{\mathcal{M}}(R)$ as long as $\varepsilon=N^{-1/D}$.
	
	We have thus derived the desired result given the following constraints:
	\begin{align}
			\varepsilon&=N^{-\frac{1}{D}} & \delta^4&\ll \varepsilon\ll \min(\delta^2\ell^2,\delta^3) \nonumber\\ \varepsilon&\ll \delta,\: r\ll \ell & M\ell^2&\ll 1\ll M \nonumber\\ 2M^{D-1}&=\mathcal{O}\left(\frac{r\ell^{D-1}}{\varepsilon^D}\right)\nonumber.
	\end{align}
	Note that these constraints automatically ensure that the error vanishes as $N\rightarrow \infty$. We show that given the definitions \ref{equation: Scales}, the constraints above follow from the constraints \ref{constraint: Scales}: $\varepsilon=N^{-1/D}$ is directly stated in one of these constraints as required. In particular it is easy to see that $\delta^4\ll \varepsilon$ comes from $1<4aD$ which is assumed in constraint \ref{constraint: Scales}. On the other hand $\varepsilon\ll \delta^3$ is equivalent to $3aD<1$. The condition $\varepsilon\ll \delta^2\ell^2$ is equivalent to $2(a+b)<1$. Also $\delta\ll \ell$ implies $a>b$ so both of these constraints follow from $(3a+b)<1$, which is also assumed in \ref{constraint: Scales}. $\varepsilon\ll r\ll \ell$ gives $c\ll b<1/D$ while $a<1/D$ implies $\varepsilon\ll \delta$. It thus remains to derive the constraints for $M$. But if $0<d$, $1\ll M$ while $M\ell^2\ll 1$ if $d<2b$. Also $M=\mathcal{O}(r\ell^{D-1}/\varepsilon^D)$ is equivalent to 
	\begin{align}
		d(D-1)\leq 1-(D-1)b-c\text{ or } b\leq \frac{1-c}{D-1}-d\nonumber
	\end{align}
	where we have used the fact that $d>0$. This gives us the final of the constraints in equation \ref{constraint: Scales}.
	
	Finally it remains to show that the constraints \ref{constraint: Scales} are consistent. Pick some $\epsilon>0$ such that $\epsilon<1/4$ and let
	\begin{align}
		a=c=\frac{1+\epsilon}{4D} && b=\frac{1-4\epsilon}{4D}.\nonumber
	\end{align}
	Clearly, $0<b<a=c<1/D$ trivially. For the second constraint note that.
	\begin{align}
		(3a+b)D=1-\frac{1}{4}\epsilon<1<1+\epsilon=4aD\nonumber
	\end{align}
	as required. For the final inequality we first note that since $b>0$ we can pick any $d>0$ such that $d<(1-4\varepsilon)/2D$ to obtain the desired result here. On the other hand, since $d>0$, $(1-c)/(D-1)-d>(1-c)/(D-1)-2b$ and the desired inequality certainly holds if $3b(D-1)<(1-c)$. Multiplying both sides by $4D$ we obtain
	\begin{align}
		3(D-1)(1-4\epsilon)<4D-1-\epsilon\text{ or }(13-12D)\epsilon<D+2. \nonumber
	\end{align}
	The left-hand side of the second expression is negative for all $D>1$ while the right-hand side is positive so the desired inequality holds for arbitrary choices of $\epsilon>0$. On the other hand in $D=1$ we required $\epsilon<3$ which has already been imposed by the choicce $\epsilon<1/4$.
    \end{proof}
    \bibliography{Ref}

\begin{thebibliography}{104}%
\makeatletter
\providecommand \@ifxundefined [1]{%
 \@ifx{#1\undefined}
}%
\providecommand \@ifnum [1]{%
 \ifnum #1\expandafter \@firstoftwo
 \else \expandafter \@secondoftwo
 \fi
}%
\providecommand \@ifx [1]{%
 \ifx #1\expandafter \@firstoftwo
 \else \expandafter \@secondoftwo
 \fi
}%
\providecommand \natexlab [1]{#1}%
\providecommand \enquote  [1]{``#1''}%
\providecommand \bibnamefont  [1]{#1}%
\providecommand \bibfnamefont [1]{#1}%
\providecommand \citenamefont [1]{#1}%
\providecommand \href@noop [0]{\@secondoftwo}%
\providecommand \href [0]{\begingroup \@sanitize@url \@href}%
\providecommand \@href[1]{\@@startlink{#1}\@@href}%
\providecommand \@@href[1]{\endgroup#1\@@endlink}%
\providecommand \@sanitize@url [0]{\catcode `\\12\catcode `\$12\catcode
  `\&12\catcode `\#12\catcode `\^12\catcode `\_12\catcode `\%12\relax}%
\providecommand \@@startlink[1]{}%
\providecommand \@@endlink[0]{}%
\providecommand \url  [0]{\begingroup\@sanitize@url \@url }%
\providecommand \@url [1]{\endgroup\@href {#1}{\urlprefix }}%
\providecommand \urlprefix  [0]{URL }%
\providecommand \Eprint [0]{\href }%
\providecommand \doibase [0]{https://doi.org/}%
\providecommand \selectlanguage [0]{\@gobble}%
\providecommand \bibinfo  [0]{\@secondoftwo}%
\providecommand \bibfield  [0]{\@secondoftwo}%
\providecommand \translation [1]{[#1]}%
\providecommand \BibitemOpen [0]{}%
\providecommand \bibitemStop [0]{}%
\providecommand \bibitemNoStop [0]{.\EOS\space}%
\providecommand \EOS [0]{\spacefactor3000\relax}%
\providecommand \BibitemShut  [1]{\csname bibitem#1\endcsname}%
\let\auto@bib@innerbib\@empty
\bibitem [{\citenamefont {Gauss}(1902)}]{Gauss_Curvature}%
  \BibitemOpen
  \bibfield  {author} {\bibinfo {author} {\bibfnamefont {C.~F.}\ \bibnamefont
  {Gauss}},\ }\href@noop {} {\emph {\bibinfo {title} {General Investigations of
  Curved Surfaces of 1827 and 1825}}}\ (\bibinfo  {publisher} {Princeton
  University Press},\ \bibinfo {year} {1902})\BibitemShut {NoStop}%
\bibitem [{\citenamefont {Riemann}(2016)}]{Riemann_Hypotheses}%
  \BibitemOpen
  \bibfield  {author} {\bibinfo {author} {\bibfnamefont {B.}~\bibnamefont
  {Riemann}},\ }\href {https://doi.org/10.1007/978-3-319-26042-6} {\emph
  {\bibinfo {title} {On the Hypotheses Which Lie at the Bases of Geometry}}},\
  edited by\ \bibinfo {editor} {\bibfnamefont {J.}~\bibnamefont {J\"{o}st}},\
  Classic Texts in the Sciences\ (\bibinfo  {publisher} {Birkh\"{a}user},\
  \bibinfo {year} {2016})\BibitemShut {NoStop}%
\bibitem [{\citenamefont {Veblen}\ and\ \citenamefont
  {Whitehead}(1932)}]{VeblenWhitehead_FoundationsDiffGeo}%
  \BibitemOpen
  \bibfield  {author} {\bibinfo {author} {\bibfnamefont {O.}~\bibnamefont
  {Veblen}}\ and\ \bibinfo {author} {\bibfnamefont {J.~H.~C.}\ \bibnamefont
  {Whitehead}},\ }\href@noop {} {\emph {\bibinfo {title} {The Foundations of
  Differential Geometry}}},\ \bibinfo {series} {Cambridge Tracts in Mathematics
  and Mathematical Physics}\ No.~\bibinfo {number} {29}\ (\bibinfo  {publisher}
  {Cambridge University Press},\ \bibinfo {year} {1932})\BibitemShut {NoStop}%
\bibitem [{\citenamefont {Weyl}(2009)}]{Weyl_Riemann}%
  \BibitemOpen
  \bibfield  {author} {\bibinfo {author} {\bibfnamefont {H.}~\bibnamefont
  {Weyl}},\ }\href@noop {} {\emph {\bibinfo {title} {The Concept of a Riemann
  Surface}}}\ (\bibinfo  {publisher} {Dover Publications},\ \bibinfo {year}
  {2009})\BibitemShut {NoStop}%
\bibitem [{\citenamefont {Whitney}(1936)}]{Whitney_Imbedding}%
  \BibitemOpen
  \bibfield  {author} {\bibinfo {author} {\bibfnamefont {H.}~\bibnamefont
  {Whitney}},\ }\bibfield  {journal} {\bibinfo  {journal} {Annals of
  Mathematics}\ }\textbf {\bibinfo {volume} {37}},\ \href
  {https://doi.org/10.2307/1968482} {10.2307/1968482} (\bibinfo {year}
  {1936})\BibitemShut {NoStop}%
\bibitem [{\citenamefont {Berger}(2003)}]{Berger_Panoramic}%
  \BibitemOpen
  \bibfield  {author} {\bibinfo {author} {\bibfnamefont {M.}~\bibnamefont
  {Berger}},\ }\href@noop {} {\emph {\bibinfo {title} {A Panoramic View of
  Riemannian Geometry}}}\ (\bibinfo  {publisher} {Springer},\ \bibinfo {year}
  {2003})\BibitemShut {NoStop}%
\bibitem [{\citenamefont {Spivak}(1999)}]{Spivak_Comprehensive}%
  \BibitemOpen
  \bibfield  {author} {\bibinfo {author} {\bibfnamefont {M.}~\bibnamefont
  {Spivak}},\ }\href@noop {} {\emph {\bibinfo {title} {A Comprehensive
  Introduction to Differential Geometry}}},\ \bibinfo {edition} {3rd}\ ed.\
  (\bibinfo  {publisher} {Publish or Perish},\ \bibinfo {year}
  {1999})\BibitemShut {NoStop}%
\bibitem [{\citenamefont {Einstein}(2014)}]{Einstein_Meaning}%
  \BibitemOpen
  \bibfield  {author} {\bibinfo {author} {\bibfnamefont {A.}~\bibnamefont
  {Einstein}},\ }\href@noop {} {\emph {\bibinfo {title} {The Meaning of
  Relativity}}},\ \bibinfo {edition} {5th}\ ed.\ (\bibinfo  {publisher}
  {Princeton University Press},\ \bibinfo {year} {2014})\BibitemShut {NoStop}%
\bibitem [{Note1()}]{Note1}%
  \BibitemOpen
  \bibinfo {note} {Historical honesty demands that we note that this was
  probably never intended as a test of spacetime geometry, the Euclidean
  structure of which, it seems, Gauss never seriously doubted. He of course
  recognised the logical possibility of non-Euclidean geometries and remarked
  that the measurement could be regarded as a corroboration of the Euclidean
  nature of space, but it seems the measurement was required for more mundane
  reasons relating to the geodetic survey of Hanover Gauss had been
  commissioned to carry out. See \cite {Breitenberger_Gauss} for more
  details.}\BibitemShut {Stop}%
\bibitem [{\citenamefont {Bobenko}\ \emph {et~al.}(2008)\citenamefont
  {Bobenko}, \citenamefont {Schr\"{o}der}, \citenamefont {Sullivan},\ and\
  \citenamefont {Ziegler}}]{BobenkoEtAl_DDG}%
  \BibitemOpen
  \bibinfo {editor} {\bibfnamefont {A.~I.}\ \bibnamefont {Bobenko}}, \bibinfo
  {editor} {\bibfnamefont {P.}~\bibnamefont {Schr\"{o}der}}, \bibinfo {editor}
  {\bibfnamefont {J.~M.}\ \bibnamefont {Sullivan}},\ and\ \bibinfo {editor}
  {\bibfnamefont {G.~M.}\ \bibnamefont {Ziegler}},\ eds.,\ \href
  {https://doi.org/10.1007/978-3-7643-8621-4} {\emph {\bibinfo {title}
  {Discrete Differential Geometry}}},\ \bibinfo {series} {Oberwolfach
  Seminars}\ No.~\bibinfo {number} {38}\ (\bibinfo  {publisher}
  {Birkh\"{a}user},\ \bibinfo {year} {2008})\BibitemShut {NoStop}%
\bibitem [{\citenamefont {Crane}\ and\ \citenamefont
  {Wardetzky}(2017)}]{CraneWardetzky_GlimpseDDG}%
  \BibitemOpen
  \bibfield  {author} {\bibinfo {author} {\bibfnamefont {K.}~\bibnamefont
  {Crane}}\ and\ \bibinfo {author} {\bibfnamefont {M.}~\bibnamefont
  {Wardetzky}},\ }\href@noop {} {\bibfield  {journal} {\bibinfo  {journal}
  {Notices of the American Mathematical Society}\ }\textbf {\bibinfo {volume}
  {64}} (\bibinfo {year} {2017})}\BibitemShut {NoStop}%
\bibitem [{\citenamefont {Crane}(2020)}]{Crane_IntroDDG}%
  \BibitemOpen
  \bibfield  {author} {\bibinfo {author} {\bibfnamefont {K.}~\bibnamefont
  {Crane}},\ }\href@noop {} {\bibinfo {title} {Discrete differential geometry:
  An applied introduction}} (\bibinfo {year} {2020}),\ \bibinfo {note}
  {e-print}\BibitemShut {NoStop}%
\bibitem [{\citenamefont {Scoville}(2019)}]{Scoville_DiscreteMT}%
  \BibitemOpen
  \bibfield  {author} {\bibinfo {author} {\bibfnamefont {N.~A.}\ \bibnamefont
  {Scoville}},\ }\href@noop {} {\emph {\bibinfo {title} {Discrete Morse
  Theory}}},\ \bibinfo {series} {Student Mathematial Library}, Vol.~\bibinfo
  {volume} {90}\ (\bibinfo  {publisher} {American Mathematical Society},\
  \bibinfo {year} {2019})\BibitemShut {NoStop}%
\bibitem [{\citenamefont {Kozlov}(2008)}]{Kozlov_CombAT}%
  \BibitemOpen
  \bibfield  {author} {\bibinfo {author} {\bibfnamefont {D.}~\bibnamefont
  {Kozlov}},\ }\href {https://doi.org/10.1007/978-3-540-71962-5} {\emph
  {\bibinfo {title} {Combinatorial Algebraic Topology}}},\ \bibinfo {series}
  {Algorithms and Computation in Mathematics}, Vol.~\bibinfo {volume} {21}\
  (\bibinfo  {publisher} {Springer},\ \bibinfo {year} {2008})\BibitemShut
  {NoStop}%
\bibitem [{\citenamefont {Carlsson}(2009)}]{Carlsson_TopologyData}%
  \BibitemOpen
  \bibfield  {author} {\bibinfo {author} {\bibfnamefont {G.}~\bibnamefont
  {Carlsson}},\ }\href {https://doi.org/10.1090/S0273-0979-09-01249-X}
  {\bibfield  {journal} {\bibinfo  {journal} {Bulletin of the American
  Mathematical Society}\ }\textbf {\bibinfo {volume} {46}},\ \bibinfo {pages}
  {255} (\bibinfo {year} {2009})}\BibitemShut {NoStop}%
\bibitem [{\citenamefont {Bianconi}(2015)}]{Bianconi_Challenges}%
  \BibitemOpen
  \bibfield  {author} {\bibinfo {author} {\bibfnamefont {G.}~\bibnamefont
  {Bianconi}},\ }\bibfield  {journal} {\bibinfo  {journal} {Europhysics
  Letters}\ }\textbf {\bibinfo {volume} {111}},\ \href
  {https://doi.org/10.1209/0295-5075/111/56001} {10.1209/0295-5075/111/56001}
  (\bibinfo {year} {2015}),\ \bibinfo {note} {56001}\BibitemShut {NoStop}%
\bibitem [{\citenamefont {Boguna}\ \emph {et~al.}(2021)\citenamefont {Boguna},
  \citenamefont {Bonamassa}, \citenamefont {Domenico}, \citenamefont {Havlin},
  \citenamefont {Krioukov},\ and\ \citenamefont
  {Serrano}}]{BogunaEtAl_NetworkGeometry}%
  \BibitemOpen
  \bibfield  {author} {\bibinfo {author} {\bibfnamefont {M.}~\bibnamefont
  {Boguna}}, \bibinfo {author} {\bibfnamefont {I.}~\bibnamefont {Bonamassa}},
  \bibinfo {author} {\bibfnamefont {M.~D.}\ \bibnamefont {Domenico}}, \bibinfo
  {author} {\bibfnamefont {S.}~\bibnamefont {Havlin}}, \bibinfo {author}
  {\bibfnamefont {D.}~\bibnamefont {Krioukov}},\ and\ \bibinfo {author}
  {\bibfnamefont {M.~A.}\ \bibnamefont {Serrano}},\ }\href@noop {} {\bibinfo
  {title} {Network geometry}} (\bibinfo {year} {2021}),\ \bibinfo {note} {arXiv
  e-print}\BibitemShut {NoStop}%
\bibitem [{\citenamefont {Najman}\ and\ \citenamefont
  {Romon}(2017)}]{NajmanRomon_DiscreteCurvature}%
  \BibitemOpen
  \bibinfo {editor} {\bibfnamefont {L.}~\bibnamefont {Najman}}\ and\ \bibinfo
  {editor} {\bibfnamefont {P.}~\bibnamefont {Romon}},\ eds.,\ \href
  {https://doi.org/10.1007/978-3-319-58002-9} {\emph {\bibinfo {title} {Modern
  Approaches to Discrete Curvature}}}\ (\bibinfo  {publisher} {Springer},\
  \bibinfo {year} {2017})\BibitemShut {NoStop}%
\bibitem [{\citenamefont {Oriti}(2009)}]{Oriti_AppQG}%
  \BibitemOpen
  \bibinfo {editor} {\bibfnamefont {D.}~\bibnamefont {Oriti}},\ ed.,\ \href
  {https://doi.org/10.1017/CBO9780511575549} {\emph {\bibinfo {title}
  {Approaches to Quantum Gravity: Toward a New Understanding of Space, Time and
  Matter}}}\ (\bibinfo  {publisher} {Cambridge University Press},\ \bibinfo
  {year} {2009})\BibitemShut {NoStop}%
\bibitem [{\citenamefont {Eichhorn}(2019)}]{Eichhorn_AsymptoticSafety}%
  \BibitemOpen
  \bibfield  {author} {\bibinfo {author} {\bibfnamefont {A.}~\bibnamefont
  {Eichhorn}},\ }\bibfield  {journal} {\bibinfo  {journal} {Frontiers in
  Astronomy ans Space Sciences}\ }\href
  {https://doi.org/10.3389/fspas.2018.00047} {10.3389/fspas.2018.00047}
  (\bibinfo {year} {2019})\BibitemShut {NoStop}%
\bibitem [{\citenamefont {Regge}(1961)}]{Regge_GRwithoutCoords}%
  \BibitemOpen
  \bibfield  {author} {\bibinfo {author} {\bibfnamefont {T.}~\bibnamefont
  {Regge}},\ }\href {https://doi.org/10.1007/BF02733251} {\bibfield  {journal}
  {\bibinfo  {journal} {Il Nuovo Cimento}\ }\textbf {\bibinfo {volume} {19}},\
  \bibinfo {pages} {558} (\bibinfo {year} {1961})}\BibitemShut {NoStop}%
\bibitem [{\citenamefont {Cheeger}\ \emph {et~al.}(1984)\citenamefont
  {Cheeger}, \citenamefont {M\"{u}ller},\ and\ \citenamefont
  {Schrader}}]{CheegerMullerSchrader_CurvatyrePFSpace}%
  \BibitemOpen
  \bibfield  {author} {\bibinfo {author} {\bibfnamefont {J.}~\bibnamefont
  {Cheeger}}, \bibinfo {author} {\bibfnamefont {W.}~\bibnamefont
  {M\"{u}ller}},\ and\ \bibinfo {author} {\bibfnamefont {R.}~\bibnamefont
  {Schrader}},\ }\href {https://doi.org/10.1007/BF01210729} {\bibfield
  {journal} {\bibinfo  {journal} {Communications in Mathematical Physics}\
  }\textbf {\bibinfo {volume} {92}},\ \bibinfo {pages} {405} (\bibinfo {year}
  {1984})}\BibitemShut {NoStop}%
\bibitem [{\citenamefont {Hamber}(2009)}]{Hamber_QG}%
  \BibitemOpen
  \bibfield  {author} {\bibinfo {author} {\bibfnamefont {H.~W.}\ \bibnamefont
  {Hamber}},\ }\href {https://doi.org/10.1007/978-3-540-85293-3} {\emph
  {\bibinfo {title} {Quantum Gravitation: The Feynman Path Integral
  Approach}}}\ (\bibinfo  {publisher} {Springer},\ \bibinfo {year}
  {2009})\BibitemShut {NoStop}%
\bibitem [{\citenamefont {Ambjørn}\ \emph {et~al.}(2012)\citenamefont
  {Ambjørn}, \citenamefont {G\"{o}rlich}, \citenamefont {Jurkiewicz},\ and\
  \citenamefont {Loll}}]{AGJL}%
  \BibitemOpen
  \bibfield  {author} {\bibinfo {author} {\bibfnamefont {J.}~\bibnamefont
  {Ambjørn}}, \bibinfo {author} {\bibfnamefont {A.}~\bibnamefont
  {G\"{o}rlich}}, \bibinfo {author} {\bibfnamefont {J.}~\bibnamefont
  {Jurkiewicz}},\ and\ \bibinfo {author} {\bibfnamefont {R.}~\bibnamefont
  {Loll}},\ }\href {https://doi.org/10.1016/j.physrep.2012.03.007} {\bibfield
  {journal} {\bibinfo  {journal} {Physics Reports}\ }\textbf {\bibinfo {volume}
  {519}},\ \bibinfo {pages} {127} (\bibinfo {year} {2012})}\BibitemShut
  {NoStop}%
\bibitem [{\citenamefont {Loll}(2019)}]{Loll_CDTReview}%
  \BibitemOpen
  \bibfield  {author} {\bibinfo {author} {\bibfnamefont {R.}~\bibnamefont
  {Loll}},\ }\bibfield  {journal} {\bibinfo  {journal} {Classical and Quantum
  Gravity}\ }\textbf {\bibinfo {volume} {37}},\ \href
  {https://doi.org/10.1088/1361-6382/ab57c7} {10.1088/1361-6382/ab57c7}
  (\bibinfo {year} {2019}),\ \bibinfo {note} {013002}\BibitemShut {NoStop}%
\bibitem [{\citenamefont {Ambjørn}\ \emph {et~al.}(2009)\citenamefont
  {Ambjørn}, \citenamefont {Durhuus},\ and\ \citenamefont
  {J\'{o}nsson}}]{ADJ}%
  \BibitemOpen
  \bibfield  {author} {\bibinfo {author} {\bibfnamefont {J.}~\bibnamefont
  {Ambjørn}}, \bibinfo {author} {\bibfnamefont {B.}~\bibnamefont {Durhuus}},\
  and\ \bibinfo {author} {\bibfnamefont {T.}~\bibnamefont {J\'{o}nsson}},\
  }\href {https://doi.org/10.1017/CBO9780511524417} {\emph {\bibinfo {title}
  {Quantum Geometry: A Statistical Field Theory Approach}}}\ (\bibinfo
  {publisher} {Cambridge University Press},\ \bibinfo {year}
  {2009})\BibitemShut {NoStop}%
\bibitem [{\citenamefont {Francesco}\ \emph {et~al.}(1995)\citenamefont
  {Francesco}, \citenamefont {Ginsparg},\ and\ \citenamefont
  {Zinn-Justin}}]{DiFrancescoGinspargZinnJustin}%
  \BibitemOpen
  \bibfield  {author} {\bibinfo {author} {\bibfnamefont {P.~D.}\ \bibnamefont
  {Francesco}}, \bibinfo {author} {\bibfnamefont {P.}~\bibnamefont
  {Ginsparg}},\ and\ \bibinfo {author} {\bibfnamefont {J.}~\bibnamefont
  {Zinn-Justin}},\ }\href {https://doi.org/10.1016/0370-1573(94)00084-G}
  {\bibfield  {journal} {\bibinfo  {journal} {Physics Reports}\ }\textbf
  {\bibinfo {volume} {254}},\ \bibinfo {pages} {1} (\bibinfo {year}
  {1995})}\BibitemShut {NoStop}%
\bibitem [{\citenamefont {Knizhnik}\ \emph {et~al.}(1988)\citenamefont
  {Knizhnik}, \citenamefont {Polyakov},\ and\ \citenamefont
  {Zamolodchikov}}]{KPZ_Fractal2dQG}%
  \BibitemOpen
  \bibfield  {author} {\bibinfo {author} {\bibfnamefont {V.~G.}\ \bibnamefont
  {Knizhnik}}, \bibinfo {author} {\bibfnamefont {A.~M.}\ \bibnamefont
  {Polyakov}},\ and\ \bibinfo {author} {\bibfnamefont {A.~B.}\ \bibnamefont
  {Zamolodchikov}},\ }\href {https://doi.org/10.1142/S0217732388000982}
  {\bibfield  {journal} {\bibinfo  {journal} {Modern Physics Letters A}\
  }\textbf {\bibinfo {volume} {3}},\ \bibinfo {pages} {819} (\bibinfo {year}
  {1988})}\BibitemShut {NoStop}%
\bibitem [{\citenamefont {David}(1988)}]{David_CFT}%
  \BibitemOpen
  \bibfield  {author} {\bibinfo {author} {\bibfnamefont {F.}~\bibnamefont
  {David}},\ }\href {https://doi.org/10.1142/S0217732388001975} {\bibfield
  {journal} {\bibinfo  {journal} {Modern Physics Letters A}\ }\textbf {\bibinfo
  {volume} {3}},\ \bibinfo {pages} {1651} (\bibinfo {year} {1988})}\BibitemShut
  {NoStop}%
\bibitem [{\citenamefont {Distler}\ and\ \citenamefont
  {Kawai}(1989)}]{DistlerKawai_CFT2DQG}%
  \BibitemOpen
  \bibfield  {author} {\bibinfo {author} {\bibfnamefont {J.}~\bibnamefont
  {Distler}}\ and\ \bibinfo {author} {\bibfnamefont {H.}~\bibnamefont
  {Kawai}},\ }\href {https://doi.org/10.1016/0550-3213(89)90354-4} {\bibfield
  {journal} {\bibinfo  {journal} {Nuclear Physics B}\ }\textbf {\bibinfo
  {volume} {321}},\ \bibinfo {pages} {509} (\bibinfo {year}
  {1989})}\BibitemShut {NoStop}%
\bibitem [{\citenamefont
  {Polyakov}(1981)}]{Polyakov_QuantumGeometryBosonicStrings}%
  \BibitemOpen
  \bibfield  {author} {\bibinfo {author} {\bibfnamefont {A.~M.}\ \bibnamefont
  {Polyakov}},\ }\href {https://doi.org/10.1016/0370-2693(81)90743-7}
  {\bibfield  {journal} {\bibinfo  {journal} {Physics Letters B}\ }\textbf
  {\bibinfo {volume} {103}},\ \bibinfo {pages} {207} (\bibinfo {year}
  {1981})}\BibitemShut {NoStop}%
\bibitem [{\citenamefont {Ambjørn}\ and\ \citenamefont
  {Watabiki}(1995)}]{AmbjornWatabiki_ScalingQG}%
  \BibitemOpen
  \bibfield  {author} {\bibinfo {author} {\bibfnamefont {J.}~\bibnamefont
  {Ambjørn}}\ and\ \bibinfo {author} {\bibfnamefont {Y.}~\bibnamefont
  {Watabiki}},\ }\href {https://doi.org/10.1016/0550-3213(95)00154-K}
  {\bibfield  {journal} {\bibinfo  {journal} {Nuclear Physics B}\ }\textbf
  {\bibinfo {volume} {445}},\ \bibinfo {pages} {129} (\bibinfo {year}
  {1995})}\BibitemShut {NoStop}%
\bibitem [{\citenamefont {Ambjørn}\ \emph {et~al.}(1998)\citenamefont
  {Ambjørn}, \citenamefont {Boulatov}, \citenamefont {Nielsen}, \citenamefont
  {Rolf},\ and\ \citenamefont {Watabiki}}]{AmbjornEtAl_SD2dQG}%
  \BibitemOpen
  \bibfield  {author} {\bibinfo {author} {\bibfnamefont {J.}~\bibnamefont
  {Ambjørn}}, \bibinfo {author} {\bibfnamefont {D.}~\bibnamefont {Boulatov}},
  \bibinfo {author} {\bibfnamefont {J.~L.}\ \bibnamefont {Nielsen}}, \bibinfo
  {author} {\bibfnamefont {J.}~\bibnamefont {Rolf}},\ and\ \bibinfo {author}
  {\bibfnamefont {Y.}~\bibnamefont {Watabiki}},\ }\bibfield  {journal}
  {\bibinfo  {journal} {Journal of High Energy Physics}\ }\textbf {\bibinfo
  {volume} {1998}},\ \href {https://doi.org/10.1088/1126-6708/1998/02/010}
  {10.1088/1126-6708/1998/02/010} (\bibinfo {year} {1998})\BibitemShut
  {NoStop}%
\bibitem [{\citenamefont {Miller}(2019)}]{Miller-LQGReview}%
  \BibitemOpen
  \bibfield  {author} {\bibinfo {author} {\bibfnamefont {J.}~\bibnamefont
  {Miller}},\ }in\ \href {https://doi.org/10.1142/9789813272880_0167} {\emph
  {\bibinfo {booktitle} {Proceedings of the International Congress of
  Mathematicians (ICM 2018)}}},\ \bibinfo {editor} {edited by\ \bibinfo
  {editor} {\bibfnamefont {B.}~\bibnamefont {Sirakov}}, \bibinfo {editor}
  {\bibfnamefont {P.~N.}\ \bibnamefont {de~Souza}},\ and\ \bibinfo {editor}
  {\bibfnamefont {M.}~\bibnamefont {Viana}}}\ (\bibinfo  {publisher} {World
  Scientific},\ \bibinfo {year} {2019})\ pp.\ \bibinfo {pages} {2945--2971},\
  \bibinfo {note} {international Congress of Mathematicians 2018, Rio de
  Janeiro, Brazil, 1 – 9 August 2018}\BibitemShut {NoStop}%
\bibitem [{\citenamefont {Lee}(2017)}]{Lee-ConformalGrowthRates}%
  \BibitemOpen
  \bibfield  {author} {\bibinfo {author} {\bibfnamefont {J.~R.}\ \bibnamefont
  {Lee}},\ }\href@noop {} {\bibinfo {title} {Conformal growth rates and
  spectral geometry on distributional limits of graphs}} (\bibinfo {year}
  {2017}),\ \bibinfo {note} {arXiv e-print}\BibitemShut {NoStop}%
\bibitem [{\citenamefont {Gwynne}\ and\ \citenamefont
  {Miller}(2020)}]{GwynneMiller-RandomWalkRandomPlanarMaps}%
  \BibitemOpen
  \bibfield  {author} {\bibinfo {author} {\bibfnamefont {E.}~\bibnamefont
  {Gwynne}}\ and\ \bibinfo {author} {\bibfnamefont {J.}~\bibnamefont
  {Miller}},\ }\href@noop {} {\bibinfo {title} {Random walk on random planar
  maps: spectral dimension, resistance, and displacement}} (\bibinfo {year}
  {2020}),\ \bibinfo {note} {arXiv e-print}\BibitemShut {NoStop}%
\bibitem [{\citenamefont {Gurau}\ and\ \citenamefont
  {Ryan}(2014)}]{GurauRyan_MelonsBranchedPolymers}%
  \BibitemOpen
  \bibfield  {author} {\bibinfo {author} {\bibfnamefont {R.}~\bibnamefont
  {Gurau}}\ and\ \bibinfo {author} {\bibfnamefont {J.~P.}\ \bibnamefont
  {Ryan}},\ }\href {https://doi.org/10.1007/s00023-013-0291-3} {\bibfield
  {journal} {\bibinfo  {journal} {Annales Henri Poincar\'{e}}\ }\textbf
  {\bibinfo {volume} {15}},\ \bibinfo {pages} {2085} (\bibinfo {year}
  {2014})}\BibitemShut {NoStop}%
\bibitem [{\citenamefont {Aldous}(1991{\natexlab{a}})}]{Aldous_CRTI}%
  \BibitemOpen
  \bibfield  {author} {\bibinfo {author} {\bibfnamefont {D.}~\bibnamefont
  {Aldous}},\ }\href {https://doi.org/10.1214/aop/1176990534} {\bibfield
  {journal} {\bibinfo  {journal} {Annals of Probability}\ }\textbf {\bibinfo
  {volume} {19}},\ \bibinfo {pages} {1} (\bibinfo {year}
  {1991}{\natexlab{a}})}\BibitemShut {NoStop}%
\bibitem [{\citenamefont {Aldous}(1991{\natexlab{b}})}]{Aldous_CRTII}%
  \BibitemOpen
  \bibfield  {author} {\bibinfo {author} {\bibfnamefont {D.}~\bibnamefont
  {Aldous}},\ }in\ \href {https://doi.org/10.1017/CBO9780511662980.003} {\emph
  {\bibinfo {booktitle} {Stochastic Analysis: Proceedings of the Durham
  Symposium on Stochastic Analysis, 1990}}},\ \bibinfo {editor} {edited by\
  \bibinfo {editor} {\bibfnamefont {M.~T.}\ \bibnamefont {Barlow}}\ and\
  \bibinfo {editor} {\bibfnamefont {N.~H.}\ \bibnamefont {Bingham}}}\ (\bibinfo
   {publisher} {Cambridge University Press},\ \bibinfo {year} {1991})\ pp.\
  \bibinfo {pages} {23--70}\BibitemShut {NoStop}%
\bibitem [{\citenamefont {Aldous}(1993)}]{Aldous_CRTIII}%
  \BibitemOpen
  \bibfield  {author} {\bibinfo {author} {\bibfnamefont {D.}~\bibnamefont
  {Aldous}},\ }\href {https://doi.org/10.1214/aop/1176989404} {\bibfield
  {journal} {\bibinfo  {journal} {The Annals of Probability}\ }\textbf
  {\bibinfo {volume} {21}},\ \bibinfo {pages} {248} (\bibinfo {year}
  {1993})}\BibitemShut {NoStop}%
\bibitem [{\citenamefont {Ambjørn}\ \emph {et~al.}(1990)\citenamefont
  {Ambjørn}, \citenamefont {Durhuus},\ and\ \citenamefont
  {J\'{o}nsson}}]{ADJ_SummingGenera}%
  \BibitemOpen
  \bibfield  {author} {\bibinfo {author} {\bibfnamefont {J.}~\bibnamefont
  {Ambjørn}}, \bibinfo {author} {\bibfnamefont {B.}~\bibnamefont {Durhuus}},\
  and\ \bibinfo {author} {\bibfnamefont {T.}~\bibnamefont {J\'{o}nsson}},\
  }\href {https://doi.org/10.1016/0370-2693(90)90337-6} {\bibfield  {journal}
  {\bibinfo  {journal} {Physics Letters B}\ }\textbf {\bibinfo {volume}
  {244}},\ \bibinfo {pages} {403} (\bibinfo {year} {1990})}\BibitemShut
  {NoStop}%
\bibitem [{\citenamefont {J\'{o}nsson}\ and\ \citenamefont
  {Wheater}(1998)}]{JonssonWheater_SDBP}%
  \BibitemOpen
  \bibfield  {author} {\bibinfo {author} {\bibfnamefont {T.}~\bibnamefont
  {J\'{o}nsson}}\ and\ \bibinfo {author} {\bibfnamefont {J.~F.}\ \bibnamefont
  {Wheater}},\ }\href {https://doi.org/10.1016/S0550-3213(98)00027-3}
  {\bibfield  {journal} {\bibinfo  {journal} {Nuclear Physics B}\ }\textbf
  {\bibinfo {volume} {515}},\ \bibinfo {pages} {549} (\bibinfo {year}
  {1998})}\BibitemShut {NoStop}%
\bibitem [{\citenamefont {Bialas}\ \emph {et~al.}(1996)\citenamefont {Bialas},
  \citenamefont {Burda}, \citenamefont {Krzywicki},\ and\ \citenamefont
  {Petersson}}]{BialasEtAl_DTPhaseTransition}%
  \BibitemOpen
  \bibfield  {author} {\bibinfo {author} {\bibfnamefont {P.}~\bibnamefont
  {Bialas}}, \bibinfo {author} {\bibfnamefont {Z.}~\bibnamefont {Burda}},
  \bibinfo {author} {\bibfnamefont {A.}~\bibnamefont {Krzywicki}},\ and\
  \bibinfo {author} {\bibfnamefont {B.}~\bibnamefont {Petersson}},\ }\href
  {https://doi.org/10.1016/0550-3213(96)00214-3} {\bibfield  {journal}
  {\bibinfo  {journal} {Nuclear Physics B}\ }\textbf {\bibinfo {volume}
  {472}},\ \bibinfo {pages} {293} (\bibinfo {year} {1996})}\BibitemShut
  {NoStop}%
\bibitem [{\citenamefont {de~Bakker}(1996)}]{Bakker_DTPhaseTransition}%
  \BibitemOpen
  \bibfield  {author} {\bibinfo {author} {\bibfnamefont {B.~V.}\ \bibnamefont
  {de~Bakker}},\ }\href {https://doi.org/10.1016/S0370-2693(96)01277-4}
  {\bibfield  {journal} {\bibinfo  {journal} {Physics Letters B}\ }\textbf
  {\bibinfo {volume} {389}},\ \bibinfo {pages} {238} (\bibinfo {year}
  {1996})}\BibitemShut {NoStop}%
\bibitem [{\citenamefont {Rindlisbacher}\ and\ \citenamefont
  {de~Forcrand}(2015)}]{RindlisbacherForcrand_FirstOrder}%
  \BibitemOpen
  \bibfield  {author} {\bibinfo {author} {\bibfnamefont {T.}~\bibnamefont
  {Rindlisbacher}}\ and\ \bibinfo {author} {\bibfnamefont {P.}~\bibnamefont
  {de~Forcrand}},\ }\bibfield  {journal} {\bibinfo  {journal} {Journal of High
  Energy Physics}\ }\textbf {\bibinfo {volume} {2015}},\ \href
  {https://doi.org/10.1007/JHEP05(2015)138} {10.1007/JHEP05(2015)138} (\bibinfo
  {year} {2015}),\ \bibinfo {note} {138}\BibitemShut {NoStop}%
\bibitem [{\citenamefont {Ambjørn}\ \emph {et~al.}(2001)\citenamefont
  {Ambjørn}, \citenamefont {Jurkiewicz},\ and\ \citenamefont
  {Loll}}]{AJL_3dCDT}%
  \BibitemOpen
  \bibfield  {author} {\bibinfo {author} {\bibfnamefont {J.}~\bibnamefont
  {Ambjørn}}, \bibinfo {author} {\bibfnamefont {J.}~\bibnamefont
  {Jurkiewicz}},\ and\ \bibinfo {author} {\bibfnamefont {R.}~\bibnamefont
  {Loll}},\ }\href {https://doi.org/10.1103/PhysRevD.64.044011} {\bibfield
  {journal} {\bibinfo  {journal} {Physical Review D}\ }\textbf {\bibinfo
  {volume} {64}},\ \bibinfo {pages} {044011} (\bibinfo {year}
  {2001})}\BibitemShut {NoStop}%
\bibitem [{\citenamefont {Ambjørn}\ \emph {et~al.}(2017)\citenamefont
  {Ambjørn}, \citenamefont {Gizbert-Studnicki}, \citenamefont {G\"{o}rlich},
  \citenamefont {Jurkiewicz}, \citenamefont {Klitgaard},\ and\ \citenamefont
  {Loll}}]{AmbjornEtAl_NewPhase}%
  \BibitemOpen
  \bibfield  {author} {\bibinfo {author} {\bibfnamefont {J.}~\bibnamefont
  {Ambjørn}}, \bibinfo {author} {\bibfnamefont {J.}~\bibnamefont
  {Gizbert-Studnicki}}, \bibinfo {author} {\bibfnamefont {A.}~\bibnamefont
  {G\"{o}rlich}}, \bibinfo {author} {\bibfnamefont {J.}~\bibnamefont
  {Jurkiewicz}}, \bibinfo {author} {\bibfnamefont {N.}~\bibnamefont
  {Klitgaard}},\ and\ \bibinfo {author} {\bibfnamefont {R.}~\bibnamefont
  {Loll}},\ }\bibfield  {journal} {\bibinfo  {journal} {The European Physical
  Journal C}\ }\textbf {\bibinfo {volume} {77}},\ \href
  {https://doi.org/10.1140/epjc/s10052-017-4710-3}
  {10.1140/epjc/s10052-017-4710-3} (\bibinfo {year} {2017})\BibitemShut
  {NoStop}%
\bibitem [{\citenamefont {Ambjørn}\ \emph {et~al.}(2013)\citenamefont
  {Ambjørn}, \citenamefont {Glaser}, \citenamefont {Sato},\ and\ \citenamefont
  {Watabiki}}]{AmbjornEtAl_2DCDTHL}%
  \BibitemOpen
  \bibfield  {author} {\bibinfo {author} {\bibfnamefont {J.}~\bibnamefont
  {Ambjørn}}, \bibinfo {author} {\bibfnamefont {L.}~\bibnamefont {Glaser}},
  \bibinfo {author} {\bibfnamefont {Y.}~\bibnamefont {Sato}},\ and\ \bibinfo
  {author} {\bibfnamefont {Y.}~\bibnamefont {Watabiki}},\ }\href
  {https://doi.org/10.1016/j.physletb.2013.04.006} {\bibfield  {journal}
  {\bibinfo  {journal} {Physics Letters B}\ }\textbf {\bibinfo {volume}
  {722}},\ \bibinfo {pages} {172} (\bibinfo {year} {2013})}\BibitemShut
  {NoStop}%
\bibitem [{\citenamefont {Ambjørn}\ \emph {et~al.}(2018)\citenamefont
  {Ambjørn}, \citenamefont {Gizbert-Studnicki}, \citenamefont {G\"{o}rlich},
  \citenamefont {Jurkiewicz},\ and\ \citenamefont
  {N\'{e}meth}}]{AmbjornEtAl_PhaseStructure}%
  \BibitemOpen
  \bibfield  {author} {\bibinfo {author} {\bibfnamefont {J.}~\bibnamefont
  {Ambjørn}}, \bibinfo {author} {\bibfnamefont {J.}~\bibnamefont
  {Gizbert-Studnicki}}, \bibinfo {author} {\bibfnamefont {A.}~\bibnamefont
  {G\"{o}rlich}}, \bibinfo {author} {\bibfnamefont {J.}~\bibnamefont
  {Jurkiewicz}},\ and\ \bibinfo {author} {\bibfnamefont {D.}~\bibnamefont
  {N\'{e}meth}},\ }\bibfield  {journal} {\bibinfo  {journal} {Journal of High
  Energy Physics}\ }\textbf {\bibinfo {volume} {2018}},\ \href
  {https://doi.org/10.1007/JHEP06(2018)111} {10.1007/JHEP06(2018)111} (\bibinfo
  {year} {2018})\BibitemShut {NoStop}%
\bibitem [{\citenamefont {Durhuus}\ \emph {et~al.}(2010)\citenamefont
  {Durhuus}, \citenamefont {Jonsson},\ and\ \citenamefont
  {John~F}}]{DurhuusJonssonWheater_SDCDT}%
  \BibitemOpen
  \bibfield  {author} {\bibinfo {author} {\bibfnamefont {B.}~\bibnamefont
  {Durhuus}}, \bibinfo {author} {\bibfnamefont {T.}~\bibnamefont {Jonsson}},\
  and\ \bibinfo {author} {\bibfnamefont {W.}~\bibnamefont {John~F}},\ }\href
  {https://doi.org/10.1007/s10955-010-9968-x} {\bibfield  {journal} {\bibinfo
  {journal} {Journal of Statistical Physics}\ }\textbf {\bibinfo {volume}
  {139}},\ \bibinfo {pages} {859} (\bibinfo {year} {2010})}\BibitemShut
  {NoStop}%
\bibitem [{\citenamefont {Surya}(2019)}]{Surya_CSTReview}%
  \BibitemOpen
  \bibfield  {author} {\bibinfo {author} {\bibfnamefont {S.}~\bibnamefont
  {Surya}},\ }\bibfield  {journal} {\bibinfo  {journal} {Living Reviews in
  Relativity}\ }\textbf {\bibinfo {volume} {22}},\ \href
  {https://doi.org/10.1007/s41114-019-0023-1} {10.1007/s41114-019-0023-1}
  (\bibinfo {year} {2019}),\ \bibinfo {note} {5}\BibitemShut {NoStop}%
\bibitem [{\citenamefont {Zeeman}(1964)}]{Zeeman_CausalityLorentzGroup}%
  \BibitemOpen
  \bibfield  {author} {\bibinfo {author} {\bibfnamefont {E.~C.}\ \bibnamefont
  {Zeeman}},\ }\href {https://doi.org/10.1063/1.1704140} {\bibfield  {journal}
  {\bibinfo  {journal} {Journal of Mathematical Physics}\ }\textbf {\bibinfo
  {volume} {5}},\ \bibinfo {pages} {490} (\bibinfo {year} {1964})}\BibitemShut
  {NoStop}%
\bibitem [{\citenamefont {Zeeman}(1967)}]{Zeeman_TopologyMinkowski}%
  \BibitemOpen
  \bibfield  {author} {\bibinfo {author} {\bibfnamefont {E.~C.}\ \bibnamefont
  {Zeeman}},\ }\href {https://doi.org/10.1016/0040-9383(67)90033-X.} {\bibfield
   {journal} {\bibinfo  {journal} {Topology}\ }\textbf {\bibinfo {volume}
  {6}},\ \bibinfo {pages} {161} (\bibinfo {year} {1967})}\BibitemShut {NoStop}%
\bibitem [{\citenamefont {Hawking}\ \emph {et~al.}(1976)\citenamefont
  {Hawking}, \citenamefont {King},\ and\ \citenamefont
  {McCarthy}}]{HawkingKingMcCarthy}%
  \BibitemOpen
  \bibfield  {author} {\bibinfo {author} {\bibfnamefont {S.~W.}\ \bibnamefont
  {Hawking}}, \bibinfo {author} {\bibfnamefont {A.~R.}\ \bibnamefont {King}},\
  and\ \bibinfo {author} {\bibfnamefont {P.~J.}\ \bibnamefont {McCarthy}},\
  }\href {https://doi.org/10.1063/1.522874} {\bibfield  {journal} {\bibinfo
  {journal} {Journal of Mathematical Physics}\ }\textbf {\bibinfo {volume}
  {17}},\ \bibinfo {pages} {174} (\bibinfo {year} {1976})}\BibitemShut
  {NoStop}%
\bibitem [{\citenamefont {Malament}(1977)}]{Malament_CST}%
  \BibitemOpen
  \bibfield  {author} {\bibinfo {author} {\bibfnamefont {D.~B.}\ \bibnamefont
  {Malament}},\ }\href {https://doi.org/10.1063/1.523436} {\bibfield  {journal}
  {\bibinfo  {journal} {Journal of Mathematical Physics}\ }\textbf {\bibinfo
  {volume} {18}},\ \bibinfo {pages} {1399–} (\bibinfo {year}
  {1977})}\BibitemShut {NoStop}%
\bibitem [{\citenamefont {Benincasa}\ and\ \citenamefont
  {Dowker}(2010)}]{BenincasaDowker_ScalCurvCausSet}%
  \BibitemOpen
  \bibfield  {author} {\bibinfo {author} {\bibfnamefont {D.~M.~T.}\
  \bibnamefont {Benincasa}}\ and\ \bibinfo {author} {\bibfnamefont
  {F.}~\bibnamefont {Dowker}},\ }\href
  {https://doi.org/10.1103/PhysRevLett.104.181301} {\bibfield  {journal}
  {\bibinfo  {journal} {Physical Review Letters}\ }\textbf {\bibinfo {volume}
  {104}},\ \bibinfo {pages} {181301} (\bibinfo {year} {2010})}\BibitemShut
  {NoStop}%
\bibitem [{\citenamefont {Villani}(2009)}]{Villani_OptimalTransport}%
  \BibitemOpen
  \bibfield  {author} {\bibinfo {author} {\bibfnamefont {C.}~\bibnamefont
  {Villani}},\ }\href@noop {} {\emph {\bibinfo {title} {Optimal Transport: Old
  and New}}}\ (\bibinfo  {publisher} {Springer},\ \bibinfo {year}
  {2009})\BibitemShut {NoStop}%
\bibitem [{\citenamefont {Villani}(2003)}]{Villani_Topics}%
  \BibitemOpen
  \bibfield  {author} {\bibinfo {author} {\bibfnamefont {C.}~\bibnamefont
  {Villani}},\ }\href@noop {} {\emph {\bibinfo {title} {Topics in Optimal
  Transportation}}},\ \bibinfo {series} {Graduate Studies in Mathematics},
  Vol.~\bibinfo {volume} {58}\ (\bibinfo  {publisher} {American Mathematical
  Society},\ \bibinfo {year} {2003})\BibitemShut {NoStop}%
\bibitem [{\citenamefont {Lott}\ and\ \citenamefont
  {Villani}(2009)}]{LottVillani}%
  \BibitemOpen
  \bibfield  {author} {\bibinfo {author} {\bibfnamefont {J.}~\bibnamefont
  {Lott}}\ and\ \bibinfo {author} {\bibfnamefont {C.}~\bibnamefont {Villani}},\
  }\href@noop {} {\bibfield  {journal} {\bibinfo  {journal} {Annals of
  Mathematics}\ }\textbf {\bibinfo {volume} {169}},\ \bibinfo {pages} {903}
  (\bibinfo {year} {2009})}\BibitemShut {NoStop}%
\bibitem [{\citenamefont {Sturm}(2006{\natexlab{a}})}]{Sturm_GeomI}%
  \BibitemOpen
  \bibfield  {author} {\bibinfo {author} {\bibfnamefont {K.-T.}\ \bibnamefont
  {Sturm}},\ }\href {https://doi.org/10.1007/s11511-006-0002-8} {\bibfield
  {journal} {\bibinfo  {journal} {Acta Mathematica}\ }\textbf {\bibinfo
  {volume} {196}},\ \bibinfo {pages} {65} (\bibinfo {year}
  {2006}{\natexlab{a}})}\BibitemShut {NoStop}%
\bibitem [{\citenamefont {Sturm}(2006{\natexlab{b}})}]{Sturm_GeomII}%
  \BibitemOpen
  \bibfield  {author} {\bibinfo {author} {\bibfnamefont {K.-T.}\ \bibnamefont
  {Sturm}},\ }\href {https://doi.org/10.1007/s11511-006-0003-7} {\bibfield
  {journal} {\bibinfo  {journal} {Acta Mathematica}\ }\textbf {\bibinfo
  {volume} {196}},\ \bibinfo {pages} {133} (\bibinfo {year}
  {2006}{\natexlab{b}})}\BibitemShut {NoStop}%
\bibitem [{\citenamefont {Ohta}(2007)}]{Ohta_MeasureContraction}%
  \BibitemOpen
  \bibfield  {author} {\bibinfo {author} {\bibfnamefont {S.-I.}\ \bibnamefont
  {Ohta}},\ }\href {https://doi.org/10.4171/CMH/110} {\bibfield  {journal}
  {\bibinfo  {journal} {Commentarii Mathematici Helvetici}\ }\textbf {\bibinfo
  {volume} {82}},\ \bibinfo {pages} {805} (\bibinfo {year} {2007})}\BibitemShut
  {NoStop}%
\bibitem [{\citenamefont {Ollivier}(2009)}]{Ollivier_RCMCMS}%
  \BibitemOpen
  \bibfield  {author} {\bibinfo {author} {\bibfnamefont {Y.}~\bibnamefont
  {Ollivier}},\ }\href {https://doi.org/10.1016/j.jfa.2008.11.001} {\bibfield
  {journal} {\bibinfo  {journal} {Journal of Functional Analysis}\ }\textbf
  {\bibinfo {volume} {256}},\ \bibinfo {pages} {810} (\bibinfo {year}
  {2009})}\BibitemShut {NoStop}%
\bibitem [{\citenamefont {Ollivier}(2007)}]{Ollivier_RCMS}%
  \BibitemOpen
  \bibfield  {author} {\bibinfo {author} {\bibfnamefont {Y.}~\bibnamefont
  {Ollivier}},\ }\href {https://doi.org/10.1016/j.crma.2007.10.041} {\bibfield
  {journal} {\bibinfo  {journal} {Comptes Rendus Mathematique}\ }\textbf
  {\bibinfo {volume} {345}},\ \bibinfo {pages} {643} (\bibinfo {year}
  {2007})}\BibitemShut {NoStop}%
\bibitem [{\citenamefont {Sturm}\ and\ \citenamefont {von
  Renesse}(2005)}]{SturmVonRenesse_TransportInequalities}%
  \BibitemOpen
  \bibfield  {author} {\bibinfo {author} {\bibfnamefont {K.-T.}\ \bibnamefont
  {Sturm}}\ and\ \bibinfo {author} {\bibfnamefont {M.-K.}\ \bibnamefont {von
  Renesse}},\ }\href {https://doi.org/10.1002/cpa.20060} {\bibfield  {journal}
  {\bibinfo  {journal} {Communications on Pure and Applied Mathematics}\
  }\textbf {\bibinfo {volume} {58}},\ \bibinfo {pages} {923} (\bibinfo {year}
  {2005})}\BibitemShut {NoStop}%
\bibitem [{\citenamefont {Ollivier}(2010)}]{Ollivier_Survey}%
  \BibitemOpen
  \bibfield  {author} {\bibinfo {author} {\bibfnamefont {Y.}~\bibnamefont
  {Ollivier}},\ }in\ \href {https://doi.org/10.2969/aspm/05710343} {\emph
  {\bibinfo {booktitle} {Probabilistic Approach to Geometry}}},\ \bibinfo
  {series} {Advanced Studies in Pure Mathematics}, Vol.~\bibinfo {volume}
  {57},\ \bibinfo {editor} {edited by\ \bibinfo {editor} {\bibfnamefont
  {M.}~\bibnamefont {Kotani}}, \bibinfo {editor} {\bibfnamefont
  {M.}~\bibnamefont {Hino}},\ and\ \bibinfo {editor} {\bibfnamefont
  {T.}~\bibnamefont {Kumagai}}}\ (\bibinfo  {publisher} {Mathematical Society
  of Japan},\ \bibinfo {year} {2010})\ pp.\ \bibinfo {pages}
  {343--381}\BibitemShut {NoStop}%
\bibitem [{\citenamefont {Farooq}\ \emph {et~al.}(2017)\citenamefont {Farooq},
  \citenamefont {Chen}, \citenamefont {Georgiou}, \citenamefont {Tannenbaum},\
  and\ \citenamefont {Lenglet}}]{FarooqEtAl_Brain}%
  \BibitemOpen
  \bibfield  {author} {\bibinfo {author} {\bibfnamefont {H.}~\bibnamefont
  {Farooq}}, \bibinfo {author} {\bibfnamefont {Y.}~\bibnamefont {Chen}},
  \bibinfo {author} {\bibfnamefont {T.~T.}\ \bibnamefont {Georgiou}}, \bibinfo
  {author} {\bibfnamefont {A.}~\bibnamefont {Tannenbaum}},\ and\ \bibinfo
  {author} {\bibfnamefont {C.}~\bibnamefont {Lenglet}},\ }\href
  {https://doi.org/10.1101/162875} {\bibinfo {title} {Network curvature as a
  hallmark of brain structural connectivity}} (\bibinfo {year} {2017}),\
  \bibinfo {note} {arXiv e-print}\BibitemShut {NoStop}%
\bibitem [{\citenamefont {Jost}\ and\ \citenamefont
  {Liu}(2014)}]{JostLiu_RicciCurv}%
  \BibitemOpen
  \bibfield  {author} {\bibinfo {author} {\bibfnamefont {J.}~\bibnamefont
  {Jost}}\ and\ \bibinfo {author} {\bibfnamefont {S.}~\bibnamefont {Liu}},\
  }\href {https://doi.org/10.1007/s00454-013-9558-1} {\bibfield  {journal}
  {\bibinfo  {journal} {Discrete {\&} Computational Geometry}\ }\textbf
  {\bibinfo {volume} {51}},\ \bibinfo {pages} {300} (\bibinfo {year}
  {2014})}\BibitemShut {NoStop}%
\bibitem [{\citenamefont {Ni}\ \emph {et~al.}(2015)\citenamefont {Ni},
  \citenamefont {Lin}, \citenamefont {Gao}, \citenamefont {Gu},\ and\
  \citenamefont {Saucan}}]{Ni_RicIntTop}%
  \BibitemOpen
  \bibfield  {author} {\bibinfo {author} {\bibfnamefont {C.-C.}\ \bibnamefont
  {Ni}}, \bibinfo {author} {\bibfnamefont {Y.-Y.}\ \bibnamefont {Lin}},
  \bibinfo {author} {\bibfnamefont {J.}~\bibnamefont {Gao}}, \bibinfo {author}
  {\bibfnamefont {X.~D.}\ \bibnamefont {Gu}},\ and\ \bibinfo {author}
  {\bibfnamefont {E.}~\bibnamefont {Saucan}},\ }in\ \href@noop {} {\emph
  {\bibinfo {booktitle} {2015 IEEE Conference on Computer Communications
  (INFOCOM)}}}\ (\bibinfo  {publisher} {IEEE},\ \bibinfo {year} {2015})\ pp.\
  \bibinfo {pages} {2758--2766}\BibitemShut {NoStop}%
\bibitem [{\citenamefont {Sandhu}\ \emph {et~al.}(2015)\citenamefont {Sandhu},
  \citenamefont {Georgiou}, \citenamefont {Reznik}, \citenamefont {Zhu},
  \citenamefont {Kolesov}, \citenamefont {Senbabaoglu},\ and\ \citenamefont
  {Tannenbaum}}]{Sandhu_Cancer}%
  \BibitemOpen
  \bibfield  {author} {\bibinfo {author} {\bibfnamefont {R.}~\bibnamefont
  {Sandhu}}, \bibinfo {author} {\bibfnamefont {T.}~\bibnamefont {Georgiou}},
  \bibinfo {author} {\bibfnamefont {E.}~\bibnamefont {Reznik}}, \bibinfo
  {author} {\bibfnamefont {L.}~\bibnamefont {Zhu}}, \bibinfo {author}
  {\bibfnamefont {I.}~\bibnamefont {Kolesov}}, \bibinfo {author} {\bibfnamefont
  {Y.}~\bibnamefont {Senbabaoglu}},\ and\ \bibinfo {author} {\bibfnamefont
  {A.}~\bibnamefont {Tannenbaum}},\ }\bibfield  {journal} {\bibinfo  {journal}
  {Scientific Reports}\ }\textbf {\bibinfo {volume} {5}},\ \href
  {https://doi.org/10.1038/srep12323} {10.1038/srep12323} (\bibinfo {year}
  {2015})\BibitemShut {NoStop}%
\bibitem [{\citenamefont {Sandhu}\ \emph {et~al.}(2016)\citenamefont {Sandhu},
  \citenamefont {Georgiou},\ and\ \citenamefont {Tannenbaum}}]{Sandhu_Market}%
  \BibitemOpen
  \bibfield  {author} {\bibinfo {author} {\bibfnamefont {R.~S.}\ \bibnamefont
  {Sandhu}}, \bibinfo {author} {\bibfnamefont {T.~T.}\ \bibnamefont
  {Georgiou}},\ and\ \bibinfo {author} {\bibfnamefont {A.~R.}\ \bibnamefont
  {Tannenbaum}},\ }\bibfield  {journal} {\bibinfo  {journal} {Science
  Advances}\ }\textbf {\bibinfo {volume} {2}},\ \href
  {https://doi.org/10.1126/sciadv.1501495} {10.1126/sciadv.1501495} (\bibinfo
  {year} {2016})\BibitemShut {NoStop}%
\bibitem [{\citenamefont {Sia}\ \emph {et~al.}(2019)\citenamefont {Sia},
  \citenamefont {Jonckheere},\ and\ \citenamefont
  {Bogdan}}]{SiaEtAl_CommunityDetection}%
  \BibitemOpen
  \bibfield  {author} {\bibinfo {author} {\bibfnamefont {J.}~\bibnamefont
  {Sia}}, \bibinfo {author} {\bibfnamefont {E.}~\bibnamefont {Jonckheere}},\
  and\ \bibinfo {author} {\bibfnamefont {P.}~\bibnamefont {Bogdan}},\
  }\bibfield  {journal} {\bibinfo  {journal} {Scientific Reports}\ }\textbf
  {\bibinfo {volume} {9}},\ \href {https://doi.org/10.1038/s41598-019-46079-x}
  {10.1038/s41598-019-46079-x} (\bibinfo {year} {2019}),\ \bibinfo {note}
  {9800}\BibitemShut {NoStop}%
\bibitem [{\citenamefont {Pouryahya}\ \emph {et~al.}(2018)\citenamefont
  {Pouryahya}, \citenamefont {Oh}, \citenamefont {Mathews}, \citenamefont
  {Deasy},\ and\ \citenamefont {Tannenbaum}}]{TannenbaumEtAl_Cancer}%
  \BibitemOpen
  \bibfield  {author} {\bibinfo {author} {\bibfnamefont {M.}~\bibnamefont
  {Pouryahya}}, \bibinfo {author} {\bibfnamefont {J.~H.}\ \bibnamefont {Oh}},
  \bibinfo {author} {\bibfnamefont {J.~C.}\ \bibnamefont {Mathews}}, \bibinfo
  {author} {\bibfnamefont {J.~O.}\ \bibnamefont {Deasy}},\ and\ \bibinfo
  {author} {\bibfnamefont {A.~R.}\ \bibnamefont {Tannenbaum}},\ }\bibfield
  {journal} {\bibinfo  {journal} {Scientific Reports}\ }\textbf {\bibinfo
  {volume} {8}},\ \href {https://doi.org/10.1038/s41598-018-24679-3}
  {10.1038/s41598-018-24679-3} (\bibinfo {year} {2018})\BibitemShut {NoStop}%
\bibitem [{\citenamefont {Wang}\ \emph
  {et~al.}(2016{\natexlab{a}})\citenamefont {Wang}, \citenamefont
  {Jonckheere},\ and\ \citenamefont {Brun}}]{WangEtAl_DiffGeom}%
  \BibitemOpen
  \bibfield  {author} {\bibinfo {author} {\bibfnamefont {C.}~\bibnamefont
  {Wang}}, \bibinfo {author} {\bibfnamefont {E.}~\bibnamefont {Jonckheere}},\
  and\ \bibinfo {author} {\bibfnamefont {T.}~\bibnamefont {Brun}},\ }\href
  {https://doi.org/10.1007/s11128-016-1394-9} {\bibfield  {journal} {\bibinfo
  {journal} {Quantum Information Processing}\ }\textbf {\bibinfo {volume}
  {15}},\ \bibinfo {pages} {3951} (\bibinfo {year}
  {2016}{\natexlab{a}})}\BibitemShut {NoStop}%
\bibitem [{\citenamefont {Wang}\ \emph
  {et~al.}(2016{\natexlab{b}})\citenamefont {Wang}, \citenamefont
  {Jonckheere},\ and\ \citenamefont {Banirazi}}]{WangEtAl_Interference}%
  \BibitemOpen
  \bibfield  {author} {\bibinfo {author} {\bibfnamefont {C.}~\bibnamefont
  {Wang}}, \bibinfo {author} {\bibfnamefont {E.}~\bibnamefont {Jonckheere}},\
  and\ \bibinfo {author} {\bibfnamefont {R.}~\bibnamefont {Banirazi}},\ }in\
  \href {https://doi.org/10.1109/ACC.2016.7526617} {\emph {\bibinfo {booktitle}
  {Proceedings for the American Control Conference 2016}}}\ (\bibinfo
  {publisher} {IEEE},\ \bibinfo {year} {2016})\ pp.\ \bibinfo {pages}
  {6036--6041},\ \bibinfo {note} {16194020}\BibitemShut {NoStop}%
\bibitem [{\citenamefont {Wang}\ \emph
  {et~al.}(2014{\natexlab{a}})\citenamefont {Wang}, \citenamefont
  {Jonckheere},\ and\ \citenamefont {Brun}}]{WangEtAl_QUBO}%
  \BibitemOpen
  \bibfield  {author} {\bibinfo {author} {\bibfnamefont {C.}~\bibnamefont
  {Wang}}, \bibinfo {author} {\bibfnamefont {E.}~\bibnamefont {Jonckheere}},\
  and\ \bibinfo {author} {\bibfnamefont {T.}~\bibnamefont {Brun}},\ }in\ \href
  {https://doi.org/10.1109/ISCCSP.2014.6877946} {\emph {\bibinfo {booktitle}
  {2014 6th International Symposium on Communications, Control and Signal
  Processing (ISCCSP)}}}\ (\bibinfo  {publisher} {IEEE},\ \bibinfo {year}
  {2014})\ pp.\ \bibinfo {pages} {598--601}\BibitemShut {NoStop}%
\bibitem [{\citenamefont {Wang}\ \emph
  {et~al.}(2014{\natexlab{b}})\citenamefont {Wang}, \citenamefont
  {Jonckheere},\ and\ \citenamefont {Banirazi}}]{WangEtAl_WirelessNetwork}%
  \BibitemOpen
  \bibfield  {author} {\bibinfo {author} {\bibfnamefont {C.}~\bibnamefont
  {Wang}}, \bibinfo {author} {\bibfnamefont {E.}~\bibnamefont {Jonckheere}},\
  and\ \bibinfo {author} {\bibfnamefont {R.}~\bibnamefont {Banirazi}},\ }in\
  \href {https://doi.org/10.1109/ACC.2014.6858912} {\emph {\bibinfo {booktitle}
  {Proceedings for the American Control Conference 2014}}}\ (\bibinfo
  {publisher} {IEEE},\ \bibinfo {year} {2014})\ pp.\ \bibinfo {pages}
  {3536--3541},\ \bibinfo {note} {14468472}\BibitemShut {NoStop}%
\bibitem [{\citenamefont {Whidden}\ and\ \citenamefont
  {IV}(2017)}]{WhiddenMatsen_SubtreeGraph}%
  \BibitemOpen
  \bibfield  {author} {\bibinfo {author} {\bibfnamefont {C.}~\bibnamefont
  {Whidden}}\ and\ \bibinfo {author} {\bibfnamefont {F.~A.~M.}\ \bibnamefont
  {IV}},\ }\href {https://doi.org/10.1016/j.tcs.2017.02.006} {\bibfield
  {journal} {\bibinfo  {journal} {Theoretical Computer Science}\ }\textbf
  {\bibinfo {volume} {699}},\ \bibinfo {pages} {1} (\bibinfo {year}
  {2017})}\BibitemShut {NoStop}%
\bibitem [{\citenamefont {Gromov}(2007)}]{Gromov_MetricStructures}%
  \BibitemOpen
  \bibfield  {author} {\bibinfo {author} {\bibfnamefont {M.}~\bibnamefont
  {Gromov}},\ }\href@noop {} {\emph {\bibinfo {title} {Metric Structures for
  Riemannian and Non-Riemannian Spaces}}},\ \bibinfo {edition} {reprint of the
  2001}\ ed.,\ Modern Birkh\"{a}user Classics\ (\bibinfo  {publisher}
  {Birkh\"{a}user},\ \bibinfo {year} {2007})\ \bibinfo {note} {with Appendices
  by M. Katz, P. Pansu and S. Semmes}\BibitemShut {NoStop}%
\bibitem [{\citenamefont {Buragi}\ \emph {et~al.}(2001)\citenamefont {Buragi},
  \citenamefont {Burago},\ and\ \citenamefont
  {Ivanov}}]{BuragoBuragoIvanov_MetricGeometry}%
  \BibitemOpen
  \bibfield  {author} {\bibinfo {author} {\bibfnamefont {D.}~\bibnamefont
  {Buragi}}, \bibinfo {author} {\bibfnamefont {Y.}~\bibnamefont {Burago}},\
  and\ \bibinfo {author} {\bibfnamefont {S.}~\bibnamefont {Ivanov}},\
  }\href@noop {} {\emph {\bibinfo {title} {A Course in Metric Geometry}}},\
  \bibinfo {series} {Graduate Studies in Mathematics}, Vol.~\bibinfo {volume}
  {33}\ (\bibinfo  {publisher} {American Mathematical Society},\ \bibinfo
  {year} {2001})\BibitemShut {NoStop}%
\bibitem [{\citenamefont {Trugenberger}(2017)}]{Trugenberger_CombQG}%
  \BibitemOpen
  \bibfield  {author} {\bibinfo {author} {\bibfnamefont {C.~A.}\ \bibnamefont
  {Trugenberger}},\ }\bibfield  {journal} {\bibinfo  {journal} {Journal of High
  Energy Physics}\ }\href {https://doi.org/10.1007/JHEP09(2017)045}
  {10.1007/JHEP09(2017)045} (\bibinfo {year} {2017})\BibitemShut {NoStop}%
\bibitem [{\citenamefont {Kelly}\ \emph {et~al.}(2019)\citenamefont {Kelly},
  \citenamefont {Trugenberger},\ and\ \citenamefont {Biancalana}}]{KellyEtAl}%
  \BibitemOpen
  \bibfield  {author} {\bibinfo {author} {\bibfnamefont {C.}~\bibnamefont
  {Kelly}}, \bibinfo {author} {\bibfnamefont {C.}~\bibnamefont
  {Trugenberger}},\ and\ \bibinfo {author} {\bibfnamefont {F.}~\bibnamefont
  {Biancalana}},\ }\bibfield  {journal} {\bibinfo  {journal} {Classical and
  Quantum Gravity}\ }\textbf {\bibinfo {volume} {36}},\ \href
  {https://doi.org/10.1088/1361-6382/ab1c7d} {10.1088/1361-6382/ab1c7d}
  (\bibinfo {year} {2019}),\ \bibinfo {note} {125012}\BibitemShut {NoStop}%
\bibitem [{\citenamefont {Kelly}\ \emph {et~al.}(2021)\citenamefont {Kelly},
  \citenamefont {Trugenberger},\ and\ \citenamefont
  {Biancalana}}]{KellyEtAl_Circle}%
  \BibitemOpen
  \bibfield  {author} {\bibinfo {author} {\bibfnamefont {C.}~\bibnamefont
  {Kelly}}, \bibinfo {author} {\bibfnamefont {C.}~\bibnamefont
  {Trugenberger}},\ and\ \bibinfo {author} {\bibfnamefont {F.}~\bibnamefont
  {Biancalana}},\ }\bibfield  {journal} {\bibinfo  {journal} {Classical and
  Quantum Gravity}\ }\href {https://doi.org/10.1088/1361-6382/abe2d8}
  {10.1088/1361-6382/abe2d8} (\bibinfo {year} {2021})\BibitemShut {NoStop}%
\bibitem [{\citenamefont {Klitgaard}\ and\ \citenamefont
  {Loll}(2020)}]{KlitgaardLoll_HowRound}%
  \BibitemOpen
  \bibfield  {author} {\bibinfo {author} {\bibfnamefont {N.}~\bibnamefont
  {Klitgaard}}\ and\ \bibinfo {author} {\bibfnamefont {R.}~\bibnamefont
  {Loll}},\ }\bibfield  {journal} {\bibinfo  {journal} {The European Physical
  Journal C}\ }\textbf {\bibinfo {volume} {80}},\ \href
  {https://doi.org/10.1140/epjc/s10052-020-08569-5}
  {10.1140/epjc/s10052-020-08569-5} (\bibinfo {year} {2020}),\ \bibinfo {note}
  {990}\BibitemShut {NoStop}%
\bibitem [{\citenamefont {Klitgaard}\ and\ \citenamefont
  {Loll}(2018{\natexlab{a}})}]{KlitgaardLoll_ImplementingQuantumRicciCurvature}%
  \BibitemOpen
  \bibfield  {author} {\bibinfo {author} {\bibfnamefont {N.}~\bibnamefont
  {Klitgaard}}\ and\ \bibinfo {author} {\bibfnamefont {R.}~\bibnamefont
  {Loll}},\ }\href {https://doi.org/10.1103/PhysRevD.97.106017} {\bibfield
  {journal} {\bibinfo  {journal} {Physical Review D}\ }\textbf {\bibinfo
  {volume} {97}},\ \bibinfo {pages} {106017} (\bibinfo {year}
  {2018}{\natexlab{a}})}\BibitemShut {NoStop}%
\bibitem [{\citenamefont {Klitgaard}\ and\ \citenamefont
  {Loll}(2018{\natexlab{b}})}]{KlitgaardLoll_QuantumRicciCurvature}%
  \BibitemOpen
  \bibfield  {author} {\bibinfo {author} {\bibfnamefont {N.}~\bibnamefont
  {Klitgaard}}\ and\ \bibinfo {author} {\bibfnamefont {R.}~\bibnamefont
  {Loll}},\ }\href {https://doi.org/10.1103/PhysRevD.97.046008} {\bibfield
  {journal} {\bibinfo  {journal} {Physical Review D}\ }\textbf {\bibinfo
  {volume} {97}},\ \bibinfo {pages} {046008} (\bibinfo {year}
  {2018}{\natexlab{b}})}\BibitemShut {NoStop}%
\bibitem [{\citenamefont {Brunekreef}\ and\ \citenamefont
  {Loll}(2021)}]{BrunekreefLoll-CurvatureProfile}%
  \BibitemOpen
  \bibfield  {author} {\bibinfo {author} {\bibfnamefont {J.}~\bibnamefont
  {Brunekreef}}\ and\ \bibinfo {author} {\bibfnamefont {R.}~\bibnamefont
  {Loll}},\ }\href {https://doi.org/10.1103/PhysRevD.103.026019} {\bibfield
  {journal} {\bibinfo  {journal} {Physical Review D}\ }\textbf {\bibinfo
  {volume} {103}},\ \bibinfo {pages} {026019} (\bibinfo {year}
  {2021})}\BibitemShut {NoStop}%
\bibitem [{\citenamefont {Gorard}(2020)}]{Gorard}%
  \BibitemOpen
  \bibfield  {author} {\bibinfo {author} {\bibfnamefont {J.}~\bibnamefont
  {Gorard}},\ }\href@noop {} {\bibinfo {title} {Some relativistic and
  gravitational properties of the wolfram model}} (\bibinfo {year} {2020}),\
  \bibinfo {note} {arXiv e-print}\BibitemShut {NoStop}%
\bibitem [{\citenamefont {Gorsky}\ and\ \citenamefont
  {Valba}(2021)}]{GorskyValba_Thermofield}%
  \BibitemOpen
  \bibfield  {author} {\bibinfo {author} {\bibfnamefont {A.}~\bibnamefont
  {Gorsky}}\ and\ \bibinfo {author} {\bibfnamefont {O.}~\bibnamefont {Valba}},\
  }\href@noop {} {\bibinfo {title} {Interacting thermofield doubles and
  critical behavior in random regular graphs}} (\bibinfo {year} {2021}),\
  \bibinfo {note} {arXiv e-print}\BibitemShut {NoStop}%
\bibitem [{\citenamefont {Akara-pipattana}\ \emph {et~al.}(2021)\citenamefont
  {Akara-pipattana}, \citenamefont {Chotibut},\ and\ \citenamefont
  {Evnin}}]{Akara-pipattanaChotibutEvnin-Emergence}%
  \BibitemOpen
  \bibfield  {author} {\bibinfo {author} {\bibfnamefont {P.}~\bibnamefont
  {Akara-pipattana}}, \bibinfo {author} {\bibfnamefont {T.}~\bibnamefont
  {Chotibut}},\ and\ \bibinfo {author} {\bibfnamefont {O.}~\bibnamefont
  {Evnin}},\ }\href@noop {} {\bibinfo {title} {The birth of geometry in
  exponential random graphs}} (\bibinfo {year} {2021}),\ \bibinfo {note} {arXiv
  e-print}\BibitemShut {NoStop}%
\bibitem [{\citenamefont {van~der Hoorn}\ \emph {et~al.}(2021)\citenamefont
  {van~der Hoorn}, \citenamefont {Cunningham}, \citenamefont {Lippner},
  \citenamefont {Trugenberger},\ and\ \citenamefont
  {Krioukov}}]{Hoorn_Convergence}%
  \BibitemOpen
  \bibfield  {author} {\bibinfo {author} {\bibfnamefont {P.}~\bibnamefont
  {van~der Hoorn}}, \bibinfo {author} {\bibfnamefont {W.~J.}\ \bibnamefont
  {Cunningham}}, \bibinfo {author} {\bibfnamefont {G.}~\bibnamefont {Lippner}},
  \bibinfo {author} {\bibfnamefont {C.}~\bibnamefont {Trugenberger}},\ and\
  \bibinfo {author} {\bibfnamefont {D.}~\bibnamefont {Krioukov}},\ }\href
  {https://doi.org/10.1103/PhysRevResearch.3.013211} {\bibfield  {journal}
  {\bibinfo  {journal} {Physical Review Research}\ }\textbf {\bibinfo {volume}
  {3}},\ \bibinfo {pages} {013211} (\bibinfo {year} {2021})}\BibitemShut
  {NoStop}%
\bibitem [{\citenamefont {van~der Hoorn}\ \emph {et~al.}(2020)\citenamefont
  {van~der Hoorn}, \citenamefont {Lippner}, \citenamefont {Trugenberger},\ and\
  \citenamefont {Krioukov}}]{HoornEtAl_OllCurvConv}%
  \BibitemOpen
  \bibfield  {author} {\bibinfo {author} {\bibfnamefont {P.}~\bibnamefont
  {van~der Hoorn}}, \bibinfo {author} {\bibfnamefont {G.}~\bibnamefont
  {Lippner}}, \bibinfo {author} {\bibfnamefont {C.}~\bibnamefont
  {Trugenberger}},\ and\ \bibinfo {author} {\bibfnamefont {D.}~\bibnamefont
  {Krioukov}},\ }\href@noop {} {\bibinfo {title} {Ollivier curvature of random
  geometric graphs converges to ricci curvature of their riemannian manifolds}}
  (\bibinfo {year} {2020}),\ \bibinfo {note} {arXiv e-print}\BibitemShut
  {NoStop}%
\bibitem [{\citenamefont {Talagrand}(2014)}]{Talagrand_UpperBounds}%
  \BibitemOpen
  \bibfield  {author} {\bibinfo {author} {\bibfnamefont {M.}~\bibnamefont
  {Talagrand}},\ }\href {https://doi.org/10.1007/978-3-642-54075-2} {\emph
  {\bibinfo {title} {Upper and Lower Bounds for Stochastic Processes}}}\
  (\bibinfo  {publisher} {Springer},\ \bibinfo {year} {2014})\BibitemShut
  {NoStop}%
\bibitem [{\citenamefont {Kelly}(2019)}]{Kelly_Exact}%
  \BibitemOpen
  \bibfield  {author} {\bibinfo {author} {\bibfnamefont {C.}~\bibnamefont
  {Kelly}},\ }\href@noop {} {\bibinfo {title} {Exact expressions and reduced
  linear programmes for the ollivier curvature in graphs}} (\bibinfo {year}
  {2019}),\ \bibinfo {note} {arXiv e-print}\BibitemShut {NoStop}%
\bibitem [{\citenamefont {Hartmann}\ and\ \citenamefont
  {Schuhmacher}(2020)}]{HartmannSchumacher_Semidiscrete}%
  \BibitemOpen
  \bibfield  {author} {\bibinfo {author} {\bibfnamefont {V.}~\bibnamefont
  {Hartmann}}\ and\ \bibinfo {author} {\bibfnamefont {D.}~\bibnamefont
  {Schuhmacher}},\ }\href {https://doi.org/10.1007/s00186-020-00703-z}
  {\bibfield  {journal} {\bibinfo  {journal} {Mathematical Methods of
  Operations Research}\ }\textbf {\bibinfo {volume} {92}},\ \bibinfo {pages}
  {133–} (\bibinfo {year} {2020})}\BibitemShut {NoStop}%
\bibitem [{\citenamefont {Bogachev}(2007)}]{Bogachev_MeasureI}%
  \BibitemOpen
  \bibfield  {author} {\bibinfo {author} {\bibfnamefont {V.~I.}\ \bibnamefont
  {Bogachev}},\ }\href {https://doi.org/10.1007/978-3-540-34514-5} {\emph
  {\bibinfo {title} {Measure Theory: Volume 1}}},\ \bibinfo {edition} {ebook}\
  ed.\ (\bibinfo  {publisher} {Springer},\ \bibinfo {year} {2007})\BibitemShut
  {NoStop}%
\bibitem [{\citenamefont {{a}o Do~Carmo}(1993)}]{DoCarmo_Riem}%
  \BibitemOpen
  \bibfield  {author} {\bibinfo {author} {\bibfnamefont {M.~P.}\ \bibnamefont
  {{a}o Do~Carmo}},\ }\href@noop {} {\emph {\bibinfo {title} {Riemannian
  Geometry}}}\ (\bibinfo  {publisher} {Birkh\"{a}user},\ \bibinfo {year}
  {1993})\ \bibinfo {note} {translated by Francis Flaherty}\BibitemShut
  {NoStop}%
\bibitem [{\citenamefont {Kelly}(2022)}]{Kelly-Thesis}%
  \BibitemOpen
  \bibfield  {author} {\bibinfo {author} {\bibfnamefont {C.}~\bibnamefont
  {Kelly}},\ }\href@noop {} {\bibinfo {title} {Emergent geometry and discrete
  curvature in random graphs}} (\bibinfo {year} {2022}),\ \bibinfo {note} {phD
  Thesis}\BibitemShut {NoStop}%
\bibitem [{\citenamefont {Loisel}\ and\ \citenamefont
  {Romon}(2014)}]{LoiselRomon-OllCurvPolyhedral}%
  \BibitemOpen
  \bibfield  {author} {\bibinfo {author} {\bibfnamefont {B.}~\bibnamefont
  {Loisel}}\ and\ \bibinfo {author} {\bibfnamefont {P.}~\bibnamefont {Romon}},\
  }\href {https://doi.org/10.3390/axioms3010119} {\bibfield  {journal}
  {\bibinfo  {journal} {Axioms}\ }\textbf {\bibinfo {volume} {3}},\ \bibinfo
  {pages} {119} (\bibinfo {year} {2014})}\BibitemShut {NoStop}%
\bibitem [{\citenamefont {Schmiedl}(2017)}]{Schmiedl-Discrete}%
  \BibitemOpen
  \bibfield  {author} {\bibinfo {author} {\bibfnamefont {F.}~\bibnamefont
  {Schmiedl}},\ }\href {https://doi.org/10.1007/s00454-017-9889-4} {\bibfield
  {journal} {\bibinfo  {journal} {Discrete and Computational Geometry}\
  }\textbf {\bibinfo {volume} {57}},\ \bibinfo {pages} {854} (\bibinfo {year}
  {2017})}\BibitemShut {NoStop}%
\bibitem [{\citenamefont {McCann}(2018)}]{McCann}%
  \BibitemOpen
  \bibfield  {author} {\bibinfo {author} {\bibfnamefont {R.~J.}\ \bibnamefont
  {McCann}},\ }\href@noop {} {\bibinfo {title} {Displacement convexity of
  boltzmann's entropy characterizes the strong energy condition from general
  relativity}} (\bibinfo {year} {2018}),\ \bibinfo {note} {arXiv
  e-print}\BibitemShut {NoStop}%
\bibitem [{\citenamefont {Mondino}\ and\ \citenamefont
  {Suhr}(2019)}]{MondinoSuhr}%
  \BibitemOpen
  \bibfield  {author} {\bibinfo {author} {\bibfnamefont {A.}~\bibnamefont
  {Mondino}}\ and\ \bibinfo {author} {\bibfnamefont {S.}~\bibnamefont {Suhr}},\
  }\href@noop {} {\bibinfo {title} {An optimal transport formulation of the
  einstein equations of general relativity}} (\bibinfo {year} {2019}),\
  \bibinfo {note} {arXiv e-print}\BibitemShut {NoStop}%
\bibitem [{\citenamefont {Eckstein}\ and\ \citenamefont
  {Miller}(2017)}]{EcksteinMiller-Causality}%
  \BibitemOpen
  \bibfield  {author} {\bibinfo {author} {\bibfnamefont {M.}~\bibnamefont
  {Eckstein}}\ and\ \bibinfo {author} {\bibfnamefont {T.}~\bibnamefont
  {Miller}},\ }\href {https://doi.org/10.1007/s00023-017-0566-1} {\bibfield
  {journal} {\bibinfo  {journal} {Annales Henri Poincar\'{e}}\ }\textbf
  {\bibinfo {volume} {18}},\ \bibinfo {pages} {3049} (\bibinfo {year}
  {2017})}\BibitemShut {NoStop}%
\bibitem [{\citenamefont {Breitenberger}(1984)}]{Breitenberger_Gauss}%
  \BibitemOpen
  \bibfield  {author} {\bibinfo {author} {\bibfnamefont {E.}~\bibnamefont
  {Breitenberger}},\ }\href@noop {} {\bibfield  {journal} {\bibinfo  {journal}
  {Archive for History of Exact Sciences}\ }\textbf {\bibinfo {volume} {31}},\
  \bibinfo {pages} {273} (\bibinfo {year} {1984})}\BibitemShut {NoStop}%
\end{thebibliography}%
\end{document}